\documentclass{article}

\usepackage[inner=3cm,outer=3cm,top=3cm,bottom=3cm]{geometry} 
\usepackage[utf8]{inputenc}
\usepackage{amsmath} 
\usepackage{amsthm} 
\usepackage{amssymb} 
\usepackage{mathtools} 
\usepackage{tikz} 
\usepackage{graphicx} 
\usepackage{tikz-cd} 
    \usetikzlibrary{cd}
\usepackage{mathrsfs} 
\usepackage{array} 
\usepackage{setspace} 
\usepackage{xcolor} 
\usepackage{harvard} 
\usepackage{enumitem} 
\usepackage{tcolorbox} 
\usepackage{float}
\usepackage{pdfpages} 
\usepackage{subcaption} 
\usepackage{hyperref} 
\hypersetup{colorlinks=true,linkcolor=blue,citecolor=blue}
\usepackage{dsfont} 
\usepackage{tikz}

\setlength{\parindent}{3em}
\onehalfspacing
\interfootnotelinepenalty=10000

\hypersetup{
    colorlinks=true,
    linkcolor={blue!50!black},
    citecolor={blue!50!black},
    urlcolor={blue!50!black} 
}

\theoremstyle{definition}
\newtheorem{definition}{Definition}
\theoremstyle{plain}
\newtheorem{lemma}{Lemma}
\theoremstyle{plain}
\newtheorem{theorem}{Theorem}
\theoremstyle{plain}

\theoremstyle{definition}
\newtheorem{example}{Example}
\theoremstyle{definition}

\theoremstyle{plain}
\newtheorem{corollary}{Corollary}
\theoremstyle{plain}

\theoremstyle{plain}


\DeclareMathOperator*{\ptop}{top}
\DeclareMathOperator*{\pbottom}{bottom}
\DeclareMathOperator*{\id}{id}

\definecolor{ForestGreen}{rgb}{.13,.54,.13}

\title{Royal Processions: Incentives, Efficiency and Fairness in Two-sided
Matching}

\author{Sophie Bade\thanks{Royal Holloway, University of London and Max Planck Institut for Research on Collective Goods, Email: sophie.bade@rhul.ac.uk} \and Joseph Root\thanks{Department of Economics, University of Chicago, Email: jroot@uchicago.edu}}

\date{January 2023}

\begin{document}

\maketitle

\begin{abstract}
We study the set of incentive compatible and efficient two-sided matching mechanisms. We classify all such mechanisms under an additional assumption -- ``gender-neutrality" -- which guarantees that the two sides be treated symmetrically. All group strategy-proof, efficient, and gender-neutral mechanisms are recursive and the outcome is decided in a sequence of rounds. In each round two agents are selected, one from each side. These agents are either ``matched-by-default" or ``unmatched-by-default." In the former case either of the selected agents can unilaterally force the other to match with them while in the latter case, they may only match together if both agree. In either case, if this pair of agents is not matched together, each gets their top choices among the set of remaining agents. As an important step in the characterization, we first show that in one-sided matching all group strategy-proof and efficient mechanisms are sequential dictatorships. An immediate corollary is that there are no individually rational, group strategy-proof and efficient one-sided matching mechanisms.  
\end{abstract}

\section{Introduction}

\citeasnoun{gale1962college} 
initiated the study of stability in two-sided matching problems using the example of a marriage market where each of $n$ men is matched to one of $n$ women. A matching is considered stable if no unmarried pair prefers each other to their partners from the matching. 
\citeasnoun{gale1962college} proved that stable matchings always exist. However, \citeasnoun{roth1982economics} showed that stability and incentive compatibility clash: any stable mechanism is manipulable.

Faced with this dilemma, the vast literature on two-sided matching typically sacrifices incentive compatibility in favor of stability.\footnote{See for instance \citeasnoun{roth1992two}, \citeasnoun{roth2008deferred} and \citeasnoun{abdulkadiroglu2013matching}}  In this paper, we do the opposite: we ignore stability and instead study incentive compatible mechanisms. Stability is an especially important criterion for settings where agents can easily rearrange matchings among themselves \cite{abdulkadiroglu2013matching}. However there are many settings where coalitional deviations may be hard to organize. For example, people who sign up for a dating app do so since they find it difficult to find dates in real life. 
In other settings the central planner may have the authority to enforce the outcome of a mechanism. Take, for example, the British ``homes for Ukraine'' scheme that matches refugees with host families. The UK home office could condition residency permits on the matches found by an algorithm. So even if some families and refugees would like to deviate from the matching, they might find it impossible to do so. Finding mechanisms with strong incentive properties so that they are simple to use would, in both these contexts, seem more important.

By abandoning stability, we open the door to the study of incentive compatible and efficient mechanisms. In particular, we look for mechanisms with robust incentive properties -- those that are group strategy-proof. A mechanism is group strategy-proof if there is no profile of preferences, group of agents and deviation for that group, such that all group members are weakly better-off after the deviation (and some strictly). A mechanism is efficient if it never chooses a matching where an alternative could make all agents better-off.
Group strategy-proofness and efficiency are easy to attain in two-sided matching problems. Simply declare one side of the market to be ``objects" and use one of the well-known group strategy-proof and efficient mechanisms for allocation problems.\footnote{The set of all group strategy-proof and efficient mechanism for the house allocation problem was characterized by \citeasnoun{pycia2017incentive} and \citeasnoun{bade2020random}. The characterization extends Gale's top trading cycles \cite{shapley1974cores} in three ways: agents may ``own" multiple houses \cite{papai2000strategyproof}, the agents may ``broker" houses \cite{pycia2017incentive} and when there are exactly three agents and houses there may be ``braids" \cite{bade2020random}. }
Indeed sometimes ``objectification" of one side of the market may be appropriate. \citeasnoun{abdulkadirouglu2003school}, for example, make a convincing case to treat schools as objects in school choice problems. 
However in settings where the two sides are symmetric a priori, one may be hesitant to assign all agency to one side of the market. In fact, we may want to require that the two sides of the market are treated equally. There is, for example, no reason to treat the men and women on a dating app any differently. Likewise one may wish to give an equal amount of say to the refugees and the host families in the ``Homes for Ukraine" program.

Our axiom of ``gender-neutrality" captures the idea that the two sides should be treated equally. Roughly, we require that if the men and women's preferences are swapped, the outcome of the mechanism is swapped as well.
We not only impose such symmetry between the sides in the overall mechanism but on every ``continuation submechanism" as well. The notion of ``continuation submechanism" captures the idea that a mechanism can be recursive. If a mechanism makes initial matches using only the preferences of the initially matched agents, ignoring all other agents preferences, then the remaining agents face such a continuation submechanism and we ask that they be treated symmetrically as well.

We characterize the set of all group strategy-proof, efficient and gender-neutral mechanisms as the set of ``royalty mechanisms." These work as follows: each round one agent is chosen from either side of the market. These ``royals"  are either ``matched-by-default" or ``unmatched-by-default." In the former case the royals are matched together if at least one of them top-ranks the other; in the latter case, the royals are only matched if both top-rank one another. If the royals are not matched with each other, they get their top choices among the set of remaining agents. As long as at least 6 agents remain unmatched, a procession of choices by such royal couples continues. Once only four agents remain, there are only two possible matches for these agents, however a plethora of (sub)mechanisms for these remaining agents satisfy our desiderata. 

Our proof begins with the observation that any gender-neutral two-sided mechanism nests a one-sided matching mechanism, where any agent is free to match with any other agent. To get an intuition for this result, index the men and women such that $m_i$ is the man who mirrors women $w_i$ under the desired symmetry. 
If we restrict attention to preference profiles where $m_i$ and $w_i$ announce symmetric preferences, by gender-neutrality they must get symmetric outcomes. So for each $i$, a symmetric mechanism either matches $w_i$ and $m_i$ with each other or with a different symmetric pair $m_j$ and $w_j$. We can therefore treat the pairs $(m_i,w_i)$ as the agents in a one-sided matching problem. Any pair either matches with themselves, which is analogous to the pair remaining single, or they swap partners with another symmetric pair, which can be viewed as a match between these two pairs. The embedded mechanism inherits the group strategy-proofness and efficiency of the original mechanism. This is helpful since the group strategy-proof and efficient mechanisms admit a neat characterization for the one-sided matching problem. We show that all such mechanisms are ``sequential dictatorships" where sequences of agents or ``dictators" choose their most preferred remaining agent as their partner.
This characterization complements that of \citeasnoun{root2020incentives} which finds a similar result when agents are not allowed to remain single. Despite the similarities in the result, the proofs are very different. Our proof stems from our analysis of the three-agent case which is ruled out in \citeasnoun{root2020incentives} since all agents are required to be matched. Having established that the embedded one-sided mechanism is a sequential dictatorship, we see that any gender-neutral, efficient and group strategy-proof mechanism agrees with a royalty mechanism on all symmetric preference profiles. Our next task is to show that the same holds for asymmetric profiles. The proof is involved, but boils down to finding a path from any asymmetric preference profile to a symmetric profile, over which we can show the mechanism agrees with a royalty mechanism.

Gender-neutrality can be seen as a fairness requirement. A common fairness criterion in social choice, full symmetry, requires that all agents are treated identically, in the sense that the outcome of a mechanism is invariant under any bijective relabeling of all agents.
In many environments full symmetry conflicts with other desiderata.  \citeasnoun{bartholdi2021equitable} therefore weaken full symmetry to require that the outcome of a mechanism must only remain unchanged for a particular set of bijective relabelings of agents.  From this point of view, our notion of (weak) gender-neutrality only requires that the mechanism be symmetric with respect to \emph{one} such relabeling. In fact, in two-sided matching, no efficient and group strategy-proof mechanism is invariant under more than two relabelings. 
 
Faced with these limits on achieving symmetry we consider randomization. We first show that randomization is no panacea. In our environment, just as in house allocation where random mechanisms are well-studied, randomization can introduce inefficiency. A typical approach is to ``symmetrize" a mechanism by choosing agents' roles in the mechanism uniformly at random. For example, ``serial dictatorship", where agents choose their match sequentially in a fixed order, can be symmetrized to yield ``random serial dictatorship" by selecting the picking order uniformly at random. Despite the efficiency of serial dictatorship, random serial dictatorship can be inefficient. Nevertheless, the symmetrized mechanism retains the incentive properties of the deterministic mechanism from which it is derived. In house allocation, no mechanism can simultaneously achieve efficiency, incentive compatibility and symmetry \cite{bogomolnaia2001new}. On the other hand, random serial dictatorship admits a certain form of efficiency: it selects a lottery over deterministic outcomes each of which is efficient. We say that random serial dictatorship is therefore ex-post efficient. A major open question in house allocation is whether any other mechanism satisfies ex-post efficiency, incentive compatibility and symmetry. We show that two-sided matching admits multiple mechanisms of this type. 

The rest of the paper proceeds as follows. We first establish the preliminary definitions then describe our results. Finally we discuss our axioms. We conclude with a discussion of randomized mechanisms.

\section{Preliminaries}\label{sec:preliminaries}

Let $N$ be a finite set of agents. We will be interested in both ``one-sided matching", where any agent can match with any other and ``two-sided matching", where matching is bipartite, so that agents can only match with partners from the other side of the market. In two-sided matching, we will assume that $N$ is the disjoint union of two sets, $M=\{m_{1},\dots, m_{n}\}$ and $W=\{w_{1},\dots, w_{n}\}$. In keeping with \citeasnoun{gale1962college}, we refer to the agents in $M$ as ``men" and the agents in $W$ as ``women."
A \textbf{submatching} is a (possibly empty) list of mutually exclusive pairs and singletons. The pairs are unordered so that $(m,w)$ and $(w,m)$ both refer to the same pair. In the case of two-sided matching singletons are not allowed and each pair has to be made up of one man and one woman. Let $\Sigma^1_0$ and $\Sigma^2_0$ denote the set of one- and two-sided submatchings respectively. For any submatching $\nu$ let $N(\nu)$ denote the set of agents matched in the submatching. A \textbf{matching} is a submatching that lists every agent. A \textbf{proper submatching} is a submatching which is not a matching. We alternatively represent matchings as bijections $\mu:N\to N$ of order 2 (i.e. $\mu\circ \mu=\id$) so that agent $i$ is matched with $j$ if and only if $j$ is matched with $i$.  The requirement that pairs must be made up of one man and one woman each then translates to  $\mu(M)= W$. We denote the sets of one- and two-sided matchings by $\Sigma^1$ and $\Sigma^2$ respectively.

Agents are assumed to have strict preferences (total orders) over their possible partners. In one-sided matching, each agent $i$ has a strict preference $\succsim_{i}$ over $N$, where $i$ stands for the option to stay unmatched. We write $x\succ_i y$ to mean that $i$ strictly prefers being matched with $x$ to being matched with $y$. We write $x\succsim_i y$ to indicate that either $x\succ_i y$ or $x=y$. In two-sided matching, an agents' preferences range over the other side of the market. That is, each $m\in M$ has a strict preference $\succsim_{m}$ over $W$ and each $w\in W$ has a strict preference $\succsim_{w}$ over $M$. A preference profile is a list all agents' preferences $(\succsim_{i})_{i\in N}$. Let $\Omega^{1}$ and $\Omega^2$ denote the set of preference profiles for the one-sided and two-sided matching problems respectively.

For any preference $\succsim_{i}$ in either context, define $\ptop(\succsim_i)$ as agent $i$'s most preferred partner and $\pbottom(\succsim_i)$ as agent $i$'s least preferred partner. For any agent $i$, we use the notation $\succsim_{i}:j_{1}, j_{2}, \cdots $ to mean an arbitrary preference which top-ranks $j_{1}$, second-ranks $j_{2}$ and so on. 
Given a subset of agents $S$, we write $\succsim_{S}$ as a preference profile for just the agents in $S$. Given a preference profile $\succsim$ and some $\succsim_{S}'$, we write $(\succsim_{S}',\succsim_{-S})$ for the preference profile in which each agent $i$ from $S$ announces $\succsim'_{i}$ and each agent $j$ from $N\setminus S$ announces $\succsim_j$.

A \textbf{(matching) mechanism} is a function $f:\Omega^{k} \rightarrow \Sigma^{k}$ for $k=1,2$ that maps each profile of preferences to a matching.
Such a mechanism $f$ is \textbf{group strategy-proof} if there is no preference profile $\succsim$, group of agents $S\subset N$ and $\succsim'_{S}$ such that $f(\succsim_{S}',\succsim_{-S})(i)\succsim_{i} f(\succsim)(i)$ for all $i\in S$, and for some $j\in S$, $f(\succsim_{S}',\succsim_{-S})(j)\succ_{j} f(\succsim)(j)$. That is, if agents' true preferences were captured by the profile $\succsim$ there would be no group of agents who could jointly misreport, making all agents in the group weakly better-off, with at least one strictly better-off.\footnote{It turns out that in this context a mechanism is group strategy-proof if and only if no group of agents of size one or two can find a jointly profitable misreport. See \citeasnoun{alva2017manipulation} and \citeasnoun{root2020incentives} for further discussion.}
Given a preference profile $\succsim$, a matching $\nu$ \textbf{Pareto dominates} a matching $\eta$ if for all agents $i$, $\nu(i)\succsim_{i} \eta(i)$ and for at least one agent $j$, $\nu(j)\succ_{j} \eta(j)$. A matching $\eta$ is  \textbf{efficient} at the preference profile $\succsim$ if there is no $\nu$ which Pareto dominates it.  A mechanism $f$ is \textbf{efficient} if for every preference profile $\succsim$, $f(\succsim)$ is efficient.

We now turn to our fairness requirement. Informally, a mechanism is weakly gender-neutral if there is a way to reflect preferences across the sides so that, after reflection, the mechanism chooses the same outcome, reflected. To formally define weak gender-neutrality 
fix a $\sigma \in \Sigma^{2}$. While $\sigma$ is a matching, we use it here to denote the reflection across which the mechanism exhibits the desired symmetry. Thus $\sigma(w_j)=m_i$ is interpreted to mean that agent $w_j$ is symmetric to agent $m_i$. We use $\sigma$ both to transform preferences and matchings as follows. Given a matching $\mu\in \Sigma^2$, the \textbf{reflection}, $\sigma\ast\mu$, of $\mu$ under $\sigma$ is the matching such that $(m,w)\in \sigma \ast \mu \iff (\sigma^{-1}(m),\sigma^{-1}(w))=(\sigma(m),\sigma(w))\in \mu$.\footnote{The reflection $\sigma \ast \mu$ is the matching given by the function $\sigma \circ \mu \circ \sigma$.}. Equivalently, for any pair, $(m,w)$ matched in $\mu$, $(\sigma(m),\sigma(w))$ are matched in $\sigma \ast \mu$.
For a preference profile $\succsim\in\Omega^2$ we define the reflection $\succsim'=\sigma(\succsim)$ so that $j\succsim'_i j'\Leftrightarrow \sigma(j)\succsim_{\sigma(i)}\sigma(j')$. So if man $m$ prefers woman $w$ to woman $w'$ in the profile $\succsim$, then woman $\sigma(m)$ prefers man $\sigma(w)$ to man $\sigma(w')$ according to the reflected profile $\sigma(\succsim)$. 
Gender-neutrality will ensure that the symmetric agents $i$ and $\sigma(i)$ are  is treated equally: a mechanism $f$ is \textbf{weakly gender-neutral} if $f(\sigma(\succsim))=\sigma \ast f(\succsim)$ holds for all $\succsim\in \Sigma^2$.\footnote{Note that we require $\sigma$ to be of order two. This is without loss of generality in the following sense. One could have instead asked for an arbitrary bijection $\rho:N\rightarrow N$ such that $\rho(M)=W$ and required $f(\rho(\succsim))=\rho \ast f(\succsim)$ for all $\succsim$. In this case, since $\rho$ is a permutation, it is a member of the symmetric group. It therefore has a finite order $k$ so that $\rho^k$ is the identity. Since $\rho$ matches all men to women and vice-versa, $k$ is even. Hence $\rho^{k/2}$ is of order $2$. If $f(\rho(\succsim))=\rho \ast f(\succsim)$ for all $\succsim$ then $f(\rho^{m}(\succsim))=\rho^m \ast f(\succsim)$ for any $m$ and any $\succsim$. Thus $f(\rho^{k/2}(\succsim))=\rho^{k/2} \ast f(\succsim)$, and $f$ is weakly gender-neutral with respect to $\rho^{k/2}$.}

For a fixed symmetry $\sigma$ we say that a matching $\mu$ and respectively a profile of preferences $\succsim$ are \textbf{symmetric} if they equal to their reflections, so $\mu=\sigma \ast \mu$ and $\succsim=\sigma(\succsim)$. For notational simplicity, when considering a fixed weakly gender-neutral mechanism $f$ we will index the set of men and women $M=\{m_1,\dots, m_n\}$ and $W=\{w_1,\dots,w_n\}$ so that $\sigma(w_{j})=m_{j}$ for all $j$, where $\sigma$ is the symmetry with respect to which $f$ is weakly gender-neutral.

\begin{example}
Suppose $M=\{Ad,Bob, Carl\}$ and $W=\{Ann,Beth,Connie\}$. Let \\
$\sigma=\{(Ann,Ad),(Beth, Bob), (Connie, Carl)\}$.
The reflection of 
$$\mu=\{(Ann,Bob),(Beth,Carl),(Connie,Ad)\}$$
under $\sigma$ is the matching
$$\sigma \ast \mu=\{(Beth,Ad),(Connie,Bob),(Ann,Carl)\}$$
For instance $(Ann,Bob)\in \mu$ implies $(\sigma(Ann),\sigma(Bob))=(Ad,Beth)\in \sigma \ast \mu$. 
Fix a profile of preferences $\succsim$ such that $\succsim_{Ad}:Connie, Beth, Ann$ and such that  all agents other than Ad 
rank partners according to their alphabetical order. Then the reflected profile $\succsim'=\sigma(\succsim)$
is such that 
$\succsim'_{Ann}:Carl, Bob, Ad$ while all other agents rank all possible partners in alphabetical order.

\end{example}\label{example: weak gender-neutrality}

\begin{example}
The following is an example of a mechanism which is weakly gender-neutral under $\sigma(w_{j})=m_{j}$. For any preference profile, match $m_1$ and $w_1$ with their top choices unless they conflict -- i.e. if exactly one top ranks the other. In this case, match both with their top choices excluding one another. If $m_1$ and $w_1$ are matched together, repeat this with $m_2$ and $w_2$ choosing from the remaining agents. Continue until there is a pair $m_k$ and $w_k$ who do not choose one another. Suppose $m_k$ is matched with $w_j$ and $w_k$ is matched with $m_l$. If $j<l$ match the remaining agents using serial dictatorship with the men as dictators picking in order of their index. If $j>l$ match the remaining agents using serial dictatorship with the women as dictators picking in order of their index.
\end{example}

While $f$ is weakly gender-neutral in this example, men and women are treated highly unequally after most initial choices by $m_1$ and $w_1$: after most initial choices one side of the market retains all agency while the other is turned into objects. An agent who is uncertain about the royal's preferences might reason that they have an equal subjective likelihood to be dictator and object. For this agent, weak gender-neutrality might be sufficient. However, an agent with more information about the preferences of the royals might evaluate this mechanism as highly unfair. To avoid such unequal treatment we impose weak gender-neutrality on ``continuation'' submechanisms such as the mechanisms following on the choices by $m_1$ and $w_1$ in the present example.

\begin{definition}
A matching mechanism $g$ for the agents in $S$ is a \textbf{continuation submechanism} of the matching mechanism $f$ if there is a profile of preferences $\succsim_{-S}$ for $N\setminus S$ such that $g(\succsim_{S}^*)(j)=f(\succsim_S,\succsim_{-S})(j)\in S$ for all $\succsim_{S}$ and $j\in S$, where $\succsim_{S}^*$ is the restriction of $\succsim_{S}$ to $S$.
\end{definition}

\begin{definition}
A bilateral matching mechanism $f:\Omega^2\to \Sigma^2$ is \textbf{gender-neutral} if for every continuation submechanism $g$ of $f$ is weakly gender-neutral. 
\end{definition}

Since any mechanism is a continuation submechanism of itself, and gender-neutral mechanism is weakly gender-neutral. 

\begin{example}[Example \ref{example: weak gender-neutrality} continued]
To see that the mechanism $f$ defined above is not gender-neutral 
note that upon $m_1$ and $w_1$ respectively choosing $w_3$ and $m_2$ as their partners a continuation submechanism for all other agents arises. Since 
$w_1$'s partner ($m_2$) has a lower index than $m_1$'s ($w_3$) the remaining matches are determined by a serial dictatorship of all women. Since serial dictatorships are not not weakly gender-neutral, the overall mechanism is not gender-neutral. 
\end{example}

In the preceding example the 
continuation submechanism following the symmetric match between $m_1$ and $w_1$ is weakly gender-neutral under the same symmetry as the original mechanism. This observation easily generalizes: 

\begin{lemma}
Suppose $f$ is a weakly gender-neutral mechanism with symmetry $\sigma$. Suppose $g$ is a continuation submechanism of $f$. Say $g$ follows on a gender-neutral submatching. Then $g$ is weakly gender-neutral with the symmetry $\sigma'$ that is the restriction of $\sigma$ to the agents in $g$.
\end{lemma}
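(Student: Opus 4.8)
The plan is to deduce the conclusion from the weak gender-neutrality of $f$ at a single, carefully chosen profile; essentially all of the work lies in arranging a \emph{symmetric} witnessing profile for the agents outside of $g$.

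First I would fix the data. Since $g$ is a continuation submechanism of $f$ following on a gender-neutral submatching, choose a profile $\succsim_{-S}$ for $N\setminus S$ witnessing this, so that for every $\succsim_S$ the matching $f(\succsim_S,\succsim_{-S})$ agrees with $g(\succsim_S^*)$ on $S$ and equals a fixed submatching $\nu$ on $N\setminus S$, with $\sigma\ast\nu=\nu$. Because $\nu$ is supported on $N\setminus S$ while $\sigma\ast\nu$ is supported on $\sigma(N\setminus S)$, the identity $\sigma\ast\nu=\nu$ forces $\sigma(N\setminus S)=N\setminus S$, hence $\sigma(S)=S$; so $\sigma':=\sigma|_S$ is an order-two bijection of $S$ with $\sigma'(M\cap S)=W\cap S$, a bona fide symmetry for the two-sided problem on $S$, and the claim is well posed. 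Along the way I would record two bookkeeping facts, both immediate from the definition of the reflection of a profile: (i) since $\sigma(S)=S$, any matching $\mu$ with $\mu(S)=S$ satisfies $(\sigma\ast\mu)|_S=\sigma'\ast(\mu|_S)$; and (ii) for any profile $\succsim_S$ of the agents in $S$ over the whole opposite side, the reflected profile $\sigma(\succsim_S)$ is well defined, and its restriction to partners in $S$ depends only on $\succsim_S^*$ and equals $\sigma'(\succsim_S^*)$.

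The substantive step --- which I expect to be the main obstacle --- is to show the witnessing profile $\succsim_{-S}$ may be taken symmetric, i.e. $\sigma(\succsim_{-S})=\succsim_{-S}$ (writing $\sigma$ for its restriction to $N\setminus S$); if the formalization of ``follows on a gender-neutral submatching'' already builds this in, nothing is needed here. A natural candidate is the symmetric profile in which every $i\in N\setminus S$ top-ranks $\nu(i)$, with lower ranks completed symmetrically: since each $\nu$-pair consists of two agents who top-rank one another, efficiency of $f$ forces this profile to induce exactly $\nu$ on $N\setminus S$, and what must be checked is that it induces the \emph{same} continuation submechanism $g$ on $S$ --- equivalently, that the continuation submechanism is pinned down by the submatching it follows on. This is the delicate point, and is where efficiency and the gender-neutrality of $\nu$ do genuine work.

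Finally, granting a symmetric witnessing profile $\succsim_{-S}$, the conclusion drops out. Take any preference profile $\rho$ of the agents in $S$ and extend it to $\succsim_S$ with $\succsim_S^*=\rho$. By symmetry of $\succsim_{-S}$ the reflection satisfies $\sigma\big((\succsim_S,\succsim_{-S})\big)=(\sigma(\succsim_S),\succsim_{-S})$, so weak gender-neutrality of $f$ gives
$$f(\sigma(\succsim_S),\succsim_{-S})=\sigma\ast f(\succsim_S,\succsim_{-S}).$$
Restricting both sides to $S$: by (ii) together with the witnessing property the left side is $g\big((\sigma(\succsim_S))^*\big)=g(\sigma'(\rho))$, and by (i) together with $f(\succsim_S,\succsim_{-S})(S)=S$ the right side is $\sigma'\ast\big(f(\succsim_S,\succsim_{-S})|_S\big)=\sigma'\ast g(\rho)$. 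Hence $g(\sigma'(\rho))=\sigma'\ast g(\rho)$ for every $\rho$, which is exactly weak gender-neutrality of $g$ with symmetry $\sigma'$. Everything past the construction of the symmetric witnessing profile is routine.
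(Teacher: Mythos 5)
The paper states this lemma without proof (it is offered as an ``easy generalization'' of the preceding example), so there is no official argument to compare against; I can only assess your proposal on its own terms. The easy parts of your write-up are correct and even slightly more careful than the paper: the observation that $\sigma\ast\nu=\nu$ forces $\sigma(N\setminus S)=N\setminus S$, so that $\sigma':=\sigma|_S$ really is an order-two symmetry of the subproblem, is a point the paper glosses over, and your final paragraph --- deriving $g(\sigma'(\rho))=\sigma'\ast g(\rho)$ from $f(\sigma(\succsim_S),\succsim_{-S})=\sigma\ast f(\succsim_S,\succsim_{-S})$ once the witnessing profile $\succsim_{-S}$ is symmetric --- is a correct and complete computation.

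The problem is that the step you yourself flag as ``the delicate point'' is never carried out, and it is the entire content of the lemma. You need that the given $g$ coincides with the continuation submechanism attached to a \emph{symmetric} witness, i.e.\ that $f(\sigma(\succsim_S),\sigma(\succsim_{-S}))$ and $f(\sigma(\succsim_S),\succsim_{-S})$ agree on $S$; equivalently, that the continuation submechanism is pinned down by the submatching it follows on rather than by the particular witness. This is not a bookkeeping fact, and it is not derivable from the lemma's stated hypothesis (weak gender-neutrality alone): one can build a weakly gender-neutral $f$ on five couples in which, when $m_1$ and $w_1$ top-rank each other, they are matched and the remaining eight agents receive a constant matching that depends on the ordered pair of the royals' second choices $(w_j,m_{j'})$, with $\mu_{(j',j)}=\sigma\ast\mu_{(j,j')}$ but $\mu_{(2,3)}$ chosen asymmetric. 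This $f$ is weakly gender-neutral, the submatching $\{(m_1,w_1)\}$ is symmetric, yet the continuation submechanism attached to the asymmetric witness $\succsim_{m_1}\colon w_1,w_2,\dots$, $\succsim_{w_1}\colon m_1,m_3,\dots$ is a constant at an asymmetric matching and hence not weakly gender-neutral. So your appeal to ``efficiency and the gender-neutrality of $\nu$'' cannot rescue the step as written --- efficiency is not among the hypotheses, and even adding it does not obviously yield witness-independence without group strategy-proofness. To close the gap you must either read ``follows on a gender-neutral submatching'' as stipulating that the witnessing profile itself can be taken symmetric (arguably the intended meaning, given the motivating example), or import group strategy-proofness of $f$ and prove an invariance result for continuation submechanisms; as it stands the proposal proves the lemma only under the first, stronger, reading.
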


\section{Results and Approach}

We characterize the class of group strategy-proof, gender-neutral and efficient two-sided matching mechanisms. As a key lemma we characterize all group strategy-proof and efficient one-sided matching mechanisms, allowing agents to remain unmatched. For simplicity we ignore related special cases in both results, which we leave to the appendix. These correspond to when there are exactly four agents in two-sided matching and when there are exactly two agents in one-sided matching. These are special cases for similar reasons: both reduce to social choice problems with exactly two outcomes. As described in the introduction, our characterization results in the class of ``Royalty mechanisms." These mechanisms sequentially select two agents, one from either side, to choose their matches according to one of two regimes: matched-by-default(D) or unmatched-by-default(U). The precise order of these agents can vary. Royalty mechanisms are therefore parameterized by this order, which we call a ``succession order". 

\begin{definition} A \textbf{succession order} is a function $\varphi: A\rightarrow (M\times W)\times \{D,U\}$ where $A$ is a subset of $\Sigma^2_0$. Let $\varphi_1:A\rightarrow (M\times W)$ and $\varphi_2:A\rightarrow \{D,U\}$ correspond to the first and second components of $\varphi$. A succession order $\varphi$ must satisfy
\begin{enumerate}
    \item $\emptyset \in A$
    \item If $\nu\in A$ and $\varphi_1(\nu)=(m,w)$ then $m,w\notin N(\nu)$
    \item If $\nu\in A$ and $\varphi_1(\nu)=(m,w)$ then for any $m'.w'\notin N(\nu)$,  $\nu\cup\{(m,w'),(w,m')\}$ matches all but four or fewer agents in $A$. 
\end{enumerate}
\end{definition}

In addition to the succession order, we need to specify a terminal condition for the algorithm. Let $\chi$ be the set of group strategy-proof, efficient, and gender-neutral mechanisms with exactly four agents\footnote{These are characterized in appendix section \ref{appendix: four agents}.} and let $\Sigma_2^T$ be the set of submatchings in $\Sigma^2_0$ where exactly four agents are unmatched. 
\begin{definition}
A \textbf{terminal condition} is a map $\varphi_T:\Sigma_2^T\rightarrow \chi$.
\end{definition}

Given a succession order $\varphi$ and a terminal condition $\varphi_T$, the \textbf{royalty mechanism} $R_{(\varphi,\varphi_T)}$ chooses a matching using the following algorithm.

\begin{definition}
The \textbf{royalty algorithm} given a succession order $\varphi$, a terminal condition $\varphi_T$ and a preference profile $\succsim$ proceeds in a number of steps:\\
\begin{quote}
\textbf{Initialize:} Set $\nu_0=\emptyset$ and if there are three or more couples in the mechanism go to Step $1$, otherwise go to Step $T$\\
\textbf{Step $k$:}  The agents $\varphi_1(\nu_{k-1})$ are declared royals. If $\varphi_2(\nu_{k-1})=D$ then if either royal top-ranks the other among $N-N(\nu_{k-1})$, they are matched. If $\varphi_2(\nu_{k-1})=U$ the royals are matched only if they both top-rank one another among $N-N(\nu_{k-1})$. If the royals are not matched, each gets matched with their favorite partner excluding one another in $N-N(\nu_{k-1})$. Let $\nu_k$ denote the resulting submatching. If $\nu^k$ leaves $6$ or more couples unmatched, go to step $k+1$. If $\nu^k$ leaves 4 or fewer agents unmatched go to Step $T$. \\
\textbf{Step T:} If there are exactly two remaining agents, match them together to result in $\nu^T$. Otherwise use the mechanism $\varphi_T(\nu^k)$ to find a submatching $\overline{\nu}$ for the remaining four agents. Set $\nu^T = \nu^k\cup \overline{\nu}$. Return $\nu^T$.
\end{quote}
\end{definition}

The last step of royalty mechanisms are a special case which we discuss in the Appendix \ref{appendix: four agents}. There are a number of possible group strategy-proof, efficient and gender-neutral mechanisms. As a lead example consider a unanimity rule which sets one of the matchings as a default and chooses the other matching only if all four agents prefer it to the default.

\begin{theorem} \label{two-sided characterization}
A two-sided matching mechanism $f:\Omega^2\to \Sigma^2$ is group strategy-proof, efficient and gender-neutral only if it is a royalty mechanism.
\end{theorem}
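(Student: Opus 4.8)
The plan is to argue by induction on the number $n$ of couples, using the one-sided sequential-dictatorship characterization as the engine and a path argument to pass from symmetric to asymmetric profiles. The base case $n\le 2$ is immediate: for $n=1$ the unique mechanism matches the two agents and is trivially a royalty mechanism, and for $n=2$ (four agents) every group strategy-proof, efficient and gender-neutral mechanism lies in $\chi$ by definition, so it is the royalty mechanism that goes straight to Step $T$ with $\varphi_T(\emptyset)=f$. So fix $n\ge 3$ and assume the result for all smaller problems.

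\textbf{Step 1 (extract the one-sided mechanism).} First I would make precise the embedding sketched in the introduction. Relabel so $\sigma(w_j)=m_j$. If $\succsim$ is symmetric then weak gender-neutrality gives $f(\succsim)=\sigma\ast f(\succsim)$, so $f(\succsim)$ is a symmetric matching; and a symmetric matching either matches $m_i$ with $w_i$ or ``swaps'' a pair of couples $i,j$ (matching $m_i$ with $w_j$ and $m_j$ with $w_i$). Hence symmetric matchings correspond bijectively to one-sided matchings on the $n$ ``couples'' $a_1,\dots,a_n$ (with $a_i$ single $\leftrightarrow$ $m_i$-$w_i$ matched, and $a_i$-$a_j$ matched $\leftrightarrow$ the swap), and symmetric profiles correspond bijectively to one-sided profiles (the preference of $a_i$ is read off from $\succsim_{m_i}$, which under symmetry also determines $\succsim_{w_i}$). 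Let $\hat f$ be the induced one-sided mechanism. I would then check that $\hat f$ inherits efficiency and group strategy-proofness: a Pareto improvement for $\hat f$ at a one-sided profile lifts to a symmetric Pareto improvement for $f$ at the corresponding symmetric profile, and a profitable coalitional misreport for $\hat f$ by a set of couples lifts to a profitable symmetric misreport for $f$ by the union of those couples, landing at a symmetric profile where $f$'s outcome is exactly $\hat f$'s. Since $n\ge 3$, the one-sided characterization gives that $\hat f$ is a sequential dictatorship; relabelling couples so that $a_1$ is the first dictator, set $\varphi_1(\emptyset)=(m_1,w_1)$.

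\textbf{Step 2 (the first royals, on all profiles).} Next I would show that at \emph{every} profile $\succsim$, $f$ treats $m_1$ and $w_1$ as the first royal couple: $f(\succsim)(m_1)\in\{w_1,\ptop(\succsim_{m_1})\}$, $f(\succsim)(w_1)\in\{m_1,\ptop(\succsim_{w_1})\}$, and $f(\succsim)(m_1)=w_1\iff f(\succsim)(w_1)=m_1$. The tools are: individual strategy-proofness, which implies that, fixing the other agents' reports, $m_1$ always receives his favorite woman within a fixed option set, and likewise for $w_1$; efficiency; and the anchor from Step 1 that on symmetric profiles each of $m_1,w_1$ receives his or her absolute top. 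To connect an arbitrary profile to this anchor I would build a path of single-agent preference changes, at each step moving one agent's report toward its reflection, invoking the monotonicity consequences of strategy-proofness (swapping two adjacent alternatives in one agent's report can only toggle that agent's outcome between those two) to control how $f$'s output moves, and using efficiency to forbid outcomes in which $m_1$ or $w_1$ ends up worse than both ``the other royal'' and ``their own top.'' Once this structure is in place I would show that the resolution of a \emph{conflict} (exactly one of $m_1,w_1$ top-ranking the other) is the same at every profile -- either always matching the royals together, in which case set $\varphi_2(\emptyset)=D$, or always giving each their top excluding the other, in which case set $\varphi_2(\emptyset)=U$ -- again by comparing any two conflict profiles along such a path. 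This is the main obstacle: it is where the ``involved'' combinatorics of the asymmetric case sits, and the case bookkeeping (who top-ranks whom, $D$ versus $U$, what each royal gets, and keeping the choice globally consistent across profiles) is delicate.

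\textbf{Step 3 (recurse and assemble).} With the first round pinned down, the remaining $\le n-1$ couples face, for each submatching $\nu$ that can result from the first royals' action, a continuation submechanism $g_\nu$ of $f$. Each $g_\nu$ is group strategy-proof and efficient by the usual restriction arguments, and gender-neutral because any continuation submechanism of $g_\nu$ is itself a continuation submechanism of $f$, hence weakly gender-neutral. By the inductive hypothesis each $g_\nu$ is a royalty mechanism $R_{(\psi^\nu,\psi^\nu_T)}$. I would then glue these into a single $(\varphi,\varphi_T)$: define $\varphi$ on submatchings extending a given first-round $\nu$ to agree (after reindexing) with $\psi^\nu$, and define $\varphi_T$ accordingly. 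Consistency holds because distinct first-round submatchings $\nu$ have disjoint sets of extensions, and because on symmetric profiles every choice is already forced to agree with the sequential dictatorship $\hat f$ (so in particular $\varphi$ restricted to $\sigma$-invariant submatchings reproduces the dictator sequence, matching both Step 1 and the symmetric behavior of the $g_\nu$'s). Finally I would verify that the resulting pair satisfies the axioms of a succession order and a terminal condition -- in particular condition $3$, which holds because the induction bottoms out precisely when four agents remain -- and that $f=R_{(\varphi,\varphi_T)}$ on all profiles, completing the induction.
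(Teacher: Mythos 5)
Your overall architecture is the same as the paper's: extract the induced one-sided mechanism on symmetric profiles, invoke the sequential-dictatorship characterization to identify the first royal couple, extend their powers to asymmetric profiles by a path argument, pin down the conflict-resolution regime, and recurse on continuation submechanisms. Steps 1 and 3 are essentially what the paper does (the paper's Lemma \ref{lemma: induced roommates} plus Theorem \ref{roommates mechanism} for Step 1, and the closing paragraph of the appendix proof for Step 3), and both are sound.

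The gap is that Step 2 is a statement of intent rather than an argument, and Step 2 is where the entire content of the theorem lives -- the paper spends nine lemmas (Lemmas \ref{lemma: at least one gets best} through the final regime lemma) executing it. Two concrete issues. First, your generic path ``move each agent's report toward its reflection'' is not obviously the right one: the paper's path runs through carefully chosen intermediate profiles in which asymmetric commoners \emph{bottom-rank the royals} (Lemma \ref{lemma: Kew Gardens}), and the induction there is over the number of asymmetric couples, not over adjacent single-swap profiles; the paper itself flags that ``the difficulty is in finding exactly the right sequence,'' and your proposal does not supply it. Second, your dichotomy for conflict resolution -- always match the royals ($D$) or always give each their top excluding the other ($U$) -- silently omits two further regimes that are consistent with everything you establish up to that point, namely the two serial dictatorships in which one royal always prevails. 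The paper must first show (Lemma \ref{lemma: commoner pref wins always if once}) that only these four regimes are possible for each fixed commoner profile, then that the regime is constant across commoner profiles, and only then use gender-neutrality at a symmetric commoner profile to exclude the two asymmetric regimes. Without that last step your assignment of $\varphi_2(\emptyset)\in\{D,U\}$ is not justified. As written, the proposal is a correct plan that matches the paper's route but does not constitute a proof of its central step.
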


Theorem \ref{two-sided characterization} does not characterize the set of all group strategy-proof, efficient and gender-neutral mechanisms as some royalty mechanisms are not gender-neutral. To see this, consider a royalty mechanism $f$ for 5 or more couples that starts with $m_1,w_1$ as the royal couple. Consider the cases that the royal couple either choose $w_2$ and $m_3$ as their partners or $w_3$ and $m_2$. Our definition of royalty mechanisms only imposes that these choices lead to either a matched-by-default or a unmatched-by-default step. However gender neutrality requires more than that: If $m_l$ and $w_k$ become the royal couple after the choice of $w_2$ and $m_3$ then $w_l=\sigma(m_l)$ and $m_k=\sigma(w_k)$ become must become the royal couple after the choice of $w_3$ and $m_2$.

To define neutral royalty mechanisms, we proceed inductively over the number of couples in a mechanism. To start say that the set of all royalty mechanisms for two couples consists of the  gender neutral, efficient and group strategy-proof mechanisms characterized in appendix \ref{appendix: four agents}. Now suppose that neutral royalty mechanisms for up to $n$ couples have been defined. To define a neutral royalty mechanism for $n+1$ couples, fix a symmetry $\sigma$ on $N$, with the standard assumption that $\sigma(m_i)=w_i$ for all $i$ and such that $(m_1,w_1)$ are chosen as the royal couple at the start of the mechanism.  

Say $(m_1,w_1)$ choose $m_k$ and $w_l$ as their partners. 
If these choices are symmetric with respect to $\sigma$, so if $k=l$, then a \textbf{neutral royalty mechanism} that is symmetric with respect to the restriction of $\sigma$ to the unmatched agents must be chosen as the continuation submechanism. If $k<l$ we may freely choose any neutral royalty mechanism $f^{k,l}$ to follow on the submatching $\{(m_1,w_l),(m_k,w_1)\}$. If $k>l$ we are bound by the choices for the preceding case. In that case we must use a permutation of the royalty mechanism $\sigma*f^{k,l}$ that is identical to $f^{k,l}$ except that wherever a man $m$ appears in $f^{k,l}$ the woman $\sigma(m)$ should take up his place in $\sigma*f^{k,l}$ and wherever a woman $w$ appears in $f^{k,l}$ the woman $\sigma(w)$ should take up her place in $\sigma*f^{k,l}$. 

\begin{corollary} \label{cor:two-sided characterization}
A two-sided matching mechanism $f:\Omega^2\to \Sigma^2$ is group strategy-proof, efficient and gender-neutral if only if it is a neutral royalty mechanism.
\end{corollary}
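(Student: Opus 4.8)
The plan is to deduce Corollary~\ref{cor:two-sided characterization} from Theorem~\ref{two-sided characterization}, the lemma on continuation submechanisms following a symmetric (gender-neutral) submatching, and an induction on the number of couples $n$. Because, by definition, the neutral royalty mechanisms on four agents ($n=2$) are exactly the members of $\chi$ --- that is, exactly the group strategy-proof, efficient and gender-neutral four-agent mechanisms --- the base case is immediate, and both implications can be carried by the same induction. First I would record a preliminary fact on which the indexing in the statement rests: if $f$ is group strategy-proof, efficient and gender-neutral with symmetry $\sigma$, then (by Theorem~\ref{two-sided characterization}) $f=R_{(\varphi,\varphi_T)}$ for some succession order $\varphi$, and the initial royal couple $\varphi_1(\emptyset)$ is $\sigma$-symmetric. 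This follows from a short direct argument: at any profile $\succsim$ where the two initial royals mutually top-rank one another they are matched together in $f(\succsim)$ regardless of the $D/U$ regime, and comparing with $f(\sigma(\succsim))=\sigma\ast f(\succsim)$, while using that the succession order is a fixed feature of $f$, forces the man and the woman in $\varphi_1(\emptyset)$ to be $\sigma$-images of one another. Hence we may relabel so that $\sigma(m_i)=w_i$ for all $i$ and $\varphi_1(\emptyset)=(m_1,w_1)$.

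For the ``only if'' direction, take $f$ group strategy-proof, efficient and gender-neutral, so $f=R_{(\varphi,\varphi_T)}$; I must verify the extra constraints defining a neutral royalty mechanism, inductively in $n$. Consider the first round, in which the royals $(m_1,w_1)$ pick partners producing a submatching $\nu$. If $\nu$ is symmetric (the case $k=l$ in the statement), then by the lemma the continuation submechanism $g$ following $\nu$ is weakly gender-neutral with the restricted symmetry $\sigma'$; being also a continuation submechanism of $f$, it is gender-neutral, and it has fewer couples, so by the inductive hypothesis it is a neutral royalty mechanism symmetric under $\sigma'$ --- precisely what the definition demands. If $\nu=\{(m_1,w_l),(w_1,m_k)\}$ is asymmetric, say $k<l$, then $g$ is again a continuation submechanism of $f$, hence gender-neutral, hence by induction a neutral royalty mechanism, which we may designate $f^{k,l}$. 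The reflected submatching $\sigma\ast\nu=\{(m_1,w_k),(w_1,m_l)\}$ is exactly the one arising from the ``$k>l$'' choice of the royals, and applying $f(\sigma(\succsim))=\sigma\ast f(\succsim)$ to a profile witnessing $g$ (whose reflection under $\sigma$ then witnesses the continuation following $\sigma\ast\nu$) shows the continuation following $\sigma\ast\nu$ must equal $\sigma\ast f^{k,l}$. This is the constraint in the definition, so $f$ is a neutral royalty mechanism.

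For the ``if'' direction I would show every neutral royalty mechanism $f=R_{(\varphi,\varphi_T)}$ is group strategy-proof, efficient and gender-neutral, again by induction on $n$, the case $n=2$ holding by definition of $\chi$. Write $(m_1,w_1)$ for the initial royals. \emph{Efficiency}: in every case of the first round at least one royal is assigned their most preferred remaining partner (the royal who top-ranks the other when they are matched in the $D$-regime; both royals when they are not matched), so a hypothetical Pareto-improvement must leave both royals with their assigned partners --- the only partner either royal could strictly prefer is the other royal, who is already assigned their own top choice --- and restricting the improvement to the remaining agents contradicts efficiency of the continuation submechanism, a neutral royalty mechanism with fewer couples. \emph{Group strategy-proofness}: each royal's partner depends only on the two royals' top choices, and a short case check shows no coalition can alter the first round to the weak benefit of both royals, so any profitable coalitional manipulation must lie entirely within a continuation submechanism, excluded by induction. \emph{Gender-neutrality}: using $\sigma(m_1)=w_1$, a case analysis on the regime of $(m_1,w_1)$ shows that the partners the royals receive at $\sigma(\succsim)$ are the $\sigma$-images of those they receive at $\succsim$, i.e. $\nu_1(\sigma(\succsim))=\sigma\ast\nu_1(\succsim)$; the continuations then match up by the inductive hypothesis when $\nu_1(\succsim)$ is symmetric, and by the built-in relation between $f^{k,l}$ and $\sigma\ast f^{k,l}$ when it is asymmetric, yielding $f(\sigma(\succsim))=\sigma\ast f(\succsim)$, so $f$ is weakly gender-neutral. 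Finally, any continuation submechanism of $f$ other than $f$ itself contains either both initial royals or neither (if it contained exactly one it would fail to be well-defined, since that royal's partner can be forced outside the submechanism's agent set); if it contains neither, it is a continuation submechanism of the lower-order neutral royalty mechanism reached after the first round, hence gender-neutral by induction, in particular weakly gender-neutral. Thus every continuation submechanism of $f$ is weakly gender-neutral and $f$ is gender-neutral.

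I expect the main obstacle to be the gender-neutrality half of the ``if'' direction: correctly tracking how $\sigma$ acts on the outcome of a royal round and how the restricted symmetry on a continuation submechanism meshes with the split between symmetric and asymmetric royal choices, so that the inductive hypothesis and the built-in correspondence $f^{k,l}\leftrightarrow\sigma\ast f^{k,l}$ apply without gaps --- together with the bookkeeping needed to see that every continuation submechanism of $f$ is accounted for. The preliminary claim that the initial royal couple is $\sigma$-symmetric, and the verification that the first royal round cannot be manipulated by a coalition, are shorter but genuine steps on which the statement and its indexing quietly rely.
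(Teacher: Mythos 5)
Your overall architecture is the right one, and it in fact supplies more than the paper does: the paper states this corollary with no standalone proof, treating the necessity direction as falling out of the proof of Theorem~\ref{two-sided characterization} (whose final paragraph already invokes gender-neutrality of the continuation submechanisms to iterate the royal-couple argument) together with the inductive definition of neutral royalty mechanisms, and it never verifies sufficiency for more than two couples beyond the four-agent case in Lemma~\ref{four agents}. Your ``only if'' argument --- Theorem~\ref{two-sided characterization} plus Lemma~1 for the symmetric choice $k=l$, and the reflection identity $f(\sigma(\succsim))=\sigma\ast f(\succsim)$ applied to witnessing profiles to force the continuation after $\sigma\ast\nu$ to be $\sigma\ast f^{k,l}$ --- is exactly the reasoning the paper sketches in the paragraph preceding the corollary, and your preliminary claim that the initial royals form a $\sigma$-symmetric pair is obtained in the paper for free, since the royal couple is identified as the dictator of the induced one-sided mechanism on symmetric profiles, which is by construction a pair $(m_i,\sigma(m_i))$.

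The one step that does not survive scrutiny is the bookkeeping claim in your sufficiency argument that every continuation submechanism of $f$ contains both initial royals or neither, ``since otherwise it would fail to be well-defined.'' For an unmatched-by-default first round this is false under the paper's literal definition: fix $\succsim_{w_1}$ to top-rank a commoner $m_j$ and fix $\succsim_{m_j}$ arbitrarily; then for $S=N\setminus\{w_1,m_j\}$ and every $\succsim_S$, the royals are never matched together (mutual top-ranking fails on $w_1$'s side), $w_1$ is matched to $m_j\notin S$, and every agent of $S$ is matched inside $S$, so this yields a well-defined continuation submechanism containing $m_1$ but not $w_1$ --- one in which $m_1$ is an outright dictator over $W\setminus\{w_1\}$, hence not weakly gender-neutral by your own two-dictators argument. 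This is arguably a looseness in the paper's definition of continuation submechanism (everything in the paper's proofs uses only continuations following completed royal rounds, i.e.\ $S$ equal to the complement of the submatching produced by an initial segment of the algorithm), and under that on-path reading your ``both or neither'' dichotomy and the rest of your induction go through. But as written your argument for full (not merely weak) gender-neutrality of a neutral royalty mechanism has a hole exactly at the point you yourself flagged as the main obstacle, and closing it requires either restricting the class of continuation submechanisms considered or handling the off-path sets $S$ directly.
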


The proof is involved so we defer the details to the appendix. We sketch the argument here. The first key insight is to notice that within any gender-neutral two-sided mechanism is a one-sided mechanism. Suppose as usual that $\sigma(m_i)=w_i$ for all $i$. Let $\Omega^2_{symm}$ be the set of symmetric preference profiles (i.e. the $\succsim$ such that $\sigma(\succsim)=\succsim$). Since $\sigma(\succsim)=\succsim$, gender-neutrality implies that $f(\succsim)=\sigma \ast f(\succsim)$. Then if $(m_i,w_j)\in f(\succsim)$ so is $(\sigma(m_i),\sigma(w_j))=(w_i,m_j)$. From this we see that given any symmetric preference profile $\succsim$, for any pair $(m_i,w_i)$, either $m_i$ and $w_i$ are matched in $f(\succsim)$ or they swap partners with another symmetric pair $(m_j,w_j)$. By treating each symmetric pair $(m_i,w_i)$ as a single agent, $c_i$ we can extract a one-sided matching mechanism. We interpret the swap of partners between say $(m_i,w_i)$ and $(m_j,w_j)$ as a match between $c_i$ and $c_j$ and a match between $m_i$ and $w_i$ as the agent $c_i$ being unmatched. \citeasnoun{root2020incentives} showed that all group strategy-proof and efficient one-sided mechanisms are sequential dictatorships. However, they did not allow agents to remain unmatched, a key requirement for us. We complement their characterization by showing the same holds even if agents are allowed to remain unmatched.

\begin{definition}
let $N'=\{1,2,\dots,n\}$ and $N=\{m_1,\dots, m_n\}\cup\{w_1,\dots, w_n\}$. Given a gender-neutral mechanism $f$ for $N$, we define a one-sided matching mechanism $g$ for $N'$ which we call the \textbf{induced one-sided mechanism} for $f$. For any $\succsim$ in the one-sided matching market, let $\succsim^*$ be the preference profile in the two-sided market where $m_k\succsim^*_{w_j} m_l$ and $w_k\succsim^*_{m_j} w_l$ if and only if $k\succsim_{j} l$ for all $j,k,l$. Then for any $j$, $g(\succsim)(j)$ is the agent $i$ such that $f(\succsim^*)(m_j)=w_i$ and $f(\succsim^*)(w_j)=m_i$.
\end{definition} 

\begin{lemma}\label{lemma: induced roommates}
For any gender-neutral two-sided matching mechanism $f$, the induced one-sided mechanism $g$ is group strategy-proof and efficient if $f$ is.
\end{lemma}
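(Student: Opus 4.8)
The plan is to transfer the relevant incentive and efficiency properties from $f$ to the induced mechanism $g$ by exploiting the symmetry of the construction, namely that $g$ is defined only through the restriction of $f$ to the symmetric profiles $\succsim^*$ (which are symmetric with respect to $\sigma$, since $m_k \succsim^*_{w_j} m_l \iff w_k \succsim^*_{m_j} w_l$). First I would record the basic dictionary: for a one-sided profile $\succsim$, writing $\mu = f(\succsim^*)$, by gender-neutrality $\mu = \sigma \ast \mu$, so that $\mu(m_j) = w_i$ exactly when $\mu(w_j) = m_i$, and $g(\succsim)(j) = i$ is well-defined; moreover $j$ is ``unmatched'' in $g$ (i.e. $g(\succsim)(j) = j$) precisely when $m_j$ is matched to $w_j$ in $\mu$. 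The key observation is that the map $\succsim \mapsto \succsim^*$ is injective and respects the lattice of single- and multi-agent deviations: if $\succsim'$ differs from $\succsim$ only in the reports of a set of agents $T \subseteq N'$, then $(\succsim')^*$ differs from $\succsim^*$ only in the reports of $\{m_t, w_t : t \in T\}$, and this deviating profile is again symmetric.

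For \textbf{efficiency}: suppose toward a contradiction that $g(\succsim)$ is Pareto-dominated by some one-sided matching $\eta$ at $\succsim$. I would lift $\eta$ to a two-sided matching $\eta^*$ by mirroring each edge: if $\eta(j) = i$ with $i \neq j$, put $(m_j, w_i)$ and $(w_j, m_i)$ into $\eta^*$; if $\eta(j) = j$, put $(m_j, w_j)$ into $\eta^*$. Then $\eta^*$ is a genuine two-sided matching ($\eta^*(M) = W$). The preference translation in the definition of $\succsim^*$ is set up exactly so that agent $j$'s one-sided ranking of $i$ versus $i'$ matches $m_j$'s ranking of $w_i$ versus $w_{i'}$ and $w_j$'s ranking of $m_i$ versus $m_{i'}$; hence each agent in $N'$ weakly preferring $\eta$ to $g(\succsim)$ translates into both $m_j$ and $w_j$ weakly preferring $\eta^*$ to $f(\succsim^*)$, and a strict improvement for some $j$ gives a strict improvement for $m_j$ (and $w_j$). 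So $\eta^*$ Pareto-dominates $f(\succsim^*)$, contradicting efficiency of $f$.

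For \textbf{group strategy-proofness}: suppose $T \subseteq N'$, a profile $\succsim$, and a deviation $\succsim'_T$ make every agent in $T$ weakly better off under $g$ and some strictly better off. Form the two-sided deviating group $S = \{m_t, w_t : t \in T\}$ and the report $(\succsim')^*_S$ obtained by translating $\succsim'_T$ as in the definition. Since $((\succsim'_T, \succsim_{-T}))^* = ((\succsim')^*_S, (\succsim^*)_{-S})$ — again using that translation commutes with the substitution of reports — the same dictionary converts each agent $t$'s weak one-sided improvement into weak improvements for both $m_t$ and $w_t$ under $f$, and the strict one into a strict improvement for some $m_t$. This exhibits a profitable group deviation for $f$, contradicting its group strategy-proofness.

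The main obstacle, and the only place that needs genuine care, is verifying that the translation $\succsim \mapsto \succsim^*$ is compatible with deviations in the precise sense used above — i.e. that restricting a deviation to a side-symmetric subset of agents on the two-sided side corresponds exactly to a deviation on the one-sided side, with no cross-contamination of other agents' reports — together with the bookkeeping that the lifted matching $\eta^*$ is well-defined and actually bipartite. Both are ultimately routine, but they rely on the fact that $\succsim^*$ is symmetric and that $g$ only ever ``sees'' $f$ on symmetric profiles, so that all the matchings involved satisfy $\mu = \sigma \ast \mu$ and the pairing of $m_j$ with $w_j$ (or the mirrored edges) is forced. I would state this compatibility as a short preliminary claim and then let the efficiency and strategy-proofness arguments fall out as above.
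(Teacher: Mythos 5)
Your proposal is correct and follows essentially the same route as the paper's own proof: both arguments rest on the same dictionary between symmetric two-sided matchings/profiles and one-sided ones, lift the dominating matching (resp. the group deviation) to the two-sided problem, and invoke the efficiency (resp. group strategy-proofness) of $f$. The only difference is presentational — you argue by contradiction where the paper argues directly — so no further comparison is needed.
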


\begin{proof}
Assume that $f$ is group strategy-proof and efficient. We will first show that $g$ is efficient. Consider any preference profile $\succsim$ in the one-sided matching market and the matching $g(\succsim)$. For any other one-sided matching $\mu\neq g(\succsim)$ we can define the symmetric two-sided matching $\nu$ so that $\nu(m_j)=w_i$ and $\nu(w_j)=m_i$ if and only if $\mu(j)=i$. Since $f$ is efficient, there must be some $j$ such that $f(\succsim^*)(m_j)\succ^{*}_{m_j}\nu(m_j)$. However by definition we then have that $g(\succsim)(j)\succ_{j}\nu(j)$. Hence $\nu$ cannot Pareto dominate $g(\succsim)$.

To see that $g$ is group strategy-proof, fix any preference profile $\succsim$, a group of agents $S$ and some $\succsim_S'$ such that $g(\succsim)\neq g(\succsim_S',\succsim_{-S})$.  Since $f$ is group strategy-proof, there is some $j$ among the set of men and women who have the same index as an agent in $S$ such that $f(\succsim^*)(m_j)\succ_{m_j} f((\succsim_S',\succsim_{-S})^*)(m_j)$ or $f(\succsim^*)(w_j)\succ_{w_j} f((\succsim_S',\succsim_{-S})^*)(w_j)$. By definition then $g(\succsim)\succ_j g(\succsim_S',\succsim_{-S})(j)$, so $\succsim_S'$ is not a profitable deviation for the coalition $S$. Since the coalition, deviation and original preference profile were arbitrary, $g$ is group strategy-proof.
\end{proof}

Before stating our characterization of one-sided matching mechanisms, let's recall the notion of sequential dictatorship.

\begin{definition} A \textbf{picking order} is a function $\phi: A\rightarrow N$ where $A$ is a subset of $\Sigma^1_0$ and 
\begin{enumerate}
    \item $\emptyset \in A$
    \item If $\nu\in A$ and $\phi(\nu)=i$ then $i\notin N(\nu)$
    \item If $\nu\in A$, $\phi(\nu)=i$ and $j\notin N(\nu)$ then 
     $\nu\cup\{(i)\}$ and $\nu\cup\{(i,j)\}$ are either in $A$ or are matchings. 
\end{enumerate}
\end{definition}

\begin{definition} A mechanism $f$ is a \textbf{sequential dictatorship} with respect to $\phi$ if for any $\succsim$, $f(\succsim)$ is the matching resulting from the following algorithm: \\
\begin{quote}
\textbf{Step $1$:} Agent $\phi(\emptyset)$ is matched with her favorite partner (including herself). Let $\nu_1$ be this submatching. \\
\textbf{Step $k\geq2$:} Agent $\phi(\nu_{k-1})$ is matched with her favorite remaining partner (including herself). Let $\nu_k$ be the matching $\nu_{k-1}\cup\{(\phi(\nu_{k-1})\}$ if $\phi(\nu_{k-1})$ prefers to remain unmatched or $\nu_{k-1}\cup\{(\phi(\nu_{k-1},j)\}$ if $j$ is $\phi(\nu_{k-1})$s favorite remaining partner. If $\nu_k$ is a matching, stop and return $\nu_k$.
\end{quote}
\end{definition}

\begin{theorem} \label{roommates mechanism}
A one-sided matching mechanism $f:\Omega^1\to \Sigma^1$ is group strategy-proof and efficient if and only if it is a sequential dictatorship.
\end{theorem}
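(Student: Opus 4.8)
The plan is to prove both implications by induction on the number of agents $n=|N|$, treating $n\le 2$ as a degenerate case (for $n=2$ the problem is a two-outcome social choice problem, relegated to the appendix) and handling $n=3$ by a direct case analysis, since this three-agent case is the genuinely new ingredient relative to \citeasnoun{root2020incentives}, who exclude it by forbidding singles. For the \emph{if} direction, the key point is that in a sequential dictatorship the first dictator $d_1=\phi(\emptyset)$ always receives $\ptop(\succsim_{d_1})$ regardless of the other reports, and that once $d_1$'s partner $x$ is fixed the remaining steps form a sequential dictatorship on $N\setminus\{d_1,x\}$ (on $N\setminus\{d_1\}$ when $x=d_1$). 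So any matching that Pareto dominated $f(\succsim)$ would have to leave $d_1$ matched to $x$, and restricting to the other agents would contradict the inductive hypothesis on efficiency; group strategy-proofness is analogous, since a profitably deviating coalition cannot contain whichever of $d_1,x$ strictly gains --- both get the same partner under the truthful and the deviating report --- so deleting $d_1$ and $x$ yields a profitable deviation in a smaller continuation sequential dictatorship.

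For the \emph{only if} direction, fix a group strategy-proof and efficient $f$ and recall the standard fact that group strategy-proofness entails strategy-proofness and non-bossiness (no agent can alter another agent's partner without altering their own), so that each agent's partner is the $\succsim_i$-maximal element of a well-defined option set. Assuming the theorem for all smaller problems and taking $n\ge 4$, the argument has four steps. (1) \emph{First dictator:} there is an agent $d_1$ with $f(\succsim)(d_1)=\ptop(\succsim_{d_1})$ at every profile. (2) \emph{Reduction:} for each candidate top choice $j$ of $d_1$ --- including $d_1$ itself, i.e.\ ``stay single'' --- non-bossiness shows that the restriction of $f$ to $S_j:=N\setminus\{d_1,j\}$ (resp.\ $N\setminus\{d_1\}$) is unaffected by the reports of $d_1$ and $j$ once $\ptop(\succsim_{d_1})=j$ is fixed, and that $d_1$ and $j$ are matched to each other, so this restriction is a genuine continuation submechanism $g_j$. (3) \emph{Inheritance:} because the agents outside $S_j$ are matched among themselves, $g_j$ inherits group strategy-proofness and efficiency from $f$, hence by the inductive hypothesis $g_j$ is a sequential dictatorship with picking order $\phi^j$. (4) \emph{Gluing:} set $\phi(\emptyset)=d_1$ and let $\phi$ agree with the relevant $\phi^j$ on every submatching that already contains $d_1$; then $\phi$ is a valid picking order and $f$ is the sequential dictatorship it induces.

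The crux is step (1), and it is tied to the $n=3$ base case. Efficiency alone never forces a particular agent to get their top choice --- the ``all single'' matching can be efficient even when several agents rank a partner above staying single, and mutually top-ranked pairs need not be matched --- so $d_1$ must be wrung out of strategy-proofness and non-bossiness together with efficiency. For $n=3$ I would enumerate the (at most) four matchings and pin the mechanism down to one of the sequential dictatorships by playing strategy-proofness off against efficiency over a small, carefully chosen family of profiles. For $n\ge 4$ I would locate $d_1$ by reducing to lower-dimensional continuation submechanisms --- in particular three-agent ones, each a sequential dictatorship by the base case and hence equipped with a first dictator --- and reconciling these local first dictators into the single global $d_1$ via non-bossiness. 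I anticipate that arranging the auxiliary profiles so that these smaller continuation submechanisms genuinely arise (i.e.\ freezing the remaining agents so they never match with the three under consideration) is the most delicate part of the whole argument.
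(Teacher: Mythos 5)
Your ``if'' direction and your overall inductive architecture for the ``only if'' direction (locate a first dictator, pass to continuation submechanisms, invoke the inductive hypothesis, glue) match the paper's strategy, and your instinct that everything hinges on producing the first dictator is right. The genuine gap is that you treat $n\ge 4$ uniformly, and your method for locating $d_1$ --- reconciling the first dictators of lower-dimensional continuation submechanisms --- breaks down at exactly $n=4$. The continuation submechanisms you can actually activate at a profile $\succsim$ are those consistent with $f(\succsim)$: if some agent $j$ is single you get a mechanism on $N\setminus\{j\}$, and if $j$ is matched to $k$ you get one on $N\setminus\{j,k\}$. When $n=4$ and $f(\succsim)$ is a perfect matching, say $\{(1,2),(3,4)\}$, the only consistent submechanisms are the two-agent ones, and for two agents group strategy-proofness plus efficiency does \emph{not} force dictatorship --- the unanimity rules also qualify (Lemma \ref{lemma: two roomates}). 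So on the profiles where all four agents pair up you have no dictatorial continuation submechanism to reconcile against, and non-bossiness alone will not bridge from the dictators of the three-agent submechanisms (which only govern profiles where someone stays single) to these profiles.

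The paper closes this hole with a device you would need to rediscover: restrict $f$ to the subdomain where every agent bottom-ranks being single. There efficiency forces a perfect matching, there are exactly three such matchings, each determined by a single agent's partner, so the restricted mechanism is a strategy-proof social choice function onto three alternatives with unrestricted preferences, and the Gibbard--Satterthwaite theorem yields a dictator on that subdomain. The rest of the four-agent argument stitches this ``no-singles'' dictator together with the dictators of the three-agent submechanisms, repeatedly using the move ``drop being single to the bottom of everyone's ranking and invoke group strategy-proofness'' to transport conclusions back to arbitrary profiles. For $n\ge 5$ your plan is essentially the paper's: one can always find a pair $\{j,f(\succsim)(j)\}$ disjoint from $\{d_1,f(\succsim)(d_1),\ptop(\succsim_{d_1})\}$, so a consistent continuation submechanism with at least three agents is always available and is dictatorial by induction. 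Your $n=3$ sketch is under-specified but aims in the right direction; the paper's version is a nontrivial case analysis built on an ``ownership'' relation (who can always force a match with whom), showing every agent is owned and that the only consistent ownership structure is one agent owning everyone.
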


We prove Theorem \ref{roommates mechanism} by induction over the number of agents. In a sequence of Lemmas we show that the result holds for 3 agents. For 4 agents we use the fact that restricted to the case where each agent bottom ranks being single, we face a classical social choice problem over 3 alternatives. This case must by the Gibbard-Satterthwaite Theorem be a dictatorship \cite{gibbard1973manipulation}\cite{satterthwaite1975strategy}. We show that the dictator in this constrained problem must also be the dictator in the four agent problem when agent may prefer to be single to some matches. The cases of five or more agents then do not require any special treatment. This result agrees with \citeasnoun{root2020incentives} who find a similar characterization in the setting where agents are not allowed to remain unmatched. Since agents are not able to veto the dictators' choices in a sequential dictatorship, Theorem \ref{roommates mechanism} admits an immediate corollary:

\begin{corollary}
There are no individually rational, group strategy-proof and efficient one-sided matching mechanisms
\end{corollary}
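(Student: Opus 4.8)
The plan is to derive the corollary directly from Theorem \ref{roommates mechanism}, which asserts that every group strategy-proof and efficient one-sided matching mechanism is a sequential dictatorship. The key observation is that a sequential dictatorship can never be individually rational once there are at least two agents, because the first dictator's choice is imposed on whichever agent is selected, and that agent has no way to decline.

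The argument proceeds as follows. First, dispose of the trivial case: if $|N|=1$, there is nothing to match and individual rationality is vacuous, so assume $|N|\geq 2$. Now suppose, for contradiction, that $f:\Omega^1\to\Sigma^1$ is group strategy-proof, efficient, and individually rational. By Theorem \ref{roommates mechanism}, $f$ is a sequential dictatorship with respect to some picking order $\phi$. Let $i=\phi(\emptyset)$ be the first dictator, and pick any other agent $j\neq i$ (which exists since $|N|\geq 2$). Consider a preference profile $\succsim$ in which $i$ top-ranks $j$ (i.e.\ $\ptop(\succsim_i)=j$) while $j$ top-ranks being unmatched (i.e.\ $\ptop(\succsim_j)=j$). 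In Step $1$ of the algorithm, agent $i$ is matched with her favorite partner $j$, so $f(\succsim)(j)=i$. But $j$ strictly prefers $\pbottom$-avoiding the match with $i$; more precisely, since $j$'s top choice is to remain single, $j\succ_j i$, so $f(\succsim)(j)=i\not\succsim_j j$. This violates individual rationality for agent $j$, a contradiction.

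I would phrase the individual rationality requirement explicitly at the start of the proof (the excerpt uses it only informally): a mechanism $f$ is \textbf{individually rational} if for every profile $\succsim$ and every agent $k$, $f(\succsim)(k)\succsim_k k$, i.e.\ no agent is ever matched with a partner she ranks below remaining single. With that definition in place, the contradiction above is immediate, since $f(\succsim)(j)=i\prec_j j$.

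There is no real obstacle here; the corollary is a one-line consequence of the structural characterization. The only mild subtlety is making sure the picking order $\phi$ is actually allowed to assign $i=\phi(\emptyset)$ a match with any $j\neq i$ — this is guaranteed by condition (3) in the definition of a picking order, which requires $\nu\cup\{(i,j)\}$ to be in $A$ or be a matching for every $j\notin N(\nu)$, so in particular the Step $1$ algorithm is well-defined on the constructed profile. Hence the result follows.
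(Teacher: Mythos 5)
Your argument is exactly the paper's: the corollary is stated as an ``immediate'' consequence of Theorem \ref{roommates mechanism}, justified by the one-line observation that agents cannot veto a dictator's choice, which is precisely your construction with the first dictator $i$ top-ranking an agent $j$ who top-ranks staying single. One shared caveat: Theorem \ref{roommates mechanism} (by the paper's own admission) excludes the two-agent case, where the unanimity rule with ``both single'' as default is in fact individually rational, group strategy-proof and efficient, so both your proof and the corollary itself implicitly require at least three agents rather than the $|N|\geq 2$ you assume.
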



Having observed that a gender-neutral mechanism is equivalent to a one-sided matching mechanism on symmetric profiles, we see that on these profiles a group strategy-proof, gender-neutral and efficient mechanism must agree with some royalty mechanism. At a high level, the rest of the proof can be summarized by the following procedure. For any preference profile $\succsim$ we find a sequence of profiles $\succsim^0,\dots,\succsim^m$ where $\succsim^m=\succsim$ and $\succsim^0$ is symmetric. We then argue that for every pair of adjacent profiles $\succsim^k, \succsim^{k+1}$ in this sequence if a group strategy-proof, efficient and gender-neutral mechanism agrees with a royalty mechanism at $\succsim^k$ it must also do so at $\succsim^{k+1}$. The difficulty is in finding exactly the right sequence.

\section{The axioms}

No axiom can be dropped from the characterizations in Theorems \ref{two-sided characterization} and \ref{roommates mechanism}. Considering first the characterization of one-sided mechanisms in Theorem \ref{roommates mechanism} note that any constant mechanism is group strategy-proof but not efficient. 
Most efficient mechanisms are not group strategy-proof. For an example define a one-to-one function $R$ from the set of matchings to the natural numbers. Then map any profile $\succsim$ to the matching $\mu$ that minimizes $R(\mu)$ over the set of all efficient matchings. 

To see that this mechanism is not group strategy-proof fix a setting with just three agents $\{1,2,3\}$. Say $R$ assigns values 1,2,3 and 4 to the matchings $\{(1,2),(3)\}$, $\{(1),(2,3)\}$, $\{(1,3),(2)\}$ and $\{(1),(2),(3)\}$ respectively. Fix the profile $\succsim^*$ where each agent holds the same ranking $\succsim^*_i$ which ranks 3 above 1 above 2. Since each matching is efficient at $\succsim^*$, the mechanism chooses $\{(1,2),(3)\}$. Now modify agent 2's preference to be $\succsim'_2: 3,2,1$. Since now agents 1 and 2 would rather be single than be together the matching $\{(1,2),(3)\}$ is not efficient. Since  $\{(1),(2,3)\}$ is efficient and because $R(\{(1),(2,3)\})=2$ the mechanism maps   $(\succsim'_2,\succsim^*_{-2})$ to $\{(1),(2,3)\}$. However, agent 2 strictly prefers to be matched with agent 3 than to agent 1 so the mechanism is not strategy-proof.

Things get more interesting with Theorem \ref{two-sided characterization}. The constant mechanism which maps each profile of preferences to the matching $\sigma$ with respect to which $f$ is gender-neutral, is not only group strategy-proof but also gender-neutral. It is clearly not efficient. For a gender-neutral and efficient mechanism $f$ that is not group strategy-proof we fix, as above, a one-to-one mapping $R$ from the set of gender-neutral matchings to 
the natural numbers. To define $f(\succsim)$ first check whether there there exist any symmetric efficient matchings at $\succsim$. If so choose the  matching that minimizes $R(\mu)$ over the set of all efficient symmetric matchings. If not, use some fixed neutral royalty mechanism to calculate $f(\succsim)$. To see that the mechanism is gender-neutral fix a profile $\succsim$. Note that the set of efficient symmetric matchings at $\succsim$ and $\sigma(\succsim)$ coincide and that $\sigma\ast \mu$ holds for any symmetric matching. Therefore $\sigma \ast f(\succsim)= f(\succsim)=f(\sigma(\succsim))$ both equal the minimal $R(\mu)$ over the set of all matchings $\mu$ that are efficient at $\sigma$ (and therefore also at $\sigma(\succsim)$). Since neutral royalty mechanisms are gender neutral, $f(\succsim)=f(\sigma(\succsim))$ also holds for the alternative case where no gender neutral matchings are efficient at $\succsim$. The fact that $f$ is not group strategy-proof follows from the fact that the analog mechanisms for roommate problems is not group strategy-proof.\footnote{If $f$ was group strategy-proof it would have to be group strategy-proof on the subdomain of symmetric preferences. Now consider a problem with three couples. If the couples preferences and the function $R$ are given as in the discussion of roommate problems, we see that couple two has an incentive to lie about their preferences.}

If we drop gender-neutrality and modify efficiency to only include the preferences of one side of the market, we obtain the set of 
efficient and group strategy-proof one-sided matching mechanisms as characterized by \cite{pycia2017incentive} and \cite{bade2020random}.
When we replace gender-neutrality with weak gender-neutrality, our proof continues to hold until the point where we show that a royal couple must be matched or unmatched-by-default. However if this royal couple chooses partners $m$ and $w$ that are not symmetric, the continuation submechanism need not be symmetric. This is illustrated by example \ref{example: weak gender-neutrality}.

\subsection{The incompatibility of gender-neutrality and stability}

The tension between stability and incentive compatibility was first described in \citeasnoun{roth1982economics}. Here we show that stability also clashes with gender-neutrality. Stability therefore forces the designer to introduce asymmetry where none exists.

\begin{theorem}\label{theorem: stable never gender-neutral}
No stable mechanism $f$ for at least three couples is gender-neutral and stable.
\end{theorem}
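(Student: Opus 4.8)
The statement asserts that no stable mechanism for $n \geq 3$ couples can be gender-neutral. The natural strategy is to exhibit, for every purported stable and gender-neutral mechanism $f$, a symmetric preference profile $\succsim = \sigma(\succsim)$ at which gender-neutrality forces $f(\succsim) = \sigma \ast f(\succsim)$, yet stability forces a unique stable matching $\mu^*$ that is \emph{not} symmetric (i.e. $\mu^* \neq \sigma \ast \mu^*$), giving a contradiction. So the whole burden is to construct a symmetric profile whose unique stable matching is asymmetric.

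\textbf{Key steps.} First, recall that a profile which admits a unique stable matching pins down $f(\succsim)$ completely (any stable mechanism must select it), so I do not need to know anything else about $f$. Second, recall that $\sigma \ast \mu = \mu$ precisely when $\mu$, viewed as the bijection $\sigma\circ\mu\circ\sigma$, is fixed; concretely, with the indexing $\sigma(m_i) = w_i$, a matching is symmetric iff $(m_i, w_j) \in \mu \iff (m_j, w_i) \in \mu$. The cleanest asymmetric matchings are ``cyclic'' ones such as $m_i \mapsto w_{i+1}$ (indices mod $n$): here $(m_1, w_2) \in \mu$ but $(m_2, w_1) \notin \mu$, so $\mu$ is manifestly not symmetric once $n \geq 3$. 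Third, I would build a symmetric profile for which this cyclic matching is the unique stable one. The idea is to make the men's and women's preferences ``rotationally aligned'': e.g. for $n = 3$ set $\succsim_{m_1}: w_2, w_3, w_1$, $\succsim_{m_2}: w_3, w_1, w_2$, $\succsim_{m_3}: w_1, w_2, w_3$, and the mirror preferences on the women's side dictated by $\succsim = \sigma(\succsim)$, which forces $\succsim_{w_i}: m_{?}, \dots$ with the symmetric pattern $j \succsim_{w_i}$-ordering mirroring $\succsim_{m_i}$. One checks this profile is symmetric by construction, and that the men-optimal stable matching (which equals the unique stable matching here, since preferences are aligned so the deferred-acceptance lattice collapses to a point) is the rotation $m_i \mapsto w_{i+1}$. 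For general $n \geq 3$, the analogous rotation-aligned profile works: I would verify (a) symmetry, directly from the defining pattern; (b) that $m_i \mapsto w_{i+1}$ is stable, by checking no blocking pair — if $m_i$ prefers some $w_k$ to $w_{i+1}$ then by the aligned construction $w_k$'s partner is someone $w_k$ prefers to $m_i$; (c) uniqueness, either by the lattice argument or by a direct induction removing the mutually top-ranked-at-each-stage structure. Finally, assemble: $f(\succsim)$ must be $\mu^*$ by stability, but $f(\succsim) = \sigma \ast f(\succsim) = \sigma \ast \mu^* \neq \mu^*$ by gender-neutrality applied at the symmetric profile $\succsim$ — contradiction.

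\textbf{Main obstacle.} The delicate part is constructing the symmetric profile and proving its stable matching is \emph{both} unique \emph{and} asymmetric simultaneously — these pull in opposite directions, since the most symmetric-looking profiles tend to have symmetric stable matchings. Uniqueness is what makes $f$'s hands tied; asymmetry of $\mu^*$ is what produces the contradiction. The rotation-aligned construction is the key trick that threads this needle: it is globally symmetric as a preference pattern, but its stable matching is a cyclic shift, which cannot be symmetric for $n \geq 3$ (for $n = 2$ a transposition \emph{is} symmetric, which is exactly why the theorem needs three couples). I would spend most of the write-up carefully verifying that no blocking pair exists for the rotation and that deferred acceptance from both sides yields the same rotation, perhaps presenting the $n=3$ case in full and indicating the straightforward generalization. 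A secondary check is to make sure the reflection $\succsim = \sigma(\succsim)$ constraint is consistent with the chosen men's preferences — i.e. that I am genuinely free to pick the men's side arbitrarily and let $\sigma$ determine the women's side, which it is, since $\sigma$ swaps the two sides.
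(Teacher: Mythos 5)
Your overall strategy is the paper's: take a symmetric ``rotation'' profile, note that gender-neutrality forces $f(\succsim)=\sigma\ast f(\succsim)$ there, and show that stability forces an asymmetric outcome. But the load-bearing claim in your plan --- that the rotation-aligned profile has a \emph{unique} stable matching because ``the deferred-acceptance lattice collapses to a point'' --- is false, and in fact your whole uniqueness-based strategy cannot be salvaged. In your own $n=3$ construction, men-proposing deferred acceptance gives $m_i\mapsto w_{i+1}$ (every man proposes to his top choice and no woman receives two proposals), while women-proposing deferred acceptance gives $w_i\mapsto m_{i+1}$, i.e.\ $m_i\mapsto w_{i-1}$; these are the two distinct $3$-cycles, so there are (at least) two stable matchings. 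More fundamentally: if $\succsim$ is symmetric and $\mu$ is stable at $\succsim$, then $\sigma\ast\mu$ is stable at $\sigma(\succsim)=\succsim$, so the set of stable matchings is closed under $\sigma\ast(\cdot)$. Hence a symmetric profile with a \emph{unique} stable matching necessarily has a \emph{symmetric} stable matching, and can never produce the contradiction you want. The two requirements you flagged as ``pulling in opposite directions'' are actually incompatible.

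The repair is exactly what the paper does: drop uniqueness and instead show that \emph{every} stable matching at the symmetric rotation profile is asymmetric. For three couples this means (i) enumerating the stable set --- any matching other than the two rotations contains some pair $(m_i,w_i)$, and since some $m_j$ top-ranks $w_i$ while $w_i$ ranks $m_i$ last, $(m_j,w_i)$ blocks it, so the stable set is exactly the $M$-optimal and $W$-optimal rotations --- and (ii) checking directly that neither rotation is symmetric (e.g.\ $(m_1,w_2)$ is matched but $(m_2,w_1)$ is not). Then whichever stable matching $f$ selects, $f(\succsim)\neq\sigma\ast f(\succsim)$, contradicting gender-neutrality. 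Your construction and your observation that transpositions are symmetric (explaining why $n\geq 3$ is needed) are both correct; only the uniqueness step needs to be replaced by this two-matching enumeration.
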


\begin{proof}
Fix a matching problem with three couples $\{(m_1,w_1),(m_2,w_2),(m_3,w_3)\}$. Suppose $f$ was an stable mechanism that is gender-neutral with respect to $\sigma$. Fix the following preferences

\[
\begin{array}{c}
\succsim_{m_1} \\
\hline
w_3 \\
w_2 \\
w_1
\end{array}
\hspace{.2cm}
\begin{array}{c}
\succsim_{m_2} \\
\hline
w_1 \\
w_3 \\
w_2
\end{array}
\hspace{.2cm}
\begin{array}{c}
\succsim_{m_3} \\
\hline
w_2 \\
w_1 \\
w_3
\end{array}
\hspace{.5cm}
\begin{array}{c}
\succsim_{w_1} \\
\hline
m_3 \\
m_2 \\
m_1
\end{array}
\hspace{.2cm}
\begin{array}{c}
\succsim_{w_2} \\
\hline
m_1 \\
m_3 \\
m_2
\end{array}
\hspace{.2cm}
\begin{array}{c}
\succsim_{w_3} \\
\hline
m_2 \\
m_1 \\
m_3
\end{array}
\]
Notice that there are two stable matchings
\begin{align*} 
&M-\text{optimal }=\{ (m_1,w_3), (m_2,w_1), (m_3,w_2)\} \\
&W-\text{optimal }=\{ (m_1,w_2), (m_2,w_3), (m_3,w_1)\}
\end{align*}
To see that there is no other stable matching, notice that in any other matching at least one pair of agents with the same index $(m_i,w_i)$ are matched. In this case, there is an agent $m_j$ and an agent $w_j$ who top-rank $w_i$ and $m_i$ respectively. These form blocking pairs. 

Since $\sigma(\succsim)=\succsim$ and since $f$ is gender-neutral with respect to $\sigma$, $f(\succsim)$ must be symmetric. However neither stable matching is symmetric: $(m_1,w_3)$ are matched in the $M$-optimal stable match, but $(\sigma(m_1),\sigma(w_3))=(w_1,m_3)$ are not. Likewise $(m_1,w_2)$ are matched in the $W$-optimal stable match, but $(\sigma(m_1),\sigma(w_2))=(w_1,m_2)$ are not.
\end{proof}

\section{Two-sided Mechanisms with Randomization}

A typical approach to achieving fairness in social choice is to employ randomization. A deterministic mechanism is symmetrized by randomizing the role agents play in the mechanism. For example, one could symmetrize serial dictatorship, where agents are called in a fixed order to choose their preferred matches, by selecting a picking order uniformly at random. This gives a mechanism analogous to the mechanism known as random serial dictatorship (RSD) in house allocation problems. The symmetrized mechanism retains some of the features of the original mechanism. If the original mechanism is strategy-proof, so is the symmetrized version\footnote{In the sense that truthfully reporting gives a lottery which first-order stochastically dominates any other lottery that could be achieved by a misreport.}. On the other hand, there is no guarantee that the symmetrized mechanism will be efficient, even if the original mechanism is. For example, consider the following preferences:
\[
\begin{array}{c}
\succsim_{m_1} \\
\hline
w_3 \\
w_2 \\
w_1
\end{array}
\hspace{.2cm}
\begin{array}{c}
\succsim_{m_2} \\
\hline
w_1 \\
w_3 \\
w_2
\end{array}
\hspace{.2cm}
\begin{array}{c}
\succsim_{m_3} \\
\hline
w_2 \\
w_1 \\
w_3
\end{array}
\hspace{.5cm}
\begin{array}{c}
\succsim_{w_1} \\
\hline
m_3 \\
m_2 \\
m_1
\end{array}
\hspace{.2cm}
\begin{array}{c}
\succsim_{w_2} \\
\hline
m_1 \\
m_3 \\
m_2
\end{array}
\hspace{.2cm}
\begin{array}{c}
\succsim_{w_3} \\
\hline
m_2 \\
m_1 \\
m_3
\end{array}
\]
If we run serial dictatorship, selecting the picking order uniformly at random we get the following random allocation.
\begin{center}
\begin{tabular}{|c| c| c| c|} 
 \hline
  & $w_1$ & $w_2$ & $w_3$ \\
 \hline
 $m_1$ & $1/12$ & $11/24$ & $11/24$ \\ 
 \hline
 $m_2$ & $11/24$ & $1/12$ & $11/24$ \\
 \hline
 $m_3$ & $11/24$ & $11/24$ & $1/12$ \\
 \hline
\end{tabular}
\end{center}
However, the following random allocation gives a first-order stochastic improvement for all agents.
\begin{center}
\begin{tabular}{|c| c| c| c|} 
 \hline
  & $w_1$ & $w_2$ & $w_3$ \\
 \hline
 $m_1$ & $0$ & \hspace{0.18cm}$1/2$\hspace{0.18cm} & \hspace{0.2cm}$1/2$\hspace{0.18cm} \\ 
 \hline
 $m_2$ & \hspace{0.18cm}$1/2$\hspace{0.18cm} & $0$ & $1/2$ \\
 \hline
 $m_3$ & $1/2$ & $1/2$ & $0$ \\
 \hline
\end{tabular}
\end{center}
This is a common issue in randomized mechanisms in a variety of environments \cite{echenique2022efficiency}. In house allocation, \citeasnoun{bogomolnaia2001new} showed that it is inevitable: there is no efficient, strategy-proof and symmetric mechanism. There is a sense in which the mechanism is efficient: the outcome can be decomposed as a lottery over efficient deterministic outcomes. This is known as ex-post efficiency. \citeasnoun{bade2020random} showed that symmetrizing any one of the many efficient and group strategy-proof house allocation mechanism gives nothing other than random serial dictatorship. This, along with the impossibility result of \citeasnoun{bogomolnaia2001new} has lead to interest in the question of whether RSD is the unique ex-post efficient, strategy-proof and symmetric mechanism. 

Symmetrizing our royalty mechanisms gives a negative answer to the same question in two-sided matching. For example when symmetrizing a royalty mechanism where the royals are matched-by-default in each round we get the following allocation given the preferences above.

\begin{center}
\begin{tabular}{|c| c| c| c|} 
 \hline
  & $w_1$ & $w_2$ & $w_3$ \\
 \hline
 $m_1$ & \hspace{0.18cm}$1/9$\hspace{0.18cm} & \hspace{0.18cm}$4/9$\hspace{0.18cm} & \hspace{0.18cm}$4/9$\hspace{0.18cm} \\ 
 \hline
 $m_2$ & $4/9$ & $1/9$ & $4/9$ \\
 \hline
 $m_3$ & $4/9$ & $4/9$ & $1/9$ \\
 \hline
\end{tabular}
\end{center}

Notice that these three allocation matrices are Pareto ranked with the second matrix dominating RSD which in turn dominates the uniform match-by-default outcome. The ranking of RSD and uniform match-by-default is an artifact of this example; there are examples where the Pareto ranking is flipped and in general the two random allocations cannot be ranked.

{\footnotesize
\bibliography{matching}
\bibliographystyle{econometrica}
}

\pagebreak

\appendix

\section{The case of four agents in two-sided matching}\label{appendix: four agents}

In this section we characterize all group strategy-proof, efficient and gender-neutral bilateral matching mechanisms for four agents $\{m_1,m_2,w_1,w_2\}$. In this case $\Sigma^2$ contains only two matchings: $\nu\coloneqq\{(m_1,w_1),(m_2,w_2)\}$ and $\mu\coloneqq\{(m_1,w_2),(m_2,w_1)\}$. Given that there are only two matchings, each agent has exactly two preferences: the agent either prefers $\nu$ or $\mu$. A preference profile $\succsim$ can then be summarized as the set $S$ or all agents $i$ with $\nu(i)\succsim_i\mu(i)$. In the present context it is easier to represent preference profiles as such sets $S$, meaning that mechanisms now map from the collection of all subsets of $N$ to $\{\mu,\nu\}$.

In line with this representation of preference profiles a mechanism partitions the set of all subsets of $N$ into $\Lambda^\nu$ and $\Lambda^\mu$ with the understanding that $f(S)=\nu$ iff $S\in \Lambda^{\nu}$. Our three desiderata the translate to the following requirements in the environment with just two couples. 
\begin{itemize}
    \item The mechanism $f$ is efficient iff $N\in \Lambda^{\nu}$ and $\emptyset \in \Lambda^{\mu}$.
    \item The mechanism $f$ is group strategy-proof\footnote{In fact, group strategy-proofness is equivalent to individual strategy-proofness with two outcomes, so this condition is necessary and sufficient for the mechanism to be strategy-proof.} iff $S\in \Lambda^{\nu}$ and $S\subsetneq S'$ imply that $S'\in \Lambda^{\nu}$.
    \item The mechanism $f$ is gender-neutral if $S\in \Lambda^{\nu}$ implies that $\sigma(S)\in \Lambda^{\nu}$. 
\end{itemize}

Any group strategy-proof mechanism can therefore be represented by a set $\overline{\Lambda}^{\nu}\subset \Lambda^{\nu}$ of all minimal sets $S\in \Lambda^{\nu}$.

\begin{lemma} \label{four agents}
The following list of sets $\overline{\Lambda}^{\nu}$ is - up to renaming - an exhaustive list of all efficient group strategy-proof and gender-neutral mechanisms. 
\begin{itemize}
    \item[a)] $\{S\mid |S|=x\}$ for $x=1,2,3,4$,
    \item[b)] $\{\{m_1,w_1,w_2\}, \{m_1,w_1,m_2\},\{m_2,w_2\}\}$,
    \item[c)] $\{\{m_1,w_1,w_2\}, \{m_1,w_1,m_2\}\}$,
    \item[d)] Any subset of the sets $\{\{m_1,w_1\}\},\{\{m_2,w_2\}\},\{\{m_1,w_2\},\{m_2,w_1\}\},\{\{m_1,m_2\},\{w_1,w_2\}\} $.
\item[e)] $\{\{m_1\},\{w_1\},\{m_2,w_2\}\}$
\item[f)] $\{\{m_1\},\{w_1\}\}$
\end{itemize}
The 4 mechanisms listed in a) are moreover the exhaustive subset of fully symmetric mechanisms.
\end{lemma}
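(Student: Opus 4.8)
\textbf{Proof approach for Lemma \ref{four agents}.}

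The plan is to work entirely in the combinatorial language set up just before the statement: a group strategy-proof mechanism is an up-set (upward-closed family) $\Lambda^\nu$ in the Boolean lattice $2^N$ with $N=\{m_1,m_2,w_1,w_2\}$, pinned down by its antichain of minimal elements $\overline{\Lambda}^\nu$; efficiency says $N \in \Lambda^\nu$ and $\emptyset \notin \Lambda^\nu$; gender-neutrality says $\Lambda^\nu$ is closed under the involution $\sigma$. Since $\sigma$ is an order-preserving involution of $2^N$, a $\sigma$-closed up-set is exactly an arbitrary union of "$\sigma$-orbit up-sets," so the first step is to enumerate the $\sigma$-orbits of subsets of $N$ by size, and to organize the antichains of $2^N$ that are $\sigma$-invariant. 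Because $\sigma$ pairs $m_1$ with $w_1$ and $m_2$ with $w_2$ (after the standard indexing), the orbit structure by size is small enough to handle by hand: singletons split into $\{\{m_1\},\{w_1\}\}$ and $\{\{m_2\},\{w_2\}\}$; the six $2$-sets split into two fixed points $\{m_1,w_1\},\{m_2,w_2\}$ and two orbits of size two, $\{\{m_1,w_2\},\{m_2,w_1\}\}$ and $\{\{m_1,m_2\},\{w_1,w_2\}\}$; the $3$-sets split into the orbit $\{\{m_1,w_1,m_2\},\{m_1,w_1,w_2\}\}$ and the orbit $\{\{m_2,w_2,m_1\},\{m_2,w_2,w_1\}\}$.

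Next I would do a case analysis on the minimum size $k$ of a set in $\overline{\Lambda}^\nu$ (equivalently, the largest $k$ such that no $k$-set lies in $\Lambda^\nu$ but some $(k+1)$-set does), using efficiency to force $0 < k \le 3$. For each $k$, I enumerate which $\sigma$-invariant antichains at level $\le k$ can serve as $\overline{\Lambda}^\nu$, remembering that minimality forbids any set in $\overline{\Lambda}^\nu$ from containing another. The case $k=3$ forces $\overline{\Lambda}^\nu = \{S : |S|=3\}$ (all of level $3$, by $\sigma$-invariance you cannot take just one orbit without the antichain property failing against... actually both orbits are antichains, but $\sigma$-invariance plus the requirement that $N\in\Lambda^\nu$ is automatic; here one must check that taking only one of the two size-$3$ orbits is ruled out not by these axioms but appears in the list as part of (c)/(b) — so I need to be careful and instead case on whether $\overline\Lambda^\nu$ contains a set of size $1$, then size $2$, then size $3$). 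Cleaner: I case on $j := \min\{|S| : S \in \overline\Lambda^\nu\}$. If $j=1$: the minimal $1$-sets form a $\sigma$-invariant subset of the two orbits $\{m_1\},\{w_1\}$ and $\{m_2\},\{w_2\}$; if it is the whole thing we get (a) with $x=1$; if it is one orbit, say $\{m_1\},\{w_1\}$, then no $2$- or $3$-set containing $m_1$ or $w_1$ can be minimal, and the only room left is the orbit $\{m_2,w_2\}$ (a fixed point) at level $2$ — either we add it, giving (e), or we don't, giving (f); adding anything at level $3$ not containing $m_1,w_1$ is impossible since every $3$-set meets $\{m_1,w_1\}$. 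If $j=2$: the level-$2$ part of $\overline\Lambda^\nu$ is a $\sigma$-invariant antichain inside $\{\{m_1,w_1\},\{m_2,w_2\},\{m_1,w_2\},\{m_2,w_1\},\{m_1,m_2\},\{w_1,w_2\}\}$, i.e. a union of some of the four orbit-pieces listed in (d); then I check what $3$-sets, if any, may still be minimal (a $3$-set is minimal only if it contains none of the chosen $2$-sets), and verify that whenever the level-$2$ part is exactly one of the special configurations, adding the orbit $\{\{m_1,w_1,m_2\},\{m_1,w_1,w_2\}\}$ on top is consistent and $\sigma$-invariant, producing (b) and (c) and the remaining entry of (d) with a $3$-set added, modulo renaming. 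If $j=3$: $\overline\Lambda^\nu$ lives at level $3$ only; a single size-$3$ orbit is a valid $\sigma$-invariant antichain, but by the symmetry $m_i \leftrightarrow w_i$ is \emph{not} the relevant relabeling here — the "up to renaming" in the statement lets me also permute indices $1 \leftrightarrow 2$ and swap the roles, so one orbit is equivalent to the other and (c) already covers "one orbit"; taking both orbits gives (a) with $x=3$. The final sentence, that (a) lists exactly the fully symmetric mechanisms, follows because a fully symmetric up-set is closed under the whole symmetric group on $N$, hence is a union of "all sets of a given size," and efficiency deletes $x=0$.

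The main obstacle I anticipate is bookkeeping rather than conceptual: making the "up to renaming" equivalence precise enough that the list is genuinely exhaustive and non-redundant, i.e. identifying exactly which relabelings of $N$ (the index swap $1\leftrightarrow2$, possibly composed with a gender swap that preserves $\sigma$) are allowed, and then checking that every $\sigma$-invariant efficient up-set is carried by one of these relabelings into exactly one of (a)–(f). A secondary nuisance is verifying the antichain condition in the mixed-level cases (b), (c), (e): one must confirm e.g. in (b) that $\{m_2,w_2\}$ is not contained in $\{m_1,w_1,w_2\}$ or $\{m_1,w_1,m_2\}$ (it is not), so all three listed sets are genuinely minimal, and that their up-closure is $\sigma$-invariant and efficient. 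None of these steps is deep, but the enumeration must be laid out carefully to be convincing.
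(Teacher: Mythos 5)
Your proposal is correct and takes essentially the same route as the paper: a direct enumeration of the $\sigma$-invariant antichains generating efficient up-sets, organized by the size of the minimal sets (the paper cases on whether $\overline{\Lambda}^{\nu}$ contains a three-agent set, a singleton, or only two-agent sets, and uses the $\nu\leftrightarrow\mu$ exchange to relate (b),(c) to (e),(f) rather than enumerating both). One small bookkeeping fix: your case split on $j=\min\{|S|:S\in\overline{\Lambda}^{\nu}\}$ stops at $j=3$ and silently drops $j=4$, i.e. $\overline{\Lambda}^{\nu}=\{N\}$ (the unanimity rule), which is item (a) with $x=4$.
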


For notational convenience Lemma \ref{four agents} includes a certain number of double counting. The mechanism where $\nu$ is chosen if at least two agents prefer it for example falls in groups a) and d) above. We could have also economized the definition by noting that the mechanisms listed in b) and c) are equivalent to the mechanisms in e) and f) upon exchanging the matchings $\nu$ and $\mu$. The mechanism in f) is the matched-by-default mechanism with $m_1$ and $w_1$ the royal couple, the mechanism defined by $\{m_1,w_1\}$, which appears in d), is the unmatched-by-default mechanism with $m_1$ and $w_1$ as the royals.

\begin{proof}
For each of the mechanisms group strategy-proofness holds by definition since $\Lambda^{\nu}$ is in each case defined as the set of all supersets in $\overline{\Lambda}^{\nu}$. Each mechanism is Pareto optimal by the preceding argument together with the observation that $\emptyset\notin \overline{\Lambda}^{\nu}$ for any of the lists sets of subsets. Finally each mechanism is gender-neutral since each of the generating sets is gender-neutral. 

So we only have to show that the above list is exhaustive. To do so first consider $\overline{\Lambda}^{\nu}$ with $\{m_1,w_1,w_2\}\in \overline{\Lambda}^{\nu}$. By gender-neutrality  $\sigma(\{m_1,w_1,w_2\})=\{m_1,w_1,m_2\}\in \overline{\Lambda}^{\nu}$ must hold. Since  $\{m_1,w_1,w_2\},\{m_1,w_1,m_2\}\in \overline{\Lambda}^{\nu}$ no subset of either $\{m_1,w_1,w_2\}$ or $\{m_1,w_1,m_2\}$ can be in $\overline{\Lambda}^{\nu}$. So the latter cannot contain any of the singleton sets. $\{m_2,w_2\}$ is the only two agent set that can be contained in $\overline{\Lambda}^{\nu}$. We in sum found that the sets $\overline{\Lambda}^{\nu}$ which contain a three agent set are exactly the sets listed in a), b) and c). Exchanging $\nu$ and $\mu$ in the agents preferences we see that the sets $\overline{\Lambda}^{\nu}$ which contain a singleton are exactly the sets listed in  a)  e) and f). 

Finally to see that d) is an exhaustive list of mechanisms generated by two agent sets, note that d) lists the complete set of gender-neutral sets of two agent sets. 
\end{proof}

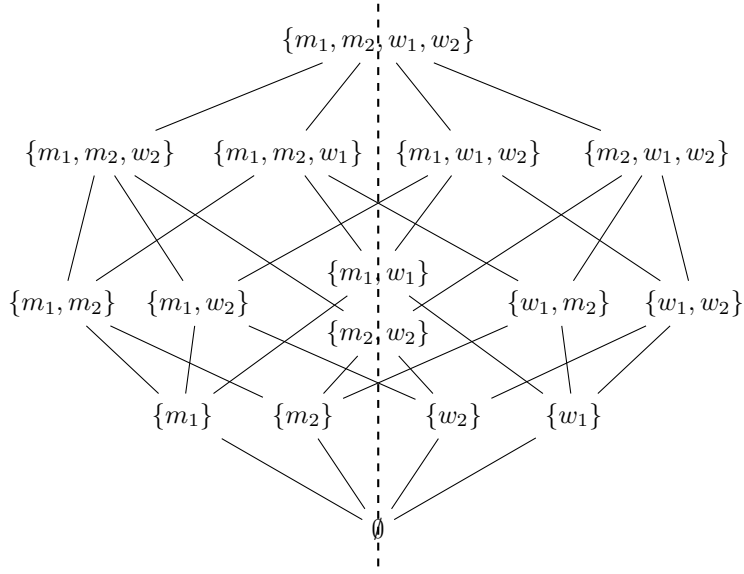
\begin{figure}
    \centering
    \begin{tikzpicture}
    \node(empty)      at (0,0)           {$\emptyset$};
    \node(m2)         at (-1,1.5)        {$\{m_2\}$};
    \node(m1)         at (-2.6,1.5)      {$\{m_1\}$};
    \node(w2)         at (1,1.5)         {$\{w_2\}$};
    \node(w1)         at (2.6,1.5)       {$\{w_1\}$};
    \node(m1w2)       at (-2.4,3)          {$\{m_1,w_2\}$};
    \node(m1m2)       at (-4.2,3)        {$\{m_1,m_2\}$};
    \node(w1m2)       at (2.4,3)           {$\{w_1,m_2\}$};
    \node(w1w2)       at (4.2,3)         {$\{w_1,w_2\}$};
    \node(m1w1)       at (0,3.4)         {$\{m_1,w_1\}$};
    \node(m2w2)       at (0,2.6)         {$\{m_2,w_2\}$};
    \node(-m2)        at (1.2,5)      {$\{m_1,w_1,w_2\}$};
    \node(-m1)        at (3.7,5)      {$\{m_2,w_1,w_2\}$};
    \node(-w2)        at (-1.2,5)       {$\{m_1,m_2,w_1\}$};
    \node(-w1)        at (-3.7,5)       {$\{m_1,m_2,w_2\}$};
    \node(N)          at (0,6.5)           {$\{m_1,m_2,w_1,w_2\}$};

    \draw(empty) -- (m1);
    \draw(empty) -- (m2);
    \draw(empty) -- (w1);
    \draw(empty) -- (w2);
    \draw(m1)    -- (m1m2);
    \draw(m1)    -- (m1w1);
    \draw(m1)    -- (m1w2);
    \draw(m2)    -- (m1m2);
    \draw(m2)    -- (w1m2);
    \draw(m2)    -- (m2w2);
    \draw(w2)    -- (m1w2);
    \draw(w2)    -- (w1w2);
    \draw(w2)    -- (m2w2);
    \draw(w1)    -- (m1w1);
    \draw(w1)    -- (w1w2);
    \draw(w1)    -- (w1m2);
    \draw(m1w2)  -- (-m2);
    \draw(m1w2)  -- (-w1);
    \draw(m1m2)  -- (-w1);
    \draw(m1m2)  -- (-w2);
    \draw(w1m2)  -- (-m1);
    \draw(w1m2)  -- (-w2);
    \draw(w1w2)  -- (-m1);
    \draw(w1w2)  -- (-m2);
    \draw(m1w1)  -- (-m2);
    \draw(m1w1)  -- (-w2);
    \draw(m2w2)  -- (-m1);
    \draw(m2w2)  -- (-w1);
    \draw(-m1)   -- (N);
    \draw(-m2)   -- (N);
    \draw(-w1)   -- (N);
    \draw(-w2)   -- (N);

    \draw [thick, dashed] (0,-0.5) -- (0,7);
    \end{tikzpicture}
    \caption{Four agents: Any gender-neutral, strategy-proof mechanism can be expressed as a set of nodes in the lattice above that is closed upwards and which is symmetric with respect to the the reflection over the vertical dotted line.}
    \label{lattice}
\end{figure}

\section{The case of two agents in one-sided matching}

Suppose that $N=\{1,2\}$. With just two agents, there are exactly two outcomes in one-sided matching: the matching $\{(1,2)\}$ and the matching $\{(1),(2)\}$.

\begin{definition}
We say that a mechanism $f$ is a \textbf{dictatorship} if there is an agent $k$ such that $f(\succsim)(k)=top(\succsim_k)$ for all $\succsim$.
\end{definition}

\begin{definition}
We say that a mechanism $f$ is a \textbf{unanimity rule} one of the two matchings is chosen unless both agree top-rank the other. 
\end{definition}

\begin{lemma}\label{lemma: two roomates}
If $N=\{1,2\}$, then
a one-sided mechanism $f:\Omega^1\to \Sigma^1$ is group strategy-proof and efficient if and only if it is either a dictatorship or a unanimity rule.
\end{lemma}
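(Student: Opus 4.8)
The plan is to exhaust the small case directly. With $N=\{1,2\}$ there are only two matchings, call them $a=\{(1,2)\}$ (both matched) and $b=\{(1),(2)\}$ (both single), and each agent has exactly two possible preferences: either $a \succ_i b$ or $b \succ_i a$. So a preference profile is pinned down by which of the four pairs $(a{\succ}_1 b \text{ or } b{\succ}_1 a,\ a{\succ}_2 b \text{ or } b{\succ}_2 a)$ occurs, and a mechanism is just a function from these four profiles to $\{a,b\}$. I would first record the content of efficiency in this setting: if both agents prefer $a$, the mechanism must output $a$; if both prefer $b$, it must output $b$. That fixes $f$ on two of the four profiles and leaves only the two "conflicting" profiles — where exactly one agent prefers $a$ — to be determined.

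Next I would impose strategy-proofness on these two remaining profiles. Since with two outcomes group strategy-proofness coincides with (individual) strategy-proofness, the relevant constraint is that no single agent can, by flipping her reported preference, move the outcome from her less-preferred to her more-preferred matching. I would check what happens at the profile where $a \succ_1 b$ and $b \succ_2 a$ (the other conflicting profile is symmetric). If $f$ outputs $a$ here, then at this profile agent $2$ gets her worse outcome; for strategy-proofness agent $2$ must not be able to flip to get $b$, which constrains the value of $f$ at the profile $(b{\succ}_1 b \ldots)$ — wait, more precisely, the neighboring profiles reached by a unilateral deviation are the "both prefer $a$" profile (deviation by $2$) and the "both prefer $b$" profile (deviation by $1$), both already fixed by efficiency. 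So the strategy-proofness constraints here are automatically about comparing $f$ at the conflicting profile with $f=a$ and $f=b$ at the two unanimous profiles. Carrying this out: if $f=a$ at $(a{\succ}_1b,\ b{\succ}_2a)$, then the profitable deviation for $2$ would be to report $a{\succ}_2 b$, but that leads to $f=a$ still, so no gain — fine; and $1$ already gets her favorite. So $f=a$ there is consistent. Symmetrically $f=b$ is consistent. The genuine restriction comes from requiring consistency across both conflicting profiles simultaneously, and one finds the surviving patterns are exactly: $f\equiv a$ on both conflicting profiles, $f\equiv b$ on both, agent $1$ dictator, or agent $2$ dictator.

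Then I would match these four surviving patterns to the two named mechanisms: "always $a$ on conflicts, subject to efficiency" is the unanimity rule with default $b$ (i.e. $b$ unless both prefer $a$... actually the phrasing in the statement makes $a$ the default — either orientation of the unanimity rule appears), "always $b$ on conflicts" is the unanimity rule with the other default, and the two dictatorship patterns are precisely the dictatorship of $1$ and of $2$. Conversely I would verify each of dictatorship and unanimity is efficient and strategy-proof, which is immediate: a dictator's top choice is trivially achievable and the dictator has nothing to gain by lying while the other agent's report is irrelevant; and the unanimity rule gives each agent their top outcome whenever the outcome differs from the default, and when the outcome equals the default the only way to change it is to make the other agent's report align, which the agent cannot do.

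I do not expect a serious obstacle here — the whole argument is a finite case check over four profiles and two outcomes. The one place to be careful is the bookkeeping of which unilateral deviations are available and the fact that the two "conflicting" profiles must be handled together rather than independently (handling them independently would spuriously admit a non-strategy-proof mechanism that picks $a$ at one conflict and $b$ at the other); I would make sure to point out why that mixed pattern violates strategy-proofness. It is also worth a sentence noting that this lemma is exactly the $n=2$ base case excluded from Theorem \ref{roommates mechanism}, since with two agents the problem is a two-alternative social choice problem and the Gibbard–Satterthwaite machinery degenerates, which is why dictatorship and unanimity both survive.
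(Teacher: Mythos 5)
Your argument is the same as the paper's: enumerate the four profiles, note that efficiency pins down the two unanimous ones, and observe that the four possible assignments to the two conflict profiles are exactly the two unanimity rules and the two dictatorships, all strategy-proof. One small correction: your closing worry about a ``non-strategy-proof mixed pattern that picks $a$ at one conflict and $b$ at the other'' is spurious --- those two mixed patterns are precisely the two dictatorships (which your own list already includes), so all four assignments survive and nothing needs to be ruled out.
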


\begin{proof}
Since  $N=\{1,2\}$ there exist exactly two matchings:  one pairs the two agents, the other keeps both single. A mechanism is then efficient if and only if it chooses the matching preferred by both agents if they agree. The four such mechanisms map the two profiles where the agents disagree to the two different matchings. It is easy to check that these four mechanisms correspond to the two dictatorships and the two unanimity rules, and that they are strategy-proof. \end{proof}

\section{The proof of Theorem \ref{roommates mechanism}}

\subsection{Sequential Dictatorships are efficient and group strategy-proof.}

\begin{lemma}\label{lemma: sufficiency}
Sequential  Dictatorships are efficient and group strategy-proof.
\end{lemma}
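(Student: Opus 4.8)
The plan is to prove the two properties separately, in each case exploiting the recursive structure of the sequential dictatorship. Write $\phi$ for the picking order and let $i_1 = \phi(\emptyset)$ be the first dictator. The key observation is that after $i_1$ picks her favorite partner (possibly herself), the mechanism restricted to the remaining agents is again a sequential dictatorship, with picking order $\phi$ restricted to the submatchings extending $i_1$'s choice. This sets up a clean induction on $|N|$, with the two-agent (really one- or two-agent) base case handled by Lemma \ref{lemma: two roomates} (or trivially).

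For efficiency, I would argue as follows. Fix $\succsim$ and let $\mu = f(\succsim)$. Suppose $\nu$ Pareto dominates $\mu$. Since $\mu(i_1) = \ptop(\succsim_{i_1})$ is $i_1$'s globally favorite partner, $\nu(i_1) \succsim_{i_1} \mu(i_1)$ forces $\nu(i_1) = \mu(i_1)$. If $i_1$ is matched to herself in $\mu$, then $\nu$ agrees with $\mu$ on $i_1$ and hence $\nu$ restricted to $N \setminus \{i_1\}$ Pareto dominates $\mu$ restricted to $N \setminus \{i_1\}$; but the latter is the outcome of the sequential dictatorship on $N \setminus \{i_1\}$, contradicting the inductive hypothesis. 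If $i_1$ is matched to some $j \neq i_1$, then $\nu$ also matches $i_1$ to $j$, so both agree on $\{i_1, j\}$ and we again pass to the residual problem on $N \setminus \{i_1, j\}$, which is a sequential dictatorship whose outcome is Pareto optimal by induction — contradiction. Hence no such $\nu$ exists.

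For group strategy-proofness I would again induct. Fix $\succsim$, a coalition $S$, and a joint misreport $\succsim'_S$, and suppose every member of $S$ is weakly better off and at least one strictly. Consider first whether $i_1 \in S$. If $i_1 \notin S$, then $i_1$'s report is unchanged, so $i_1$ picks the same partner $j = \mu(i_1)$ under both profiles; if $j \neq i_1$ then $j$'s partner is also unchanged, and in either case the induced problem on the remaining agents is a sequential dictatorship on which the coalition $S$ (minus $j$ if $j \in S$, who is unaffected) is manipulating — contradicting the inductive hypothesis once we check that any strictly-improved member of $S$ lies among the remaining agents. If $i_1 \in S$, then since $i_1$ already receives her globally favorite partner under truthful reporting, she cannot be made strictly better off, so $i_1$ is weakly better off means she is exactly as well off; strictness of the deviation must come from some other member. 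Here I would note that $i_1$ can only weakly gain by keeping the same partner (any different partner is strictly worse for her), so again $i_1$'s realized partner is unchanged, and we reduce to the residual sequential dictatorship as before. In every case we contradict the inductive hypothesis.

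The main obstacle is bookkeeping in the group strategy-proofness argument, specifically making sure the reduction to the residual problem is legitimate: one must verify that the partner $i_1$ actually receives is the same before and after the deviation (which follows from $i_1$'s partner being her top choice, so any change is a strict loss, impossible for a member of a weakly-improving coalition and irrelevant for non-members), and then that the restricted picking order on $N \setminus (\{i_1\} \cup \{j\})$ genuinely is the one defining the residual sequential dictatorship — this is exactly condition (3) in the definition of a picking order. The efficiency direction is comparatively routine. I would also dispatch the degenerate base cases ($|N| \le 2$) at the outset by appeal to Lemma \ref{lemma: two roomates}.
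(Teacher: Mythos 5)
Your argument is correct and rests on the same core observation as the paper's proof: the dictator at the first point where the outcome diverges from the alleged Pareto improvement or profitable deviation must end up strictly worse, since her truthful top remaining choice is still available there. The paper simply runs this as a direct one-shot argument (naming the first divergent round $k^1$ or $k^2$ and the dictator there) rather than packaging it as an induction on $|N|$ with a reduction to the residual submechanism, but the content is the same.
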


\begin{proof}
Fix a sequential dictatorship $SD:\Omega^1\to \Sigma^1$.  Suppose $SD$ was not efficient or not group strategy-proof. So suppose there either exists some profile $\succsim$ and either a matching $\mu$ Pareto dominates $SD(\succsim)$ or there is a group $S\subset N$  and a deviation $\succsim'_S$ such that all members of $S$ weakly prefer $SD(\succsim'_S,\succsim_{-S})$ to $SD(\succsim)$ while some members of the group strictly hold this preference. Define $k^1$ and $k^2$ as the first rounds at which the $SD(\succsim)$-algorithm finds a match that differs from respectively $\mu$ and $SD(\succsim'_{S},\succsim_{-S})$. If only two agents remain unmatched at step $k^1$ or $k^2$ we obtain a contradiction via Lemma \ref{lemma: two roomates}. 

So say that at least three agents remain unmatched at steps $k^1$ and $k^2$. Say $i^1$ and $i^2$ are the dictators at these steps, so that $SD(\succsim)(i^1)\neq \mu(i^1)$ and $SD(\succsim)(i^2)\neq SD(\succsim'_{S},\succsim_{-S})(i^2)$.
Since the dictator $i^x$ gets matched with his top-ranked unmatched partner at Step $k^x$, and since $\mu(i^1)$ and $SD(\succsim'_{S},\succsim_{-S})(i^2)$ respectively stay available at Steps
 $k^1$ and $k^2$, we get the contradictions $SD(\succsim)(i^1)\succ_i \mu(i^1)$ and $SD(\succsim)(i^2)\succ_i SD(\succsim'_{S},\succsim_{-S})(i^2)$.
\end{proof}

\subsection{Submechanisms}

For any two different agents $j,k\in N$ define
$\Omega^{-j}$ and $\Omega^{-j,k}$ as the restriction of $\Omega^{1}$ to all agents but agent $j$ and $j,k$ respectively.\footnote{That is, $\Omega^{-j}$ is the set of preference profiles for all agents other than $j$ and such that $j$ is excluded from all other agents' preferences.  $\Omega^{-j,k}$ is similar except now $j$ and $k$ are excluded. }
For any group strategy-proof and efficient mechanism $f$, define two mechanisms $f^{-j}$ and $f^{-j,k}$ on $\Omega^{-j}$ and $\Omega^{-j,k}$ respectively as follows. Set $f^{-j}(\succsim^{-j})(i)=f(\succsim)(i)$ for all $\succsim^{-j}\in \Omega^{-j}$ and $i\in N\setminus \{j\}$ where $\succsim$ is any preference profile such that (1) when restricted to $N-\{j\}$ it gives $\succsim^{-j}$ and (2) all agents $i\neq j$ rank $j$ at the bottom while agent $j$ ranks herself at the top. Similarly let  $f^{-j,k}(\succsim^{-{j,k}})(i)=f(\succsim)(i)$ for all $\succsim^{-{j,k}}\in \Omega^{-j,k}$ and $i\in N\setminus \{j,k\}$ where $\succsim$ is any preference profile such that (1) when restricted to $N-\{j,k\}$ it gives $\succsim^{-j,k}$ and (2) all agents $i\neq j,k$ rank $j$ and $k$ at the bottom while agent $j$ and $k$ ranks each other at the top.
To simplify notation we drop the superscripts $-j$ and $-j,k$
from the preference profiles $\succsim^{-j}$ and $\succsim^{-j,k}$ in the sequel; $f^{-j}(\succsim)$ then stands for the application of $f^{-j}$ to the restriction of $\succsim$ to all agents but $j$.

\begin{lemma}\label{lemma: submechanisms}
Say $f$ is a group strategy-proof and efficient mechanism and $j,k\in N$ are two different agents. Then the
 mechanisms $f^{-j}$ and $f^{-j,k}$  are well defined, group strategy-proof and efficient.
 Moreover if $f(\succsim)(j)=j$, then $f(\succsim)(i)=f^{-j}(\succsim)(i)$ for all $i\in N\setminus \{j\}$ and if $f(\succsim)(j)=k$, then $f(\succsim)(i)=f^{-j,k}(\succsim)(i)$ for all $i\in N\setminus \{j,k\}$.
\end{lemma}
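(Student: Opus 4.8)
The plan is to verify each of the three assertions of Lemma \ref{lemma: submechanisms} in turn, leaning throughout on the ``bottom-ranking trick'' that defines the submechanisms. First I would show that $f^{-j}$ is \emph{well defined}: I need that the value $f(\succsim)(i)$ for $i \neq j$ does not depend on which completion $\succsim$ of $\succsim^{-j}$ I pick, as long as every $i \neq j$ bottom-ranks $j$ and $j$ top-ranks herself. Since $j$ top-ranks herself, efficiency forces $f(\succsim)(j) = j$ — otherwise $j$ could be unmatched and improve, and whoever was matched to $j$ can only weakly suffer. Actually more care is needed: I would argue directly that in $f(\succsim)$ no agent $i \neq j$ is matched to $j$, because $i$ bottom-ranks $j$, so matching $i$ to $j$ together with leaving $j$ otherwise single is Pareto-dominated unless... — here one must rule out the degenerate case, but since $j$ strictly prefers being single, the matching where $i$ and $j$ are both single (and everyone else unchanged) Pareto-dominates, giving efficiency a contradiction. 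Hence $f(\succsim)$ restricts to a matching on $N \setminus \{j\}$, and to show independence of the completion I would use strategy-proofness: if two completions $\succsim$ and $\succsim'$ (differing only in where agents rank $j$, always at the bottom, and where $j$ ranks others, always below herself) gave different outcomes on $N\setminus\{j\}$, then moving one agent at a time from one profile to the other, strategy-proofness of $f$ plus the fact that each such agent is indifferent over all outcomes in which they are not matched to $j$ (and they are never matched to $j$) forces the outcome to be unchanged. The same argument handles $f^{-j,k}$, now with $j$ and $k$ top-ranking each other and everyone else bottom-ranking both; efficiency forces $(j,k) \in f(\succsim)$ and the outcome restricts to a matching on $N \setminus \{j,k\}$.

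Next I would establish that $f^{-j}$ and $f^{-j,k}$ inherit \emph{efficiency and group strategy-proofness}. For efficiency: if some $\mu$ on $N \setminus \{j\}$ Pareto-dominated $f^{-j}(\succsim^{-j})$ at $\succsim^{-j}$, then the matching $\mu \cup \{(j)\}$ would Pareto-dominate $f(\succsim)$ at the chosen completion $\succsim$ (the extra agent $j$ is matched to herself in both, her top choice), contradicting efficiency of $f$. For group strategy-proofness: a profitable group deviation by $S \subseteq N \setminus \{j\}$ in the submechanism lifts to a profitable group deviation by the same $S$ in $f$ — I just have each deviating agent keep $j$ bottom-ranked in the reported profile, which is legitimate since the deviation in the submechanism never involves $j$, and the outcome on $N \setminus \{j\}$ matches by the well-definedness already proved. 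The $f^{-j,k}$ case is identical with $j$ replaced by the pair $\{j,k\}$, matched together.

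Finally, the \emph{``moreover''} clause. Suppose $f(\succsim)(j) = j$ at some arbitrary profile $\succsim$ (not necessarily one of the special completions). I want $f(\succsim)(i) = f^{-j}(\succsim)(i)$ for all $i \neq j$. Let $\succsim'$ be the completion used to define $f^{-j}$ that agrees with $\succsim$ on $N \setminus \{j\}$ when restricted there, has everyone bottom-rank $j$, and has $j$ top-rank herself. I would pass from $\succsim$ to $\succsim'$ by changing preferences one agent at a time and invoking strategy-proofness. The subtle point — and I expect this to be the main obstacle — is that when I move an agent $i$'s ranking of $j$ down to the bottom, I must know that $i$ is not matched to $j$ in the current matching, so that $i$ is indifferent across this change of report and strategy-proofness pins the outcome. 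This is where the hypothesis $f(\succsim)(j) = j$ is used: starting from $\succsim$, where $j$ is single, I first move $j$'s own report to top-rank herself (strategy-proofness: $j$ is already matched to her at-least-as-good option, namely herself, so by strategy-proofness applied to $j$ the outcome for $j$ stays $j$, hence — since $j$ stays single — by strategy-proofness again the whole outcome is unchanged), and then, with $j$ now guaranteed single, I move each other agent's report so as to bottom-rank $j$; none of them is matched to $j$, so each such change leaves the outcome fixed. After all changes we reach $\succsim'$ with the same outcome on $N \setminus \{j\}$ as $\succsim$, and by definition $f^{-j}(\succsim)(i) = f(\succsim')(i)$, giving the claim. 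The argument for $f(\succsim)(j) = k \implies f(\succsim)(i) = f^{-j,k}(\succsim)(i)$ is the analogous one-agent-at-a-time argument, first moving $j$ and $k$ to top-rank each other (legitimate since they are already matched together, the outcome they most prefer among the changes), then moving every other agent to bottom-rank both $j$ and $k$ (legitimate since no other agent is matched to $j$ or $k$).
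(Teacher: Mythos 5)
Your proof is correct and follows essentially the same route as the paper: well-definedness because efficiency pins $j$ (respectively the pair $(j,k)$) to its forced match and group strategy-proofness then fixes everyone else, and inheritance of efficiency and group strategy-proofness by lifting a dominating matching or a profitable group deviation back to $f$ at the special completion. You are in fact more complete than the paper, whose written proof does not explicitly verify the ``moreover'' clause; your one-agent-at-a-time argument for it (first adjusting $j$'s, or $j$'s and $k$'s, own reports, then pushing the unmatched royal(s) to the bottom of each other agent's report, using at each step that the agent's match is unchanged and then non-bossiness via group strategy-proofness) is sound.
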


\begin{proof}
The arguments are similar for $f^{-j}$ and $f^{-j,k}$, so we will just prove this for $f^{-j}$.
To see that $f$ is well-defined, notice that for any $\succsim^{-j}$ and any profiles $\succsim$ and $\succsim'$ where the conditions described above hold (namely (1) when restricted to $N-\{j\}$ the profiles $\succsim$ and $\succsim'$  give $\succsim^{-j}$ and (2) all agents $i\neq j$ rank $j$ at the bottom while agent $j$ ranks herself at the top in both profiles) $\succsim$ and $\succsim'$ can only possibly differ in $j$'s ranking. However since $f$ is efficient, in both-cases $j$ will be matched to herself. By group strategy-proofness, all other agents matches must also remain the same.
Now suppose, by way of contradiction, that $f^{-j}$ is not group strategy-proof. Then there is a preference profile $\succsim^{-j}$, some coalition of agents $S$, and an profile $\succsim'_{S}$ for the agents in $S$ so that $f^{-j}(\succsim_S',\succsim^{-j}_{-S})(i)\succsim_i f^{-j}(\succsim^{-j})(i)$ for all $i$ in $S$ and for some $k$ in $S$, $f^{-j}(\succsim_S',\succsim^{-j}_{-S})(k)\succsim_k f^{-j}(\succsim^{-j})(k)$.
Let $\succsim_S^*$ be the profile for the agents in $S$ such that all agents bottom-rank $j$ and their ranking is the same as in $\succsim_S$ otherwise. Likewise, let $\succsim^{-j*}$ be the profile where all agents bottom-rank $j$ and their preferences are the same as $\succsim^{-j}$ otherwise. Let $\succsim_j$ be an arbitrary preference for $j$ where $j$ top-ranks herself. By definition $f^{-j}(\succsim^{-j})(i)=f(\succsim_j,\succsim^{-j*})(i)$ and $f^{-j}(\succsim_S',\succsim^{-j}_{-S})(i)=f(\succsim_S^*,\succsim_j,\succsim^{-j*}_{-S})(i)$ for all $i\neq j$. However this leads to a violation of group strategy-proofness for $f$, a contradiction. Finally, efficiency follows immediately from the efficiency of $f$.
\end{proof}

\subsection{The Inductive Structure of the Proof}

\begin{lemma}\label{lemma: how to do induction}
If each efficient and group strategy-proof mechanism $f:\Omega^1\to \Sigma^1$ for three or more agents has a dictator, then any efficient group strategy-proof mechanism is a sequential dictatorship. 
\end{lemma}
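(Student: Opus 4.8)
The statement to prove is Lemma~\ref{lemma: how to do induction}: assuming every efficient, group strategy-proof mechanism on $\Omega^1$ for three or more agents has a dictator (an agent $k$ with $f(\succsim)(k)=\ptop(\succsim_k)$ for all $\succsim$), conclude that every efficient, group strategy-proof mechanism $f:\Omega^1\to\Sigma^1$ is a sequential dictatorship. The plan is to proceed by strong induction on $|N|$. The base case $|N|=2$ is handled directly by Lemma~\ref{lemma: two roomates}: a dictatorship is visibly a one-step sequential dictatorship with $\phi(\emptyset)=k$, and a unanimity rule is a sequential dictatorship where $\phi(\emptyset)$ is either agent — say agent $1$ — and the picking order is defined on the two resulting submatchings so that it again picks the default; this is just an instance where the ``dictator'' in a length-one-or-two picking order reproduces the unanimity outcome. (Strictly, one should note that with two agents a unanimity rule is a sequential dictatorship in the degenerate sense that both agents end up in the one forced pair or both single; the appendix's definition of picking order allows $\nu\cup\{(i)\}$ and $\nu\cup\{(i,j)\}$ to be matchings, which is exactly this terminal situation.)

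For the inductive step, suppose $|N|\geq 3$ and the lemma holds for all smaller agent sets. Let $f$ be efficient and group strategy-proof on $\Omega^1$ for $N$. By hypothesis $f$ has a dictator $k$; set $\phi(\emptyset)=k$. Now I would invoke Lemma~\ref{lemma: submechanisms}: for each agent $j\neq k$, the submechanism $f^{-k,j}$ on $N\setminus\{k,j\}$ and the submechanism $f^{-k}$ on $N\setminus\{k\}$ are well-defined, efficient, and group strategy-proof, and they compute the continuation outcome. Concretely, when the dictator $k$ top-ranks staying single, $f(\succsim)$ restricted to $N\setminus\{k\}$ equals $f^{-k}(\succsim)$; when $k$ top-ranks agent $j$, $f(\succsim)$ restricted to $N\setminus\{k,j\}$ equals $f^{-k,j}(\succsim)$. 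Each of these submechanisms is over an agent set of size $|N|-1$ or $|N|-2$, hence by the induction hypothesis each is a sequential dictatorship, say with picking order $\phi^{-k}$, resp.\ $\phi^{-k,j}$ for each $j$. (Here one must check the induction hypothesis applies even when the smaller set has only two agents — that is covered by the base case above — or only one agent, which is trivial.)

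The final step is to stitch these picking orders into a single picking order $\phi$ for $f$. Define $\phi(\emptyset)=k$. After $k$'s choice we are at submatching $\{(k)\}$ (if $k$ stayed single) or $\{(k,j)\}$ (if $k$ chose $j$); on the sublattice of submatchings $\nu$ with $k$ matched in $\nu$, set $\phi(\{(k)\}\cup\nu') = \phi^{-k}(\nu')$ for submatchings $\nu'$ of $N\setminus\{k\}$, and $\phi(\{(k,j)\}\cup\nu') = \phi^{-k,j}(\nu')$ for submatchings $\nu'$ of $N\setminus\{k,j\}$. One checks the three axioms of a picking order for this $\phi$ (nonemptiness of the domain, the dictator at a node is unmatched there, and the successor submatchings are either in the domain or are matchings) — these follow from the corresponding properties of $\phi^{-k}$ and the $\phi^{-k,j}$. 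Then by construction the royalty-free sequential-dictatorship algorithm driven by $\phi$ reproduces $f$: step $1$ matches $k$ to $\ptop(\succsim_k)$ because $k$ is a dictator, and from step $2$ on the algorithm is exactly the sequential-dictatorship algorithm for $f^{-k}$ or $f^{-k,j}$, which agrees with $f$ on the remaining agents by Lemma~\ref{lemma: submechanisms}.

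\textbf{Main obstacle.} The delicate point is the bookkeeping in the last paragraph: verifying that the glued $\phi$ genuinely satisfies the picking-order axioms and that it is well-defined (no submatching $\nu$ with $k$ matched is reachable via two incompatible descriptions), and handling the degenerate boundary cases where a submechanism has only one or two agents so that the induction hypothesis / base case must be invoked carefully. The conceptual content is light once Lemma~\ref{lemma: submechanisms} is in hand — the real work of the paper is the separate claim, assumed here, that efficiency plus group strategy-proofness forces a dictator at the top, which is proved elsewhere via the three-agent and four-agent analyses.
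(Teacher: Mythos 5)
Your proposal is correct and follows essentially the same route as the paper: identify the top dictator from the hypothesis, invoke Lemma~\ref{lemma: submechanisms} to obtain well-defined, efficient and group strategy-proof continuation submechanisms depending only on the dictator's choice, and recurse, with the two-remaining-agent case handled by Lemma~\ref{lemma: two roomates}. Your caveat about terminal unanimity rules not literally being sequential dictatorships is the same special case the paper itself acknowledges and sets aside for two-agent terminal submechanisms, so it is not a gap relative to the paper's argument.
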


\begin{proof}
Fix a group strategy-proof and efficient mechanism $f:\Omega^1\to \Sigma^1$. If there are only two agents then $f$ is by Lemma \ref{lemma: two roomates} a sequential dictatorship. If there are more than two agents then $f$ has by assumption a dictator $i$, so that $f(\succsim)=\ptop(\succsim_i)$ for any $\succsim.$ If, upon matching $i$ with $\ptop(\succsim_i)$ only one agent remains unmatched, this agent must stay single. If not, then  Lemma \ref{lemma: submechanisms} implies that any choice of agent 1 is followed by a group strategy-proof and efficient submechanism $f'$.  By group strategy-proofness this submechanism $f'$ only depends on agent $i$'s choice (and not on  agent $i$'s rankings over the options he did not choose or on the preferences of the agent $j\neq i$ that $i$ did choose). 
 If two agents remain unmatched, $f'$ must by Lemma \ref{lemma: two roomates} be either a dictatorship or a unanimity rule, and we are done. If more than two agents remain unmatched $f'$ has by the assumption in the Lemma a dictator.  Proceeding inductively we reach the case where at most two agents remain. 
\end{proof}

Following Lemma \ref{lemma: how to do induction} it suffices to show that each group strategy-proof and efficient mechanism with more than two agents has a dictator. The next three section show  that each group strategy-proof and efficient mechanisms with respectively  $n=3$, $n=4$ and $n\geq 5$ agents has a dictator. 

\subsection{The case of three agents}
Throughout this section fix a group strategy-proof and efficient mechanism $f:\Omega^1\to \Sigma^1$ for the agents $N=\{1,2,3\}$, so that $\Sigma^1$ contains only four matchings. If any two agents are matched, the remaining agent is clearly left single.
The proof that $f$ must be a sequential dictatorship revolves around the notion of ``ownership''.
Say agent $i$ \textbf{owns} an agent $j$ if agent $i$ can always choose to be matched with agent $j$. Formally
$\ptop(\succsim_i)=j$ implies $f(\succsim_i,\succsim_{-i})(j)=i$ for all $\succsim_{-i}$. After establishing the  preliminary Lemma  \ref{lemma: enough for ownership} on ownership,
Lemma \ref{lemma: all agents are owned}  shows that each agent in $i\in \{1,2,3\}$ must be owned. Lemma \ref{lemma: one or two owners at the start} then rules out all ownership structures except then one where one agent owns all agents. 
Throughout this section, it will be useful to have some additional notation. Let $\succsim^{1,2}_i:1,2,3$, $\succsim^{1,3}_i:1,3,2$ and $\succsim^1_i$ as an arbitrary preferences with $\ptop(\succsim_i)=1$, so $\succsim^1_i$ equals either $\succsim^{1,2}_{i}$ or $\succsim^{1,3}_{i}$. Likewise  let $\succsim^{1,2}$, $\succsim^{1,3}$ and $\succsim^1$ denote preference profiles where all agents' preferences are  $\succsim^{1,2}_{i}$, $\succsim^{1,3}_{i}$ and $\succsim^1_{i}$ respectively.

The first lemma describes a sufficient condition to establish ownership.

\begin{lemma}\label{lemma: enough for ownership}
Agent $j^*$ owns agent $1$ if and only if
$f\big(\succsim^{1,2}\big)(j^*)=f\big(\succsim^{1,3}\big)(j^*)=1$.
\end{lemma}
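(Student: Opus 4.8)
\textbf{Proof strategy for Lemma \ref{lemma: enough for ownership}.}

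The plan is to prove the two directions separately, with the ``only if'' direction being essentially immediate from the definition of ownership and the ``if'' direction requiring the real work. For the ``only if'' direction: if $j^*$ owns $1$, then by definition $\ptop(\succsim_i) = 1$ for $i = j^*$ forces $f(\succsim_{j^*}, \succsim_{-j^*})(1) = j^*$ for every completion $\succsim_{-j^*}$; but this is stated in terms of $1$ being the top choice of $j^*$, so I first need to observe that the roles of $1$ and $j^*$ are interchangeable here — ownership of $1$ by $j^*$ is equivalent to saying that whenever $j^*$ top-ranks $1$, they are matched, regardless of the other agents' preferences. Since $\succsim^{1,2}$ and $\succsim^{1,3}$ are both profiles in which $j^*$ (like everyone) top-ranks $1$, it follows that $f(\succsim^{1,2})(j^*) = f(\succsim^{1,3})(j^*) = 1$. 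Wait — I should be careful: these profiles have \emph{everyone} top-ranking $1$, so in particular $1$ also top-ranks $1$... no, $1$'s preference $\succsim^{1,2}_1 : 1,2,3$ means agent $1$ most prefers staying single. Good, so there is no conflict and the claim is consistent; the point is only that $j^*$ top-ranks agent $1$ in both profiles, so ownership gives the conclusion directly.

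For the ``if'' direction, assume $f(\succsim^{1,2})(j^*) = f(\succsim^{1,3})(j^*) = 1$ and I must show $j^*$ owns $1$, i.e. for an \emph{arbitrary} profile $\succsim$ with $\ptop(\succsim_{j^*}) = 1$, we have $f(\succsim)(1) = j^*$. The natural approach is to transform $\succsim$ into one of the two reference profiles by changing one agent's preference at a time and invoking strategy-proofness at each step to show the match between $j^*$ and $1$ is preserved (or, conversely, that breaking it would create a profitable manipulation). Concretely: starting from $\succsim^{1,2}$ (where $j^*$ is matched to $1$), change agent $2$'s preference to $\succsim_2$; then change agent $3$'s preference to $\succsim_3$; then change $j^*$'s preference to $\succsim_{j^*}$ — but one has to handle the bookkeeping of which of the three agents is $j^*$ and which are the ``others,'' and the fact that $1$ might equal $j^*$ is excluded since $j^*$ owns $1$ only makes sense for $j^* \neq 1$ (if $j^* = 1$, ``owns'' would mean $1$ can always stay single, handled separately or vacuously). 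The key lever is: if at some profile $j^*$ and $1$ are matched and we change the preference of a \emph{third} agent $i \notin \{1, j^*\}$, then in the new profile $j^*$ and $1$ must still be matched — otherwise, depending on the direction of the change, either agent $i$ at the new profile would prefer to report the old preference (manipulation) or we get a contradiction with efficiency; with three agents and $j^*, 1$ matched, the third agent is forced single, so changing the third agent's preference cannot change anything by strategy-proofness applied to that single agent. Then, once the ``other two'' agents hold their $\succsim$-preferences, I change $j^*$'s preference from $\succsim^{1,2}_{j^*}$ (or $\succsim^{1,3}_{j^*}$) to the actual $\succsim_{j^*}$: since $\ptop(\succsim_{j^*}) = 1$ and $j^*$ was getting $1$, strategy-proofness for $j^*$ forbids $j^*$'s match from changing to anything worse, and $1$ is $j^*$'s favorite, so $j^*$ still gets $1$.

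The subtlety — and the main obstacle — is the middle step where I change the \emph{other two} agents' preferences: a priori $\succsim_2$ or $\succsim_3$ could be anything, and I want to route through whichever of $\succsim^{1,2}, \succsim^{1,3}$ matches up, but in fact I claim it is cleaner to not insist everyone top-ranks $1$ along the path. The honest version: fix arbitrary $\succsim$ with $\ptop(\succsim_{j^*})=1$; let the other two agents be $a, b$. Change $a$'s preference to top-rank $1$ as well (moving toward a reference profile) — by strategy-proofness for $a$, since $a$ is single both before (as $j^*,1$ matched, shown later) and after, this is safe \emph{once we know $j^*,1$ are matched}, so actually the argument should start from the reference profile and move outward, not inward. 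So: begin at $\succsim^{1,2}$ where $j^*,1$ matched and $a,b$ single. Change $a$'s preference to $\succsim_a$: $a$ is single at the start; if $a$ became non-single we could check $a$ manipulates or the new matching is Pareto-dominated — but more simply, strategy-proofness says $a$ weakly prefers the outcome under the truthful report $\succsim_a$, and a symmetric argument (reporting $\succsim^{1,2}_a$ when true preference is $\succsim_a$) pins $a$ to getting exactly the same match, hence still single, hence $j^*,1$ still matched (the only matching with $a$ single besides the all-single one — and all-single is Pareto-dominated here since $j^*$ prefers $1$). Repeat for $b$. Now change $j^*$ to $\succsim_{j^*}$: as above $j^*$ keeps $1$. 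Since the middle profile after both changes already has the ``other two'' at their $\succsim$-preferences, and whether we started from $\succsim^{1,2}$ or $\succsim^{1,3}$ we land at the same place after reaching $\succsim_a, \succsim_b$, we do not even need both hypotheses for a fixed $\succsim$ — but we need both to cover all possible $\succsim$ (the hypothesis on $\succsim^{1,3}$ is used when $\succsim_{j^*} = \succsim^{1,3}_{j^*}$-ish cases make the single-step change ambiguous). I expect the real care is in justifying ``match preserved under a third agent's preference change'' cleanly: the cleanest route is to note that with three agents, once $j^*$ and $1$ are matched the third agent is mechanically single, so the only alternative matchings are ``$j^*,1$ matched'' and ``everyone single,'' the latter Pareto-dominated, so efficiency alone — not even strategy-proofness — forbids the switch once we know some agent's match is unchanged. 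Assembling these observations gives the result.
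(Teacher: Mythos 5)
There is a genuine gap, in two places. First, you set aside the case $j^*=1$ as ``handled separately or vacuously,'' but it is neither: in this paper ownership of agent $1$ by agent $1$ means that agent $1$ can always guarantee staying single, and this case carries real content (it is needed later to rule out the ownership structure where every agent owns herself). Your key lever --- ``once $j^*$ and $1$ are matched the third agent is mechanically single, so only two candidate matchings remain'' --- fails exactly here: when agent $1$ is single, agents $2$ and $3$ are not pinned down, since they may be matched to each other or both single. The paper's proof spends three subcases (according to whether $2$ and $3$ are single or paired at the two reference profiles) on precisely this situation.

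Second, for $j^*\neq 1$ your one-agent-at-a-time path is only justified for the genuinely third agent, who is single at $\succsim^{1,2}$; but one of the two agents whose preferences you must change along the path is agent $1$ itself, who is matched to $j^*$ there. When you replace $\succsim^{1,2}_1$ (which top-ranks being single) by an arbitrary $\succsim_1$, strategy-proofness gives only that $f(\succsim_1,\cdot)(1)\succsim_1 j^*$ and that $f(\succsim_1,\cdot)(1)$ lies weakly below $j^*$ in $\succsim^{1,2}_1$; if $\succsim_1$ ranks the third agent above $j^*$, nothing you have said prevents agent $1$ from being re-matched to that third agent, which is exactly the scenario that would defeat the lemma. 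This is where the paper must invoke \emph{both} hypotheses (on $\succsim^{1,2}$ and on $\succsim^{1,3}$) and run a further two-way case analysis on the resulting match of agent $1$; your remark that the $\succsim^{1,3}$ hypothesis covers ``$\succsim_{j^*}=\succsim^{1,3}_{j^*}$-ish'' cases misplaces the difficulty, which sits in the preferences of the agents other than $j^*$. A smaller but real error: at the reference profiles the all-single matching is \emph{not} Pareto-dominated by the matching pairing $1$ with $j^*$, because agent $1$ top-ranks being single; the step you attribute to ``efficiency alone'' in fact needs non-bossiness, i.e.\ group strategy-proofness.
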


\begin{proof}
If agent $j^*$ owns agent 1, then $f(\succsim)(j^*)=1$ holds for any $\succsim$ with $\ptop(\succsim_{j^*})=1$, in particular  $\succsim^{1,2}$ and $\succsim^{1,3}$.

\medskip

So suppose that we have $f\big(\succsim^{1,2}\big)(j^*)=f\big(\succsim^{1,3}\big)(j^*)=1$ for some agent $j^*$. For $j^*$ to own $1$ it is sufficient to show that 
$f(\succsim^1)(j^*)=1$ holds for all $\succsim^1$. To see that latter fix any $\succsim_{-j^*}$. Define $\succsim'_i$ for the two agents $i\neq j^*$ such that $\ptop(\succsim'_i)=1$ and  $2\succsim'_i3\Leftrightarrow 2\succsim_i3$. By the assumption $f(\succsim^1_{j^*},\succsim'_{-j^*})(j^*)=1$. Group strategy-proofness then yields  $f(\succsim^1_{j^*},\succsim'_{-j^*})=f(\succsim^1_{j^*},\succsim_{-j^*})$, so that $f(\succsim^1_{j^*},\succsim_{-j^*})(j^*)=1$.

\medskip

Case 1: $j^*=1$.

\medskip

Case 1.1:  $f\big(\succsim^{1,2}\big)(2)=2$. 
Then the group strategy-proofness of $f$ implies 
$f\big(\succsim^{1,2}_2, \succsim^{1}_{-2}\big)(1)=1$. The strategy-proofness of $f$ implies that $f\big(\succsim^{1,3}_2,\succsim^{1}_{-2}\big)(2)\in \{2,3\}$, 
and therefore, by either group strategy-proofness or feasibility respectively, $f\big(\succsim^{1,3}_2,\succsim^1_{-2}\big)(1)=1$.
We in sum get that $f(\succsim^1)(j^*)=1$  for any $\succsim^1$, which establishes the claim. 

Case 1.2: $f\big(\succsim^{1,3}\big)(3)=3$. Applying the arguments from Case 1.1. mutatis mutandis we get $f(\succsim^1)(j^*)=1$  for any $\succsim^1$.

Case 1.3: $f\big(\succsim^{1,2}\big)(3)=f\big(\succsim^{1,3}\big)(3)=2$. So 
$f\big(\succsim^{1,2}\big)=f\big(\succsim^{1,3}\big)$ and by group strategy-proofness $f$ must also match $3$ with $2$ for any profile where agent 1 ranks being single at the top and where at least one agent in $\{2,3\}$ prefers being matched with the other to being single. For the last remaining case where agent $1$ top ranks being single while the other two prefer being single to being matched with each other start with the observation that $f(\succsim^1_1,\succsim^{1,2}_{-1})$ and $f(\succsim^1_1,\succsim^{1,3}_{-1})$ both match agents 2 and 3. By strategy-proofness we then get $f(\succsim^1_1,\succsim^{1,2}_2\succsim^{1,3}_3)(2)\in \{2,3\}$ as well as 
$f(\succsim^1_1,\succsim^{1,2}_2\succsim^{1,3}_3)(3)\in \{2,3\}$. The efficiency of $f$ then implies that agents 2 and 3 both stay single in $f(\succsim^1_1,\succsim^{1,2}_2\succsim^{1,3}_3)$, so that once again $f(\succsim^1_1,\succsim^{1,2}_2\succsim^{1,3}_3)(1)=1$ and in sum $f(\succsim^1)(1)=1$. 
\medskip

Case 2: $j^*\neq 1$. W.o.l.g say $j^*=2$ so that $f\big(\succsim^{1,2}\big)(2)=f\big(\succsim^{1,3}\big)(2)=1$. 
Now say $f(\succsim^*)(2)\neq 1$ did hold for some
$\succsim^*$ with $\ptop(\succsim^*_1)(i)=1$ for $i=1,2,3$. The assumption $f\big(\succsim^{1,2}\big)(2)=f\big(\succsim^{1,3}\big)(2)=1$ together with group strategy-proofness implies that $f(\succsim^{1,2}_1,\succsim^1_{-1})(1)=2$ (A) as well as $f(\succsim^{1,3}_3,\succsim^1_{-3})(1)=2$ (B) for any $\succsim^1$. So for $f(\succsim^*)(1)\neq 2$ to hold we must have $\succsim^*_1=\succsim^{1,3}_1$ and $\succsim^*_3=\succsim^{1,2}_3$.  Case 1: $f(\succsim^*)(1)= f(\succsim^{1,3}_1,\succsim^*_2,\succsim^{1,2}_3)(1)=3$. In that case we get by group strategy-proofness that $f(\succsim^{1,3}_1,\succsim^*_2,\succsim^{1,2}_3)=f(\succsim^{1,3}_1,\succsim^*_2,\succsim^{1,3}_3)$ which leads to a contradiction to (B) established above.  Case 1: $f(\succsim^*)(1)= f(\succsim^{1,3}_1,\succsim^*_2,\succsim^{1,2}_3)(1)=1$. In that case we get by group strategy-proofness that $f(\succsim^{1,3}_1,\succsim^*_2,\succsim^{1,2}_3)=f(\succsim^{1,2}_1,\succsim^*_2,\succsim^{1,2}_3)$ which leads to a contradiction to (A) established above.

\end{proof}

\begin{lemma}\label{lemma: all agents are owned}
According to $f$, each agent $i\in\{1,2,3\}$ is owned.
\end{lemma}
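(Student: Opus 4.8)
\textbf{Proof plan for Lemma \ref{lemma: all agents are owned}.}

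The plan is to show that each agent is owned by using Lemma \ref{lemma: enough for ownership}, which reduces the task to checking, for each agent $i$, that there is some agent $j^*$ with $f(\succsim^{i,j})(j^*) = f(\succsim^{i,k})(j^*) = i$ where $\{i,j,k\} = \{1,2,3\}$. By symmetry of the argument it suffices to establish this for agent $1$; the same reasoning applied to the relabeled profiles gives the statement for agents $2$ and $3$. So fix the two profiles $\succsim^{1,2}$ (all agents ranking $1,2,3$) and $\succsim^{1,3}$ (all agents ranking $1,3,2$). At each of these profiles, efficiency forces a match involving agent $1$: at $\succsim^{1,2}$ the matching $\{(1,2),(3)\}$ is the unique efficient one, and at $\succsim^{1,3}$ the matching $\{(1,3),(2)\}$ is the unique efficient one, since every agent top-ranks $1$.

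First I would record what efficiency alone gives: $f(\succsim^{1,2}) = \{(1,2),(3)\}$ and $f(\succsim^{1,3}) = \{(1,3),(2)\}$. Indeed at $\succsim^{1,2}$, any matching not pairing $1$ with $2$ leaves both $1$ and $2$ single or pairs $2$ with $3$; in the first case $\{(1,2),(3)\}$ Pareto dominates, and in the second, since $2$ prefers $1$ and $1$ prefers $2$, again $\{(1,2),(3)\}$ Pareto dominates. The analogous argument handles $\succsim^{1,3}$. Consequently $f(\succsim^{1,2})(1) = 2$ and $f(\succsim^{1,3})(1) = 3$. Now I want to find a single $j^*$ that is matched to $1$ at \emph{both} profiles. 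The candidates are $j^* = 2$ (matched to $1$ at $\succsim^{1,2}$, single at $\succsim^{1,3}$) and $j^* = 3$ (single at $\succsim^{1,2}$, matched to $1$ at $\succsim^{1,3}$) — neither works directly, and $j^* = 1$ is matched to $2$ and to $3$ respectively, also not to $1$. So the naive reading of Lemma \ref{lemma: enough for ownership} does not immediately apply, and the real content is to show that \emph{some} agent owns $1$, not that these two specific profiles exhibit the owner.

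The key step, and the main obstacle, is therefore to argue directly that one of the three agents owns $1$, rather than reading it off from $\succsim^{1,2}$ and $\succsim^{1,3}$. I would argue by contradiction: suppose no agent owns $1$. By Lemma \ref{lemma: enough for ownership}, for each agent $j$ there is a profile $\succsim^{(j)}$ with every agent top-ranking $1$ at which $f(\succsim^{(j)})(j) \neq 1$. Since there are only three agents and at any profile where everyone top-ranks $1$ the efficient outcome matches $1$ with one of the other two, agent $1$ is matched to $2$ or to $3$ at each such profile. I would then use group strategy-proofness to propagate: starting from $f(\succsim^{1,2})(1) = 2$, if at some profile where everyone still top-ranks $1$ the match of $1$ changes away from $2$, the agent whose report changed must have moved, and tracking these single-agent deviations between $\succsim^{1,2}$ and $\succsim^{1,3}$ — which differ only in how agents $2$ and $3$ rank each other below $1$ — forces a contradiction with efficiency or with the non-ownership hypothesis. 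Concretely, I expect to show that the set of "everyone-top-ranks-$1$" profiles at which $1$ is matched to $2$ is closed under the relevant deviations and, being nonempty (it contains $\succsim^{1,2}$) while its complement contains $\succsim^{1,3}$, the boundary between them produces a strategy-proofness violation. The delicate bookkeeping of exactly which deviations are profitable is the part that needs care; everything else is forced by efficiency on a four-outcome domain.
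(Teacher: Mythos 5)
Your proposal goes wrong at the very first substantive step, and the error propagates through everything that follows. In this one-sided matching model an agent's own index in her preference list stands for the option of staying single, so at $\succsim^{1,2}$ (every agent announcing $1,2,3$) agent $1$ \emph{top-ranks being single}. Efficiency therefore does not force $f(\succsim^{1,2})=\{(1,2),(3)\}$: the all-singles matching, $\{(1,2),(3)\}$, $\{(1,3),(2)\}$ and even $\{(2,3),(1)\}$ are all efficient at that profile, and likewise at $\succsim^{1,3}$. Your claim that $f(\succsim^{1,2})(1)=2$ and $f(\succsim^{1,3})(1)=3$ is thus unsupported — and it had better be, because Lemma \ref{lemma: enough for ownership} is an if-and-only-if: if those two values really were forced to differ (and no other agent were matched to $1$ at both profiles), no agent would own agent $1$ and the lemma you are trying to prove would be false. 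The correct reading of Lemma \ref{lemma: enough for ownership} is that it reduces the present lemma exactly to showing $f(\succsim^{1,2})(1)=f(\succsim^{1,3})(1)$, with the common value $j^*$ (possibly $j^*=1$, i.e.\ agent $1$ owning herself) then being the owner. That equality is the entire content of the proof, and it is not a bookkeeping afterthought: the paper establishes it by a case analysis on whether agent $1$ is single at one, both, or neither of the two profiles, constructing in each case an explicit third profile at which group strategy-proofness forces $f$ to take two different values.

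Your closing paragraph gestures at the right target — propagating the match of agent $1$ between $\succsim^{1,2}$ and $\succsim^{1,3}$ via single-agent deviations — but it is internally inconsistent with the facts you just ``established'' (you cannot both have $f(\succsim^{1,2})(1)=2\neq 3=f(\succsim^{1,3})(1)$ and prove that $1$'s match cannot change along a path between these profiles), and it defers the entire argument to ``delicate bookkeeping'' without exhibiting the deviations or the contradiction. Note also that $\succsim^{1,2}$ and $\succsim^{1,3}$ differ in \emph{all three} agents' announcements (each swaps the two options below agent $1$, and for agents $2$ and $3$ one of those options is staying single), not merely in how $2$ and $3$ rank each other. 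As written the proposal contains no proof of the lemma.
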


\begin{proof}
 By Lemma \ref{lemma: enough for ownership} (and the interchangeability of all agents) it suffices to show 
 $f(\succsim^{1,2})(1)=f(\succsim^{1,3})(1)$. So suppose we had $f(\succsim^{1,2})(1)\neq f(\succsim^{1,3})(1)$.

\medskip

Case 1: Agent 1 is not alone  at either $\succsim^{1,3}$ or $\succsim^{1,2}$, so $f(\succsim^{1,2})(1)\neq 1\neq f(\succsim^{1,3})(1)$.

\medskip
Suppose we had $f(\succsim^{1,2})(3)=f(\succsim^{1,3})(2)=1$. The group strategy-proofness  of $f$ implies $f(\succsim^{1,2})=f(\succsim^{1,3}_3,\succsim^{1,2}_{-3})$ as well as $f(\succsim^{1,3})=f(\succsim^{1,2}_2,\succsim^{1,3}_{-2})= f(\succsim^{1,3}_3,\succsim^{1,2}_{-3})$. 
We in sum get the contradiction $f(\succsim^{1,2})= f(\succsim^{1,3})$  to the assumption that $f(\succsim^{1,2})$ and $f(\succsim^{1,3})$ respectively match agent 1 with agent 2 and 3.
So we must have $f(\succsim^{1,2})(2)=f(1,3)(3)=1$.

\bigskip
The proof derives a contradiction by showing that $f$ must equal two different matchings at the profile $\succsim$ given by
\begin{eqnarray*}
\succsim_1:  1,3,2\\
\succsim_2: 3, 1, 2\\
\succsim_3: 1,2,3
\end{eqnarray*}

Starting at $f\big(\succsim^{1,2}\big)(1)=2$  change agent 2's preference to $\succsim_2: 3,1,2$.
  By strategy-proofness $f\big(\succsim_2,\succsim^{1,2}_{-2}\big)(2)\in\{1,3\}$. Since the matching $\{(1),(2,3)\}$ Pareto dominates the matching $\{(1,2),3\}$ at $(\succsim_2,\succsim^{1,2}_{-2})$,  $f\big(\succsim_2,\succsim^{1,2}_{-2}\big)$ must match agents $2$ and $3$. The group srategy-proofness then implies that  $f\big(\succsim_2,\succsim^{1,2}_{-2}\big)=f(\succsim)$. On the other hand, 
the group strategy-proofness of $f$, together with the observation that $\succsim^{1,3}_1=\succsim_1$ implies that $f\big(\succsim^{1,3}\big)=f\big(\succsim\big)$. A contradiction arises  since $f\big(\succsim^{1,3}\big)(1)=3$ while we have shown above that  $f\big(\succsim\big)(1)=1$.

\bigskip

Case 2: Agent 1 is alone at $\succsim^{1,2}$ or $\succsim^{1,3}$, not both. W.o.l.g $f(\succsim^{1,2})(1)=1$.

\medskip

Case 2.1.  $f\big(\succsim^{1,2}\big)(2)=2$.  
Group strategy-proofness then implies  $f\big(\succsim^{1,2}_2, \succsim^{1,3}_{-2}\big)(1)=1$. Strategy-proofness yields that  $f\big(\succsim^{1,3}\big)(2)$ equals 2 or 3. We then get $f\big(\succsim^{1,3}\big)(1)=1$ (in the first case by group strategy-proofness and in the second by feasibility)
a  contradiction to the assumption that agent 1 is not alone at $\succsim^{1,3}$.

Case 2.2 $f\big(\succsim^{1,2}\big)(2)=3$.
Group strategy-proofness then implies   $f\big(\succsim^{1,2}\big)=f\big(\succsim^{1,2}_3,\succsim^{1,3}_{-3}\big)$. Since $f\big(\succsim^{1,3}\big)\neq f\big(\succsim^{1,2}\big)= f\big(\succsim^{1,2}_3,\succsim^{1,3}_{-3}\big)$ and since
 $\succsim^{1,2}_3$ and $\succsim^{1,3}_3$ differ only in their ranking of agents 2 and 3 in second and third place,  $f\big(\succsim^{1,3}\big)(3)=3$. So for $f\big(\succsim^{1,3}\big)(1)\neq 1$ to hold we must have $f\big(\succsim^{1,3}\big)(1)= 2$.

\medskip

Just as above the
 proof derives a contradiction by showing that $f$ maps the following $\succsim$ to two different matchings,
\begin{eqnarray*}
\succsim_1:  3,1,2\\
\succsim_2: 1, 2,3\\
\succsim_3: 1,3,2.
\end{eqnarray*}

Strategy-proofness, $\succsim^{1,2}_2=\succsim_2$, and $f\big(\succsim^{1,2}\big)(1)=1$ imply
  $f\big(\succsim^{1,2}_{3},\succsim_{-3}\big)(1)\in\{1,3\}$. If $f\big(\succsim^{1,2}_{3},\succsim_{-3}\big)(1)=1$, then the group strategy-proofness of $f$ implies $f\big(\succsim^{1,2}\big)=f\big(\succsim^{1,2}_{3},\succsim_{-3}\big)$. A contradiction arises, since $f\big(\succsim^{1,2}\big)$ is at $\big(\succsim^{1,2}_{3},\succsim_{-3}\big)$ dominated by $\mu$ with $\mu(1)=3$. So
  $f\big(\succsim^{1,2}_{3},\succsim_{-3}\big)(1)=3$ and  $f\big(\succsim^{1,2}_3,\succsim_{-3}\big)=\mu$. Since $f$ is strateyproof and since $\mu(3)=1=\ptop(\succsim_3)= \ptop(\succsim^{1,2}_3)$ we get $f\big(\succsim\big)=\mu$.
  Starting with $f\big(\succsim^{1,3}\big)$ and noting that  $\succsim^{1,3}_3=\succsim_3$ the group strategy-proofness of $f$ yields the contradiction $ f\big(\succsim\big)=f\big(\succsim^{1,3}\big)\neq \mu$. 

\end{proof}

\begin{lemma}\label{lemma: one or two owners at the start}
The mechanism $f$ must have a dictator.
\end{lemma}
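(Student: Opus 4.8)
We work with the three agents $N=\{1,2,3\}$ and the fixed group strategy-proof and efficient mechanism $f$. By Lemma \ref{lemma: all agents are owned} every agent is owned by someone; say agent $i$ is owned by $o(i)\in N$ (possibly $o(i)=i$, meaning $i$ can always stay single). The plan is to analyze the ``ownership digraph'' on the three vertices $\{1,2,3\}$ given by $i\mapsto o(i)$ and to rule out every configuration except the one in which a single agent $d$ owns all three agents --- at which point $d$ is a dictator, since by owning agent $\ptop(\succsim_d)$ at every profile $\succsim$, $f(\succsim)(d)=\ptop(\succsim_d)$ always holds.

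\textbf{Key steps.} First, I would record two elementary consistency facts about ownership in the three-agent world. (i) Mutual exclusivity: if $o(j)=i$ with $i\neq j$ (so $i$ genuinely owns the other agent $j$), then $i$ is \emph{not} owned by the third agent $k$, because when $i$ top-ranks $j$ and $j$ top-ranks $i$ the pair $(i,j)$ must be matched (by $i$'s ownership of $j$), leaving $k$ single regardless of $k$'s wishes; so $k$ cannot force a match with $i$. Equivalently, an agent cannot simultaneously be owned by two distinct other agents in a way that conflicts, and an agent who owns someone cannot himself be owned by the remaining agent. (ii) The ``self-ownership'' case $o(i)=i$ interacts with efficiency: if $i$ can always stay single then, restricting to profiles where $i$ top-ranks being single, Lemma \ref{lemma: submechanisms} gives a group strategy-proof and efficient submechanism on $\{j,k\}$, which by Lemma \ref{lemma: two roomates} is a dictatorship or unanimity rule among $j,k$; I would push on this to show it forces further ownership relations. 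Second, with these facts in hand I enumerate the possibilities for the map $o$: either some agent owns at least two others, or the ownership is ``spread out.'' Using (i), I rule out two agents each owning one of the others (that would make the third agent owned by two people, or create a 2-cycle $o(j)=i, o(i)=j$ which contradicts (i) applied with the roles of the owned agent); I rule out a 3-cycle $o(1)=2, o(2)=3, o(3)=1$ similarly, since $o(3)=1$ means $1$ owns $3$, but then $1$ cannot be owned by $2$, contradicting $o(1)=2$. The only surviving cases are: one agent $d$ owns both of the other two (and then by (i) $d$ is owned by nobody else, but $d$ could in principle self-own; that is consistent and makes $d$ a dictator), or everyone self-owns $o(1)=1,o(2)=2,o(3)=3$. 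Third, I eliminate the all-self-owning case: if each agent can unilaterally stay single, consider a profile where all three top-rank staying single --- then the matching is everyone-single, fine --- but now consider the profile $\succsim$ with $\succsim_1:1,\dots$ and $\succsim_2:3,\dots$, $\succsim_3:2,\dots$; self-ownership of $1$ forces $1$ single, so $2,3$ are matched, but then varying $2$'s and $3$'s lower-ranked choices while keeping their tops I would derive, via the submechanism on $\{2,3\}$ and Lemma \ref{lemma: two roomates}, that the submechanism after $1$ opts out is a dictatorship among $\{2,3\}$, say with dictator $2$; combined with symmetric arguments for the submechanisms after $2$ opts out and after $3$ opts out, and the ownership facts, this leads to a contradiction with group strategy-proofness (the identity of the ``local dictator'' among the remaining two must be consistent across the three opt-out branches, which is impossible in a cyclic pattern and a constant pattern reintroduces a global dictator contradicting pure self-ownership).

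\textbf{Main obstacle.} The delicate part is the last step: ruling out (or rather, correctly classifying) the configurations where self-ownership is mixed with genuine ownership, and in particular showing the all-self-owning configuration cannot occur. Self-ownership is a weak statement --- it only says the agent \emph{can} choose to be single --- so it does not by itself pin down behavior when the agent does \emph{not} top-rank being single. I expect to need a careful argument combining Lemma \ref{lemma: submechanisms} (to extract the two-agent submechanism governing what happens once one agent opts out), Lemma \ref{lemma: two roomates} (to classify that submechanism), and a consistency check across the different opt-out branches, together with the mutual-exclusivity fact (i), to force the existence of a single agent who owns everyone. Once a single owner $d$ of all agents is established, the conclusion that $d$ is a dictator is immediate from the definition of ownership.
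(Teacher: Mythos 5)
Your overall strategy --- classify the ownership configurations permitted by Lemma \ref{lemma: all agents are owned} and eliminate every one except ``a single agent owns everyone'' --- is the same as the paper's, and your mutual-exclusivity fact (i) is exactly the paper's ``no ownership chain'' observation. The gap is in your enumeration of the surviving configurations. Fact (i) only excludes chains through three \emph{distinct} agents; it does not exclude a $2$-cycle ($1$ owns $2$ and $2$ owns $1$), and it does not exclude the configuration ``agent $1$ owns agent $2$ (and himself) while agent $3$ owns herself.'' In either of these, agent $3$ is indeed forced to self-own (any other owner of $3$ would create a genuine three-agent chain), but the resulting structure --- one agent owns exactly one other agent, the third agent owns only herself --- is perfectly consistent with all the ownership combinatorics: whenever $1$ top-ranks $2$ they are matched, whenever $2$ top-ranks $1$ they are matched, and no conflict ever arises from the definitions alone. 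So your claim that the $2$-cycle ``contradicts (i)'' is wrong, and this configuration survives your case analysis without being addressed. The paper's Case 2 is precisely the substantive argument needed here: at a profile where $1$ and $2$ both top-rank $3$ and $3$ ranks $2$ above $1$ above herself, ownership of $2$ by $1$ plus efficiency force $(1,3)$ to be matched; a further preference change, exploiting agent $3$'s self-ownership and strategy-proofness, drives the mechanism to the all-single matching, which is Pareto dominated by pairing $2$ with $3$. Your proposal contains nothing playing this role, so the proof does not close.

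A secondary concern is your elimination of the all-self-own case. The assertion that ``the identity of the local dictator among the remaining two must be consistent across the three opt-out branches'' is not justified --- the continuation submechanism after an agent opts out is only pinned down by Lemma \ref{lemma: submechanisms} on profiles where that agent top-ranks being single, and nothing obviously forces agreement across branches. The paper instead uses the classical roommate-instability profile ($1$ prefers $2$, $2$ prefers $3$, $3$ prefers $1$, each ranking being single last): efficiency forces some pair to be matched, and then the unmatched agent together with the member of that pair who received only her second choice can jointly deviate, each using her self-ownership (via strategy-proofness) to guarantee she is not pushed below being single, so that efficiency forces them to be matched with each other --- a strict improvement for both and hence a group manipulation. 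You would need to replace your sketch with an argument of this concreteness.
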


\begin{proof}
By  Lemma \ref{lemma: all agents are owned} each agent is owned. Suppose we had an ``ownership chain'' in the sense that agent $i$ owns $j$ who in turn owns $k$ with $\{1,2,3\}=\{i,j,k\}$. For a profile $\succsim$ with $\ptop(\succsim_i)=j$ and $\ptop(\succsim_j)=k$ we would then obtain the contradiction $(i,j),(j,k)\in f(\succsim)$. Given that there can be no such ownership chains
we have to consider only 3 ownership structures (up to renaming): 1. Each agent owns herself. 2.  Agent 1 owns agent 2 and agent 3 owns herself. 3. One agent owns all three agents. To show that the third case must hold we rule out the first two.

\bigskip

Case 1: \textbf{Each agent owns herself.}
The classic example of a roommate problems  without a stable matching serves to obtain a contradiction. Define $\succsim$ as follows:
\begin{eqnarray*}
\succsim_1: 2,3,1\\
\succsim_2: 3,1, 2\\
\succsim_3: 1,2,3
\end{eqnarray*}

Since $f$ is efficient at least two agents must get matched. Assume that $f(\succsim)(1)=2$  and consider the deviation to $\succsim'_2: 3,2,1$ and
 $\succsim'_3: 2,3,1$ for agents 2 and 3. Since each agent owns herself, $f(\succsim_1,\succsim'_{-1})(2)\succsim_2 2$ and $f(\succsim_1,\succsim'_{-1})(3)\succsim_3 3$. So $f(\succsim_1,\succsim'_{-1})$ either keeps all agents single or pairs up agents 2 and 3. Since the latter Pareto dominates the former, $f(\succsim_1,\succsim'_{-1})(2)=3$ must hold. Since $f(\succsim_1,\succsim'_{-1})(2)=3\succ_2 1=f(\succsim)(2)$ and  $f(\succsim_1,\succsim'_{-1})(3)=2\succ_3 3=f(\succsim)(3)$ a contradiction to the group strategy-proofness of $f$ results.
 Mutatis mutandis the same arguments rule out the remaining two matchings. So it cannot be that each agent owns herself.

\bigskip

Case 2: \textbf{Agent 1 owns agent 2 and agent 3 owns herself.}
Transform the profile $\succsim$ in two steps to $(\succsim_2,\succsim'_{-2})$ where
\begin{eqnarray*}
\succsim_1: 3,2,1&&\succsim'_1:3,1,2\\
\succsim_2: 3,2, 1&&\\
\succsim_3: 2,1,3&&\succsim'_3: 2,3,1.
\end{eqnarray*}
Since agent 1 owns agent 2, $f(\succsim)(1)$ equals 2 or 3. The latter must hold since  the matching $\{(1,2),(£)\}$  is (at $\succsim$) Pareto dominated by $\{(1,3)(2)\}$. 
By group strategy-proofness $f(\succsim)=f(\succsim'_1,\succsim_{-1})$.  
Now swap agent 3's preference to $\succsim'_3$: By strategy-proofness and since agent 3 owns herself $2\succ_3 f(\succsim_2,\succsim'_{-2})(3)\succsim'_3 3$, so that $f(\succsim_2,\succsim'_{-2})(3)=3$. Conditioning on agent 3 staying single, 
 agents 1 and 2 must, by efficiency, also stay single at $f(\succsim_2,\succsim'_{-2})$. A contradiction arises since matching agents 2 and 3 Pareto dominates $f(\succsim_2,\succsim'_{-2})$ at $(\succsim_2,\succsim'_{-2})$.
\end{proof}

\subsection{The case of four agents}
In the present subsection assume that $f:\Omega^{1}\to \Sigma^{1}$ is a group strategy-proof and efficient mechanism for four agents.
For each profile of preferences $\succsim$ define $\overline{\succsim}$ so that each agent $i$ ranks being single at the bottom keeping all other rankings identical to $\succsim$.
Say $\overline{\Omega}^1$ is the subdomain of such profiles where each agent ranks being single at the bottom. The set of matchings where no agent is single is $\overline{\Sigma}^1$. There are exactly four such matchings.

\begin{lemma}\label{lemma: 1 dictates overline}
The restriction $\overline{f}$ of $f$ to $\overline{\Omega}^1$ is a dictatorship.
\end{lemma}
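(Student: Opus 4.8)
The idea is to reduce $\overline{f}$ to a social choice problem on exactly three alternatives (the four matchings on four agents where nobody is single — wait, there are only three such matchings: $\{(1,2),(3,4)\}$, $\{(1,3),(2,4)\}$, $\{(1,4),(2,3)\}$), so that Gibbard–Satterthwaite applies. Actually, let me reconsider: with $N=\{1,2,3,4\}$ the perfect matchings are exactly these three, so $\overline{\Sigma}^1$ has three elements, not four — I would double-check the ``four such matchings'' claim in the text (it may be a typo, or it may be counting something else), but in any case the operative point is that $\overline{f}$ is a strategy-proof social choice function whose range has at most three alternatives.

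\medskip

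The first step is to verify that $\overline{f}$ is a well-defined strategy-proof social choice function in the Gibbard–Satterthwaite sense. Each agent's preference over $\overline{\Sigma}^1$ (restricted to matchings in which that agent — and everyone — is matched) is the restriction of $\succsim_i$ to that agent's partner in each matching; since preferences over partners are strict and there are three matchings, each agent induces a strict ranking over the three alternatives, and every such strict ranking is induced by some $\succsim_i \in \overline{\Omega}^1$. Group strategy-proofness of $f$, together with Lemma~\ref{lemma: submechanisms} / the submechanism machinery, gives that on $\overline{\Omega}^1$ the matching $f(\succsim)$ depends only on the induced profile of rankings over $\overline{\Sigma}^1$, and a unilateral deviation cannot make an agent strictly better off — i.e. $\overline{f}$ is strategy-proof in the classical sense. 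Then, provided the range has at least three elements, the Gibbard–Satterthwaite theorem \cite{gibbard1973manipulation}\cite{satterthwaite1975strategy} forces $\overline{f}$ to be a dictatorship on $\overline{\Omega}^1$. If the range has fewer than three elements one argues directly (efficiency rules out a range of size one, and a range of size two reduces to the two-outcome case handled as in Lemma~\ref{lemma: two roomates}), but here efficiency makes all three matchings attainable.

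\medskip

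The one genuine subtlety — and the step I expect to be the main obstacle — is the translation between ``agent $i$'s preference over matchings in $\overline{\Sigma}^1$'' and ``agent $i$'s preference over partners,'' because efficiency of $\overline{f}$ is stated in terms of the latter while Gibbard–Satterthwaite wants the former. Concretely, I need to confirm: (i) that the map from strict partner-preferences to strict matching-preferences is surjective onto all $3! = 6$ orderings (so that the G–S domain is the full domain of strict preferences over three alternatives), which holds because in each of the three matchings agent $i$ has a distinct partner, so any ordering of those three partners is realizable; and (ii) that efficiency of $f$ implies the range of $\overline{f}$ contains all three matchings — since for each matching one can pick a profile where everyone top-ranks (among partners in these three matchings) their partner in that matching, making it the unique efficient outcome. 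Once both are in hand, G–S delivers a dictator $k$ for $\overline{f}$, meaning $\overline{f}(\succsim)$ is the matching in $\overline{\Sigma}^1$ that gives $k$ its most-preferred partner among $\{$its three possible partners$\}$ — which, since $k$ is never single in $\overline{\Omega}^1$, is exactly $\ptop(\succsim_k)$; hence $\overline{f}(\succsim)(k) = \ptop(\succsim_k)$ for all $\succsim \in \overline{\Omega}^1$, i.e. $\overline{f}$ is a dictatorship with dictator $k$.
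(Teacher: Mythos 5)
Your proof is correct and follows essentially the same route as the paper's: restrict to $\overline{\Omega}^1$, note that efficiency forces every agent to be matched, identify $\overline{\Sigma}^1$ with a set of three alternatives over which agents can hold arbitrary strict preferences, and invoke Gibbard--Satterthwaite. You are also right that $\overline{\Sigma}^1$ has three elements, not four; the ``exactly four such matchings'' in the text preceding the lemma is a typo, and the paper's own proof correctly works with three.
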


\begin{proof}
Since $f$ is group strategy-proof and efficient, its restriction to $\overline{\Omega}^1$ is so too. Since $f$ is efficient and since each agent ranks being single at the bottom no agent stays single according to $f(\succsim)$ for any $\succsim\in \overline{\Omega}^1$ and we can represent $\overline{f}$ as a mechanism mapping $\overline{\Omega}^1$ to $\overline{\Sigma}^1$. Since $\overline{\Sigma}^1$ contains  three matchings, each of which is fully determined by the match of a single agent, we are facing a  classic social choice problem with three options where four agents may hold any preferences over these three options. By the Gibbard Satterthwaite theorem $\overline{f}:\overline{\Omega}^1\to\overline{\Sigma}^1$ must be a dictatorship.
\end{proof}

For the reminder of the present section say agent 1 is the dictator in $\overline{f}:\overline{\Omega}^1\to\overline{\Sigma}^1$.

\begin{lemma}
For each $i\in\{1,2,3,4\}$, $f^{-i}$ is a sequential dictatorship.
\end{lemma}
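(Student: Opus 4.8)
The plan is to reduce the statement to the three-agent case (Lemmas~\ref{lemma: enough for ownership}--\ref{lemma: one or two owners at the start}), using the submechanism machinery of Lemma~\ref{lemma: submechanisms}. Fix $i\in\{1,2,3,4\}$ and consider $f^{-i}:\Omega^{-i}\to\Sigma^1$, which by Lemma~\ref{lemma: submechanisms} is a well-defined group strategy-proof and efficient mechanism on the three remaining agents. By Lemma~\ref{lemma: one or two owners at the start}, any such three-agent mechanism has a dictator, hence by Lemma~\ref{lemma: how to do induction} (or directly, since with three agents a dictator already makes it a sequential dictatorship: once the dictator picks a partner, at most one agent remains, who must stay single) $f^{-i}$ is a sequential dictatorship. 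This is essentially immediate once the earlier results are in hand.

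The one subtlety to address carefully is the case $i=1$, since we have already named agent $1$ the dictator of $\overline{f}$ on the constrained domain $\overline\Omega^1$, and one must check consistency — but in fact no consistency is needed: $f^{-1}$ is a mechanism for agents $\{2,3,4\}$ and the claim only asserts it is \emph{some} sequential dictatorship, with no constraint on who its first dictator is. So each of the four cases is handled identically: invoke Lemma~\ref{lemma: submechanisms} to get that $f^{-i}$ is group strategy-proof and efficient, then invoke the three-agent characterization.

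The main (minor) obstacle is bookkeeping: Lemma~\ref{lemma: one or two owners at the start} and its supporting lemmas were stated and proved for a mechanism on $N=\{1,2,3\}$, so one should note explicitly that they apply verbatim to any group strategy-proof efficient mechanism on any three-element agent set (the labels $1,2,3$ play no special role — ``the interchangeability of all agents'' was already used in the proof of Lemma~\ref{lemma: all agents are owned}). Thus I would write: fix $i$; by Lemma~\ref{lemma: submechanisms}, $f^{-i}$ is group strategy-proof and efficient on the three agents in $N\setminus\{i\}$; by Lemmas~\ref{lemma: all agents are owned} and~\ref{lemma: one or two owners at the start} applied to these three agents, $f^{-i}$ has a dictator; since a group strategy-proof efficient mechanism on three agents with a dictator is a sequential dictatorship (the dictator picks, then one agent remains and must be single by efficiency), $f^{-i}$ is a sequential dictatorship. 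As $i\in\{1,2,3,4\}$ was arbitrary, the lemma follows. $\qed$
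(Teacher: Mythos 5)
Your proof is correct and follows essentially the same route as the paper's: apply Lemma \ref{lemma: submechanisms} to conclude that $f^{-i}$ is group strategy-proof and efficient on the three remaining agents, then invoke the three-agent characterization to conclude it is a sequential dictatorship. The extra remarks about relabeling and the case $i=1$ are harmless but not needed.
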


\begin{proof}
By Lemma \ref{lemma: submechanisms} $f^{-i}$ is efficient and group strategy-proof for three agents. By the preceding section $f^{-i}$ is a sequential dictatorship.
\end{proof}

For the remainder of the section assume that agent 2 is the dictator in $f^{-1}$.

\begin{lemma}\label{lemma: same first dict in m-3 m-4}
 An agent $i^*\in\{1,2\}$  is the dictator in both $f^{-3}$ and $f^{-4}$.
\end{lemma}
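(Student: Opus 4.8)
We are given that agent $1$ is the dictator of the restricted mechanism $\overline{f}$ on $\overline{\Omega}^1$ and that agent $2$ is the dictator of the sequential dictatorship $f^{-1}$. We must show that a single agent $i^*\in\{1,2\}$ is the first dictator in both $f^{-3}$ and $f^{-4}$. Since $f^{-3}$ and $f^{-4}$ are sequential dictatorships for three agents by the previous lemma, each has a well-defined first dictator, and that dictator must by symmetry/renaming be one of the agents in the relevant three-agent set. The plan is to argue directly that the first dictator of $f^{-3}$ lies in $\{1,2\}$, that the same holds for $f^{-4}$, and then that the two first dictators must coincide. The crucial leverage is that both $f^{-3}$ and $f^{-4}$ contain agents $1$ and $2$, and that we can connect choices made in these submechanisms to choices made by the dictator in $\overline{f}$ and in $f^{-1}$ through carefully chosen preference profiles where $3$ and $4$ are made to behave as ``bottom-ranking single'' agents or as a mutually-top-ranked pair.

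\textbf{Step 1: the first dictator of $f^{-3}$ (resp. $f^{-4}$) lies in $\{1,2,4\}$ (resp. $\{1,2,3\}$).} This is immediate: $f^{-3}$ is a mechanism for agents $\{1,2,4\}$, so its first dictator is one of these three; similarly for $f^{-4}$. So the real content is to rule out $4$ as the first dictator of $f^{-3}$ and $3$ as the first dictator of $f^{-4}$, and then to force the remaining candidates in $\{1,2\}$ to agree.

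\textbf{Step 2: rule out agent $4$ as first dictator of $f^{-3}$.} Suppose agent $4$ were the first dictator of $f^{-3}$. Consider a profile $\succsim$ in $\overline{\Omega}^1$ (everyone bottom-ranks being single) in which agent $1$, the dictator of $\overline{f}$, top-ranks agent $3$. Then $f(\succsim)$ matches $1$ with $3$ by Lemma \ref{lemma: 1 dictates overline}, and by Lemma \ref{lemma: submechanisms} the continuation on $\{2,3,4\}$... wait — this continuation is on agents $\{2,4\}$ after removing the pair $(1,3)$, so it is governed by $f^{-1,3}$, a two-agent mechanism, which gives us no contradiction by itself. Instead I would build the profile so that agent $3$ is the one whose match we can pin down two ways. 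Make agent $3$ top-rank some agent, say $1$, and bottom-rank being single; make agents $1,2,4$'s preferences arranged so that in $f^{-3}$ the first dictator (hypothetically $4$) would pull a match away from what $1$'s dictatorship in $\overline{f}$ forces. Concretely: choose $\succsim$ with $1$ top-ranking $2$, so that $f(\succsim)(1)=2$ and the continuation on $\{3,4\}$ leaves $3$ with $4$ (both bottom-rank single). Now perturb only agent $3$'s preferences to top-rank something, keeping agent $3$ still bottom-ranking single — this keeps us in $\overline{\Omega}^1$, so $f$ still matches $1$ with $2$. This shows $f^{-3}$ cannot have $4$ dictating over $3$... but $3\notin\{1,2,4\}$. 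The clean argument: by Lemma \ref{lemma: submechanisms}, if $f(\succsim)(3)=4$ for appropriate $\succsim$, the induced mechanism on $\{1,2\}$ agrees with $f^{-3,4}$, and whenever $f(\succsim)(3)=3$ it agrees with $f^{-3}$. Using profiles in $\overline{\Omega}^1$ where $1$ dictates, agent $3$ is never left single, so every time $1$ picks $2$, agent $3$ is matched with $4$; this forces a constraint on $f^{-3}$ that is incompatible with $4$ being its first dictator (since $4$'s first choice in $f^{-3}$ should be free to be agent $1$, contradicting that $1$ was matched to $2$). I expect this incompatibility argument — translating the dictatorship of $1$ in $\overline{f}$ into a statement about $f^{-3}$ and then contradicting $4$-dictatorship — to be the main obstacle, because one must handle the bookkeeping of which continuation submechanism ($f^{-3}$ vs. $f^{-3,4}$) applies at each profile and ensure the perturbations stay within the domains where the earlier lemmas apply.

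\textbf{Step 3: the two first dictators agree.} Having shown the first dictator of $f^{-3}$ is in $\{1,2\}$ and that of $f^{-4}$ is in $\{1,2\}$, suppose for contradiction that $1$ is the first dictator of $f^{-3}$ and $2$ is the first dictator of $f^{-4}$ (or vice versa). Build a single profile $\succsim$ where agents $3$ and $4$ both bottom-rank being single and mutually top-rank each other, so that efficiency forces $(3,4)\in f(\succsim)$; then by Lemma \ref{lemma: submechanisms} the continuation on $\{1,2\}$ agrees with $f^{-3,4}$, giving no information. So instead pick $\succsim$ where agent $4$ bottom-ranks single and agents $1,2,3$ are configured so that $f^{-4}$'s first dictator (agent $2$) matches with, say, agent $3$, leaving $1$ and $4$; then compare with a nearby profile in the domain of $f^{-3}$ where agent $1$'s dictatorship there forces $1$ to grab agent $2$ or agent $4$. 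The contradiction will come from a profile lying simultaneously in (the relevant slice of) both $\Omega^{-3}$ and $\Omega^{-4}$ once agents $3$ and $4$ are pushed to the bottom — i.e. a profile in $\overline\Omega^1$, where $\overline f$'s dictator is $1$ — combined with the requirement that $f^{-1}$'s dictator is $2$. Tracing: in $\overline\Omega^1$ with $1$ top-ranking $2$, $f$ matches $1$–$2$ and then matches $3$–$4$; removing $1$, we are in $f^{-1}$ where $2$ dictates, and $2$ should be free to pick $3$ or $4$, but we've just shown $2$ is matched to $1$ — consistent, since in $f^{-1}$ agent $1$ is absent. The actual contradiction I would extract: the first dictator of $f^{-4}$, when agents $3$ is also made to bottom-rank single, must agree with the first dictator of $\overline f$ restricted to $\{1,2,4\}$, which is $1$ (since $1$ dictates over all of $\overline\Omega^1$, it dictates over the sub-slice where $3$ is matched off); and symmetrically the first dictator of $f^{-3}$ restricted appropriately must be $1$ as well. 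Hence $i^*=1$ works — unless agent $1$'s top choice in these submechanisms is forced to be the removed agent, an edge case one dispatches by noting that a dictator's choice set in $f^{-3}$ or $f^{-4}$ never includes the removed agent, so the dictatorship descends cleanly. I would therefore conclude $i^*=1$ is the common first dictator (after the renaming that made $1$ the dictator of $\overline f$), completing the proof.
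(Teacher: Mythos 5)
There is a genuine gap. Your skeleton (rule out $4$ as dictator of $f^{-3}$ and $3$ as dictator of $f^{-4}$, then force the survivors in $\{1,2\}$ to coincide) matches the paper's, but neither substantive step is actually carried out, and the obstacle you keep bumping into is real and unresolved. The core problem is a domain mismatch: $f^{-3}$ is read off from profiles where agent $3$ top-ranks being single and is bottom-ranked by everyone else (so $3$ ends up \emph{single}), whereas $\overline{\Omega}^1$ consists of profiles where every agent bottom-ranks being single (so $3$ is \emph{never} single). In your Step 2, whenever agent $1$ picks agent $2$ in $\overline{\Omega}^1$, agent $3$ is matched with $4$, so by Lemma \ref{lemma: submechanisms} the relevant continuation is $f^{-3,4}$, not $f^{-3}$ — exactly the difficulty you flag mid-paragraph and never escape. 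The dictatorship of $1$ on $\overline{\Omega}^1$ therefore gives you no direct handle on $f^{-3}$. The paper sidesteps this by using the \emph{other} standing assumption (agent $2$ dictates $f^{-1}$): it takes a profile where $1$ and $2$ top-rank being single, $3$ and $4$ each rank $2$ first, themselves second, and $1$ last. Efficiency leaves $1$ single, agent $2$'s dictatorship in $f^{-1}$ leaves $2$ single, efficiency then leaves $3$ and $4$ single, and now the \emph{same} profile is simultaneously consistent with $f^{-3}$ and $f^{-4}$; since $4$ prefers $2$ to being single but stays single, $4$ cannot dictate $f^{-3}$ (and symmetrically for $3$ in $f^{-4}$).

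Your Step 3 rests on a false premise for the same reason: the claim that the first dictator of $f^{-4}$ ``must agree with the first dictator of $\overline{f}$ restricted to $\{1,2,4\}$'' does not follow, because the defining profiles of $f^{-4}$ (agent $4$ single) lie outside $\overline{\Omega}^1$ (no one single). You also end up asserting $i^*=1$, which is stronger than this lemma claims and is only established later in the paper (Lemma \ref{lemma: dictators in the submechanisms}) by a separate argument. The paper's agreement step is much lighter: take $\succsim_1:1,2,\dots$, $\succsim_2:1,2,\dots$, and $3,4$ top-ranking being single; efficiency leaves $3$ and $4$ single, so $f(\succsim)(1)=f^{-3}(\succsim)(1)=f^{-4}(\succsim)(1)$, and since the dictator being $1$ versus $2$ produces different values of this common quantity ($1$ stays single versus $1$ matched to $2$), both submechanisms must have the same dictator.
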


\begin{proof}
First fix a profile $\succsim$ such that
\begin{eqnarray*}
&\succsim_1:&1,\cdot,\cdot,\cdot\\
&\succsim_2:&2,\cdot,\cdot,\cdot\\
&\succsim_3:&2,3,4,1\\
&\succsim_4:&2,4,3,1
\end{eqnarray*}

The efficiency of $f$  implies $f(\succsim)(1)=1$, and Lemma \ref{lemma: submechanisms} then implies $f^{-1}(\succsim)\subset f(\succsim)$. Since $\ptop(\succsim_2)=2$ and since agent 2 is the dictator in $f^{-1}$, $f(\succsim)(2)=f^{-1}(\succsim)(2)=2$. By efficiency  agents 3 and 4 also remain single.  By Lemma \ref{lemma: submechanisms} $f(\succsim)$ must then be consistent with $f^{-3}(\succsim)$ and $f^{-4}(\succsim)$.
 Since $f(\succsim)(4)=4$ and since $2\succsim_4 4$, agent 4 cannot be the dictator in $f^{-3}$. Mutatis mutandis we see that agent 3 cannot be the dictator in $f^{-4}$.

 To see that $f^{-3}$ and $f^{-4}$ must have the same dictator fix $\succsim$ such that
 \begin{eqnarray*}
&\succsim_1:&1,2,\cdot,\cdot,\\
&\succsim_2:&1,2,\cdot,\cdot,\\
&\succsim_3:&3,\cdot, \cdot, \cdot,\\
&\succsim_4:&4,\cdot, \cdot, \cdot.
\end{eqnarray*}
  Since $f$ is efficient, $f(\succsim)(3)=3$ and $f(\succsim)(4)=4$. By Lemma \ref{lemma: submechanisms},  $f(\succsim)(1)=f^{-3}(\succsim)(1)=f^{-4}(\succsim)(1)$. 
  Since $\ptop(\succsim_1)=\ptop(\succsim_2)=1$, we then get $f^{-3}(\succsim)(1)=f^{-4}(\succsim)(1)=i^*$ so  $f^{-3}$ and $f^{-4}$ have the same dictator. \end{proof}

\begin{lemma}\label{lemma: dictators in the submechanisms}
Agent 1 is the dictator in $f^{-j}$ for $j=2,3,4$.
\end{lemma}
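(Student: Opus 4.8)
The goal is to pin down the dictator in each $f^{-j}$ for $j=2,3,4$. We already know from Lemma~\ref{lemma: 1 dictates overline} that agent $1$ dictates on the single-free subdomain, that agent $2$ dictates in $f^{-1}$ (this is our standing assumption), and from Lemma~\ref{lemma: same first dict in m-3 m-4} that $f^{-3}$ and $f^{-4}$ share a common first dictator $i^*\in\{1,2\}$. So there are really two things to establish: first that $i^*=1$ (handling $f^{-3}$ and $f^{-4}$ simultaneously), and second that agent $1$ --- not agent $2$ --- is the dictator in $f^{-2}$.

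\textbf{Step 1: rule out $i^*=2$.} Suppose for contradiction that agent $2$ is the dictator in both $f^{-3}$ and $f^{-4}$. I would build a profile where agent $1$'s top choice in the four-agent problem is some agent, say $2$, and where agent $2$'s top choice points elsewhere, then use Lemma~\ref{lemma: submechanisms} to reduce to a three-agent submechanism and derive a clash with agent $1$'s dictatorship on $\overline{\Omega}^1$. Concretely: choose $\succsim_1$ with $\ptop(\succsim_1)=3$, $\succsim_2$ with $\ptop(\succsim_2)=3$ as well, and $\succsim_3,\succsim_4$ cooked so that in the single-free world agent $1$'s dictatorship forces the match $(1,3)$, while the ``$f^{-4}$ has dictator $2$'' hypothesis forces agent $2$ to grab $3$ first once agent $4$ is (efficiently) left single --- a contradiction. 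The delicate point is arranging everyone's ranking of being-single so that the right submechanism ($f^{-4}$, say) is the one that Lemma~\ref{lemma: submechanisms} hands us, i.e.\ making sure agent $4$ is matched to himself at the chosen profile; this is forced by putting agent $4$'s favorite partner as someone who is already spoken for. Having excluded $i^*=2$, we conclude $i^*=1$, so agent $1$ is the dictator in $f^{-3}$ and in $f^{-4}$.

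\textbf{Step 2: agent $1$ is the dictator in $f^{-2}$.} Here the competitor is agent $2$'s role being ``inherited'' --- but agent $2$ is absent from $f^{-2}$, so the candidate dictators are $1,3,4$. I would fix a profile $\succsim$ in which agent $2$ is efficiently matched to himself (put $\ptop(\succsim_2)=2$), so Lemma~\ref{lemma: submechanisms} gives $f(\succsim)(i)=f^{-2}(\succsim)(i)$ for $i\neq 2$; simultaneously arrange the remaining agents so that we may also read off $f^{-3}$ or $f^{-4}$ from $f(\succsim)$, and invoke Step~1 to say agent $1$ gets his top choice among $\{1,3,4\}$ there. By varying agent $1$'s top choice over $3$ and $4$ (and over staying single) while keeping agent $2$ matched to himself, I get that in $f^{-2}$ agent $1$ always receives $\ptop(\succsim_1)$, so agent $1$ dictates $f^{-2}$. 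A symmetric cross-check using $f^{-1}$ having dictator $2$ may be needed to handle the case where agent $1$'s top choice is agent $2$ (impossible in $f^{-2}$ but relevant in the full mechanism), but this is a routine bookkeeping matter.

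\textbf{Main obstacle.} The real work is Step~1 --- constructing the single witnessing profile, because one must simultaneously (i) make agent $2$ (resp.\ agent $4$ or $3$) matched to himself so the right three-agent submechanism is exposed, (ii) make the single-free restriction force agent $1$'s hand in a way that contradicts $i^*=2$, and (iii) keep every invocation of group strategy-proofness and efficiency between adjacent profiles airtight. I expect the proof to proceed by a short chain of profiles, each step justified by group strategy-proofness (holding an agent's match fixed when only that agent's lower-ranked preferences change) or by a Pareto-domination argument forcing a particular match, exactly in the style of Lemmas~\ref{lemma: same first dict in m-3 m-4} and~\ref{lemma: all agents are owned}.
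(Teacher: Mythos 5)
Your high-level strategy is the paper's: use Lemma~\ref{lemma: same first dict in m-3 m-4} to reduce to ruling out agent $2$ as the common dictator of $f^{-3}$ and $f^{-4}$ by clashing with agent $1$'s dictatorship on $\overline{\Omega}^1$ (Lemma~\ref{lemma: 1 dictates overline}), and then pin down $f^{-2}$ by elimination at profiles where agents $2$ and $4$ are efficiently single. Step 2 is essentially the paper's argument. But Step 1, which you yourself identify as the real work, has a genuine gap: the single witnessing profile you describe cannot exist. To invoke Lemma~\ref{lemma: 1 dictates overline} you need a profile in $\overline{\Omega}^1$, where every agent bottom-ranks being single; but at any such profile with four agents, efficiency never leaves agent $4$ single (if $4$ is single so is one other agent, and the two would rather match each other), so you can never expose $f^{-4}$ and be in the single-free domain at the same profile. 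The contradiction therefore cannot be read off one profile; it must be transported along a chain, and the chain is the content of the proof, not bookkeeping.

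Concretely, the paper's resolution has two moves you do not supply. First, at a profile where $4$ top-ranks being single and $2,3$ top-rank each other, the negation hypothesis forces $f(\succsim)=\{\{1\},\{2,3\},\{4\}\}$, i.e.\ agent $1$ ends up \emph{single}, not blocked from a particular partner. Second, one then lets agents $1$ and $4$ jointly deviate so that group strategy-proofness forces them to be matched \emph{to each other} (with $4$ ranked second by $1$ and vice versa); only after they are matched near the top of each other's lists can one drop ``single'' to the bottom of everyone's ranking without disturbing the matching, landing in $\overline{\Omega}^1$ at a profile where agent $1$ top-ranks $3$ but is matched to $4$ --- the contradiction with Lemma~\ref{lemma: 1 dictates overline}. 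Your sketch instead tries to contradict ``agent $1$ gets $3$'' directly against ``agent $2$ grabs $3$,'' but these two claims live at profiles in different subdomains, and without the intermediate joint deviation the group strategy-proofness step from the exposed-$f^{-4}$ profile to the single-free profile does not go through (the singles' matches are not protected when they move ``single'' down their lists). Until that chain is constructed, the lemma is not proved.
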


\begin{proof}
Fix $\succsim$ and the deviations $\succsim'$
\begin{eqnarray*}
\succsim_1: 3,1,2,4&\succsim'_1: 3,4,1,2\\
\succsim_2:3,2,1,4&\\
\succsim_3: 2,1,3,4&\\
\succsim_4: 4,\cdot,\cdot,\cdot\hspace{0.2cm}&\succsim'_4: 1,4,\cdot,\cdot\\
\end{eqnarray*}
By efficiency $f(\succsim)(4)=4$. By Lemma \ref{lemma: submechanisms} $f^{-4}(\succsim)\subset f(\succsim)$. Suppose 1 is not the dictator in $f^{-4}$. So 2 or 3 must be the dictator in $f^{-4}$ 
and we get $f(\succsim)=\{\{1\},\{2,3\},\{4\}\}$.  
 Changing agent 1 and 4's preferences to
 $\succsim'_{1,4}$,  group strategy-proofness implies $f(\succsim'_{1,4},\succsim_{2,3})(1)=4$. By  efficiency  $f(\succsim'_{1,4},\succsim_{2,3})(2)=3$. Now define $\succsim''$ to be identical to $(\succsim'_{1,4},\succsim_{2,3})$ except that each agent ranks being single at the bottom. By group strategy-proofness $f(\succsim'_{1,4},\succsim_{2,3})=f(\succsim'')$. A contradiction arises since $\succsim''\in \overline{\Sigma}^1$ and agent 1 is by assumption the dictator for problems where all agents rank being single at the bottom. So agent 1 must be the dictator in $f^{-4}$. 
 By Lemma \ref{lemma: same first dict in m-3 m-4} agent 1 is also the dictator in $f^{-3}$.

 Now define $\succsim$
 \begin{eqnarray*}
&\succsim_1:& 1,3,\cdot,\cdot,\\
&\succsim_2:& 2,\cdot, \cdot,\cdot\\
&\succsim_3:& 1,3,\cdot, \cdot,\\
&\succsim_4:& 4,\cdot, \cdot, \cdot.\\
\end{eqnarray*}

By efficiency $f(\succsim)(2)=2$ and $f(\succsim)(4)=4$. By
   Lemma \ref{lemma: submechanisms}  $f(\succsim)$ is consistent with
   $f^{-2}(\succsim)$  and $f^{-4}(\succsim)$. Since agent 1 is the dictator in $f^{-4}$, $f^{-4}(\succsim)(1)=1=f(\succsim)(1)=f^{-2}(\succsim)(1)$. The latter implies that
   agent 3 cannot be the dictator in
     $f^{-2}$. Mutatis mutandis the dictator of $f^{-2}$ cannot be agent 4 either. So the dictator of $f^{-2}$ must be agent 1.
\end{proof}

\begin{lemma}
Agent 1 is the dictator in $f$.
\end{lemma}

\begin{proof}

We have to show that $f(\succsim)(1)=\ptop(\succsim_1)$ for all $\succsim$.
Fix an arbitrary $\succsim$. Assume without loss of generality that $\ptop(\succsim_1)\in\{1,2\}$. If $f(\succsim)(k)=k$ for $k=3$ or $k=4$, then $f^{-k}(\succsim)\subset f(\succsim)$. Since 1 is the dictator in $f^{-k}$ for $k=3,4$ we get that $f^{-k}(\succsim)(1)=f(\succsim)(1)=\ptop(\succsim_1)$. So for the remainder assume that $(3,4)\in f(\succsim)$.

\medskip
\textbf{Case 1: $ \ptop(\succsim_1)=2$.}
Suppose we have $f(\succsim)(1)\neq 2$.

Since $(3,4)\in f(\succsim)$,  $f(\succsim)$ must then equal $\{(1),(2),(3,4)\}$. By group strategy-proofness we can w.l.o.g assume that

 \begin{eqnarray*}
&\succsim_1:& 2,1,3,\cdot,\\
&\succsim_2:& 2,\cdot, \cdot,\cdot\\
&\succsim_3:& 4,\cdot,\cdot, \cdot,\\
&\succsim_4:& 3,4\cdot, \cdot, \cdot.\\
\end{eqnarray*}

By strategy-proofness $f(\succsim'_1,\succsim_{-1})(1)\in \{1,3\}$ for  $\succsim'_1: 2,3,1$. If $f(\succsim'_1,\succsim_{-1})(1)=1$, then group strategy-proofness implies $f(\succsim'_1,\succsim_{-1})=f(\succsim)$, and $f(\succsim'_1,\succsim_{-1})$ is consistent with $f^{-2}(\succsim'_1,\succsim_{-1})$. A contradiction arises since agent 1, would as the dictator in $f^{-2}$ choose agent 3. So we must have $f(\succsim'_1,\succsim_{-1})(1)=3$. Efficiency then implies $f(\succsim'_1,\succsim_{-1})(4)=4$ so that $f(\succsim'_1,\succsim_{-1})$ is consistent with $f^{-4}(\succsim'_1,\succsim_{-1})$.  A contradiction arises since agent 1, would as the dictator in $f^{-4}$ choose agent 2.

\textbf{Case 2: $\ptop(\succsim_1)=1$.} 
Since $(3,4)\in f(\succsim)$,  $f(\succsim)$ must then equal $\{(1,2),(3,4)\}$. . By group strategy-proofness assume w.l.o.g. that  $\succsim$ is given by

 \begin{eqnarray*}
&\succsim_1:& 1,2,3,\cdot,\\
&\succsim_2:& 1,2\cdot, \cdot,\\
&\succsim_3:& 4,\cdot,\cdot, \cdot,\\
&\succsim_4:& 3,4\cdot, \cdot,.\\
\end{eqnarray*}
Now swap agents 2 and 3 in agent 1's ranking, so that $\succsim'_1:1,3,2$. By strategy-proofness, $f(\succsim'_1,\succsim_{-1})(1)\in \{2,3\}$. Suppose  $f(\succsim'_1,\succsim_{-1})= 2$, then we have $f(\succsim'_1,\succsim_{-1})=f(\succsim)$ by group strategy-proofness. Now define $\succsim''$ to be identical to $(\succsim'_1,\succsim_{-1})$ except that each agent drops being single to the bottom of their ranking. By group strategy-proofness $f(\succsim'')=f(\succsim'_1,\succsim_{-1})$. A contradiction arises since $\succsim''\in \overline{\Sigma}^1$, but $f(\succsim'')(1)=2\neq \ptop(\succsim'')(1)=3$. So we must have $f(\succsim'_1,\succsim_{-1})(1)=3$. By efficiency agents 2 and 4 must then stay single. A contradiction arises since $f(\succsim'_1,\succsim_{-1})(1)$ must then equal $f^{-2}(\succsim'_1,\succsim_{-1})(1)=1$ since agent 1 is the dictator in $f^{-2}$. 

\end{proof}

\subsection{The case of $n\geq 4$ agents}
For this section fix a group strategy-proof and efficient mechanism $f$ for $n+1$ agents and assume that all such mechanisms for $n$ or fewer agents are sequential dictatorships.

\begin{lemma}\label{lemma: more than 4 all submech, same dict}
If agent 1 is the dictator in  $f^{-2}$, then some agent $j\in\{1,2\}$ is the dictator in $f^{-k}$ for all $k\notin \{1,2\}$.
\end{lemma}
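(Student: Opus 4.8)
The plan is to leverage the induction hypothesis that mechanisms for $n$ or fewer agents are sequential dictatorships, combined with the submechanism machinery from Lemma \ref{lemma: submechanisms}. Each $f^{-k}$ for $k \geq 3$ is a group strategy-proof and efficient mechanism for $n$ agents, hence by induction a sequential dictatorship with some well-defined first dictator $d_k$. The claim is that all these first dictators agree, and the common value lies in $\{1,2\}$. The natural strategy mirrors the four-agent argument in Lemma \ref{lemma: same first dict in m-3 m-4} and Lemma \ref{lemma: dictators in the submechanisms}: construct specific preference profiles at which several agents are forced (by efficiency) to stay single, so that $f$ simultaneously agrees with two different submechanisms $f^{-k}$ and $f^{-k'}$; then the first-dictator choices in those submechanisms must be reconciled.

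\textbf{First}, I would show that for any $k, k' \notin \{1,2\}$ the first dictators of $f^{-k}$ and $f^{-k'}$ coincide. Fix a profile $\succsim$ where agents $1$ and $2$ both top-rank agent $1$ (i.e. $\ptop(\succsim_1)=\ptop(\succsim_2)=1$) and every agent $j \notin\{1,2\}$ top-ranks staying single. By efficiency, every such $j$ stays single in $f(\succsim)$; in particular $k$ and $k'$ are single, so by Lemma \ref{lemma: submechanisms} $f(\succsim)$ is consistent with both $f^{-k}(\succsim)$ and $f^{-k'}(\succsim)$. Since both are sequential dictatorships and agent $1$ is available and top-ranks agent $1$, whoever is first dictator of $f^{-k}$ matches agent $1$ to himself if that dictator is $1$, or matches their own top choice otherwise; reconciling the two consistent outcomes on the coordinate of the first dictator forces the first dictators of $f^{-k}$ and $f^{-k'}$ to be the same agent (this is exactly the reconciliation step in the second half of Lemma \ref{lemma: same first dict in m-3 m-4}). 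Call this common agent $j^*$.

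\textbf{Second}, I would rule out $j^* \notin \{1,2\}$ using the hypothesis that agent $1$ is the dictator in $f^{-2}$. Suppose $j^* = \ell$ for some $\ell \geq 3$, and pick $k \notin \{1,2,\ell\}$ (possible since $n+1 \geq 5$). Construct a profile where agent $2$ stays single (by making agent $2$ top-rank itself, or by arranging efficiency to isolate agent $2$) and agent $k$ also stays single, so $f$ agrees with $f^{-2}$ (where agent $1$ dictates) and with $f^{-k}$ (where $\ell$ is first dictator); but having agent $\ell$ as the first mover in $f^{-k}$ on a profile where agent $\ell$ grabs some partner contradicts agent $1$ being the dictator in the $f^{-2}$-consistent part, since agent $1$'s top choice would be overridden. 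This is the analogue of the argument in Lemma \ref{lemma: dictators in the submechanisms} showing the submechanism dictators must be agent $1$; I would adapt the specific profile there, replacing the fixed agents $3,4$ with the generic pair $k,\ell$.

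\textbf{The main obstacle} will be the bookkeeping in the second step: ensuring that the constructed profiles genuinely force the intended agents to stay single \emph{and} keep agent $1$'s and agent $\ell$'s relevant top choices available, all while respecting group strategy-proofness so that we may assume canonical forms for the off-path preferences. In particular one must be careful that when agent $\ell$ is the purported first dictator of $f^{-k}$, the profile makes $\ell$ choose a partner other than themselves (otherwise no contradiction arises), yet simultaneously the $f^{-2}$-consistency pins agent $1$ to a conflicting match — threading this needle requires choosing $\ell$'s and agent $1$'s rankings so their "wanted" partners overlap appropriately, exactly as engineered in the three- and four-agent base cases. Once the right profile is found, the contradiction follows by the now-routine combination of efficiency, Lemma \ref{lemma: submechanisms}, and group strategy-proofness.
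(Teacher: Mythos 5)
Your two ingredients are the right ones and both appear in the paper's proof, but you have assembled them in the wrong order, and this creates a real gap in your first step. At the profile where agents $1$ and $2$ top-rank agent $1$ and every agent $j\notin\{1,2\}$ top-ranks staying single, consistency of $f(\succsim)$ with $f^{-k}(\succsim)$ and $f^{-k'}(\succsim)$ does \emph{not} force the first dictators to coincide. If the first dictator of $f^{-k}$ is some $\ell\geq 3$, then at this profile $\ell$ simply picks herself and the algorithm moves on; the observable outcome only reveals which of agents $1$ and $2$ moves \emph{earlier} in the relevant branch of the picking order, not who moves first overall. For instance, $f^{-k}$ having first dictator $3$ followed by $1$, and $f^{-k'}$ having first dictator $1$, produce identical matchings here, so no reconciliation is possible. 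The profile only pins down the dictator once you already know the dictator lies in $\{1,2\}$ --- which is exactly why the paper establishes that confinement \emph{first}. Since your second step presupposes the existence of the common dictator $j^*$ produced by step one, the argument as written does not close.

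The fix is to swap the steps, and your step-two machinery already does the heavy lifting: for each individual $k\notin\{1,2\}$, suppose the first dictator of $f^{-k}$ is some $\ell\geq 3$ (note $\ell\neq k$ automatically), and build a profile in which $k$ is isolated (everyone bottom-ranks $k$, $k$ top-ranks being single), every other agent top-ranks agent $1$, agent $1$ top-ranks herself, and agents other than $1,k$ rank being single second. Then $\ell$ grabs agent $1$ via $f^{-k}$, efficiency leaves agent $2$ single, and consistency with $f^{-2}$ forces agent $1$ to be single --- a contradiction. Only after this per-$k$ confinement to $\{1,2\}$ does your first-step profile correctly identify a single common dictator across all $f^{-k}$. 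With that reordering your proof becomes essentially the paper's.
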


\begin{proof}
Suppose agent $4$ was the dictator in $f^{-3}$.
Fix $\succsim$ such that each agent $i\neq 3$  ranks agents 1 and 3 respectively at the top and  at the bottom, agents $i\neq 1,3$ rank being single  in second place and $\ptop(\succsim_3)=3$.

\begin{eqnarray*}
&\succsim_1: &1,\cdot, \cdot, \cdots, 3\\
&\succsim_2: &1, 2,\cdot,  \cdots, 3\\
&\succsim_3: &3, \cdot, \cdot, \cdots\\
&\succsim_4: &1, 4,\cdot, \cdots, 3\\
&\cdot&\\
&\cdot&
\end{eqnarray*}

By efficiency $f(\succsim)$ is consistent with $f^{-3}(\succsim)$. Since agent 4 is the dictator in $f^{-3}$, and since $\ptop(\succsim_4)=1$, we have $f(\succsim)(4)=1$. By efficiency $f(\succsim)(i)=i$ for all $i\notin \{1,4\}$, in particular $f(\succsim)(2)=2$, so that $f^{-2}(\succsim)$ is consistent with $f(\succsim)$. Since agent 1 is the dictator in $f^{-2}$, we obtain the contradiction that $f(\succsim)(1)=f^{-2}(\succsim)(1)=1$. Since agents 3 and 4 were chosen arbitrarily in $N\setminus \{1,2\}$ the dictator in $f^{-k}$ is for each $k\notin\{1,2\}$ is either agent 1 or agent 2.

To see that $f^{-k}$ for each $k\notin \{1,2\}$ must have the same dictator consider the profile $\succsim$ where agents 1 and 2 both rank agents 1 and 2 in first and second place and where any agent $i\notin \{1,2\}$ top ranks being single, so

\begin{eqnarray*}
&\succsim_1:& 1,2, \cdot, \cdot,\\
&\succsim_2:& 1, 2,\cdot, \cdot, \cdot,\\
&\succsim_3:& 3, \cdot, \cdot, \cdot,\\
&\succsim_4:& 4,\cdot, \cdot, \cdot,\\
&\cdot&\\
&\cdot&
\end{eqnarray*}

By efficiency, $f(\succsim)(i)=i$ for each $i\notin \{1,2\}$. So $f(\succsim)$ is consistent with $f^{-i}(\succsim)$ for each $i\notin \{1,2\}$. By the preceding paragraph some agent $j\in \{1,2\}$ is the dictator in $j^{-3}$, so $f(\succsim)(j)=f^{-3}(\succsim)(j)=1$. Now consider any $k\notin\{1,2,3\}$. Since $f(\succsim)$ is consistent with $f^{-k}$, agent $j$ must also be the dictator in $f^{-k}$

\end{proof}

By Lemma \ref{lemma: more than 4 all submech, same dict} one agent $j$ is the dictator in all submechanisms $f^{-k}$ with $j\neq k$. For the remainder assume w.l.o.g that agent 1 is this agent.

\begin{lemma}\label{lemma: all j-k submechs have the same dictator}
Agent 1 is the dictator in $f^{-j,k}$ for any two agents $j,k$ such that $1\notin \{j,k\}$.
\end{lemma}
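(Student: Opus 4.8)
The plan is to leverage the two-stage structure of the submechanisms: $f^{-j,k}$ is a group strategy-proof and efficient mechanism on $n-1$ agents (by Lemma \ref{lemma: submechanisms}), hence by the inductive hypothesis it is a sequential dictatorship, so it has some well-defined dictator; I need to show that dictator is agent $1$. The natural approach mirrors the proof of Lemma \ref{lemma: more than 4 all submech, same dict}: construct a single profile $\succsim$ at which several submechanisms must be simultaneously ``consistent'' with $f(\succsim)$ (in the sense of Lemma \ref{lemma: submechanisms}), and play off the known dictator of a single-agent-removed submechanism $f^{-k}$ against the unknown dictator of the doubly-removed submechanism $f^{-j,k}$.

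First I would fix two agents $j,k$ with $1 \notin \{j,k\}$, and let $\ell$ denote the dictator of $f^{-j,k}$, which exists by the induction hypothesis applied via Lemma \ref{lemma: submechanisms}. I would rule out $\ell \in \{j,k\}$ by the usual device: at a profile where $j$ and $k$ mutually top-rank each other and everyone else bottom-ranks them both, efficiency forces $(j,k) \in f(\succsim)$, so $f^{-j,k}(\succsim) \subset f(\succsim)$; but if, say, $\ell = j$, then one can additionally arrange that $j$'s favorite among $N \setminus \{j,k\}$ is some agent $i$ who is already matched in $f(\succsim)$ to someone they strictly prefer, yielding a contradiction with the definition of ownership/dictatorship in $f^{-j,k}$. (This step is essentially bookkeeping.) The main work is showing $\ell = 1$ when $1, \ell, j, k$ are four distinct agents. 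Here I would build a profile $\succsim$ in which: agent $1$ top-ranks an agent, say $j$, that everyone else bottom-ranks; agents $j$ and $k$ are such that efficiency will force them both single once agent $1$'s and $\ell$'s moves are resolved; and $\ell$ top-ranks some agent in $N \setminus \{j,k\}$. Then $f(\succsim)$ must be consistent both with $f^{-k}$ (where agent $1$ is the known dictator, forcing agent $1$'s match) and with $f^{-j,k}$ (where $\ell$ is the hypothesized dictator, forcing $\ell$'s match) — and I arrange the preferences so these two demands collide, e.g. both $1$ and $\ell$ are forced to claim the same partner, or agent $1$ is forced to be single by $f^{-k}$ while $\ell \neq 1$ is forced by $f^{-j,k}$ to grab someone, contradicting efficiency or Lemma \ref{lemma: submechanisms}.

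The hard part will be engineering the profile so that \emph{all} the consistency constraints fire at once: I need $j$ (the removed agent whose removal is ``witnessed'' by $f$) to be matched to themselves or to $k$ in $f(\succsim)$ so that Lemma \ref{lemma: submechanisms} licenses passing to $f^{-j,k}$, while simultaneously keeping some third agent single to pass to $f^{-k}$, and doing this without the preferences becoming infeasible or letting efficiency pick an unintended matching. I expect to need two sub-cases depending on whether $\ell$'s top choice in $N\setminus\{j,k\}$ coincides with agent $1$'s target, as in Lemma \ref{lemma: dictators in the submechanisms}. Once $\ell = 1$ is forced for every pair $j,k$, the lemma follows, setting up the inductive step to conclude $f$ itself has dictator $1$ and hence is a sequential dictatorship.
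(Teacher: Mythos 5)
Your high-level strategy --- pass to $f^{-j,k}$, note it is a sequential dictatorship by induction, and collide its unknown dictator with the known dictator of a single-removal submechanism at one carefully chosen profile --- is exactly the paper's strategy. But the concrete construction you sketch has a genuine flaw: you propose to invoke consistency with $f^{-k}$ and with $f^{-j,k}$ \emph{at the same profile}. By Lemma \ref{lemma: submechanisms}, consistency with $f^{-k}$ requires $f(\succsim)(k)=k$, while consistency with $f^{-j,k}$ requires $f(\succsim)(j)=k$; these cannot hold simultaneously for $j\neq k$. Relatedly, you conflate the two licensing conditions in several places: having $j$ ``matched to themselves'' licenses passing to $f^{-j}$, not to $f^{-j,k}$, and having $j$ and $k$ ``both single'' licenses neither. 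The single-removal submechanism you play off against $f^{-j,k}$ must be $f^{-m}$ for a \emph{third} agent $m\notin\{1,j,k\}$. (Your preliminary step of ruling out $\ell\in\{j,k\}$ is also vacuous: $f^{-j,k}$ is a mechanism on $N\setminus\{j,k\}$, so its dictator cannot be $j$ or $k$ by definition.)

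The paper's construction, which your sketch is missing, is: let $j$ and $k$ mutually top-rank each other (efficiency then forces $(j,k)\in f(\succsim)$, licensing $f^{-j,k}$), let a third agent $m$ top-rank himself while everyone else bottom-ranks $j,k,m$ (efficiency forces $m$ single, licensing $f^{-m}$), let agent $1$ top-rank herself, and let every remaining agent top-rank agent $1$. Since agent $1$ dictates $f^{-m}$, she stays single; since every agent of $f^{-j,k}$ other than $1$ and $m$ top-ranks agent $1$, none of them can be the dictator of $f^{-j,k}$ (they would have grabbed her), so the dictator lies in $\{1,m\}$. Repeating with a different third agent $m'$ pins the dictator down to agent $1$. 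Your alternative of assuming a specific $\ell\neq 1$ and tailoring the profile to $\ell$ could in principle replace the ``repeat with $m'$'' step, but as written your profile is not pinned down and its stated constraints are mutually inconsistent, so the proof does not go through without the repair above.
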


\begin{proof}
Fix $\succsim$ such that two agents other than agent 1, say agents 2 and 3 top rank each other while a third agent, say agent 4, top ranks himself. All remaining agents top rank agent 1 and rank agents 2, 3 and 4 at the bottom.
\begin{eqnarray*}
&\succsim_1:& 1, \cdot, \dots, 2,3,4\\
&\succsim_2:& 3, \cdot, \dots, \\
&\succsim_3:& 2, \cdot, \dots, \\
&\succsim_4:& 4, \cdot, \dots, \\
&\succsim_5: &1, \cdot, \dots, 2,3,4\\
&\cdot&\\
&\cdot&\\
&\cdot&
\end{eqnarray*}

 By efficiency $f(\succsim)$ matches agent 4 with himself and agents 2 and 3 with each other. So $f(\succsim)$ is consistent with $f^{-4}(\succsim)$ and $f^{-2,3}(\succsim)$. Since agent 1 is the dictator in $f^{-4}(\succsim)$, $f(\succsim)(1)=1$. Since $f^{-2,3}(\succsim)$ is consistent with $f(\succsim)$, $f^{-2,3}(\succsim)(1)=1$. Since all agents but agent 4 top rank agent 1 in $\succsim$, the dictator in $f^{-2,3}$ is either agent 1 or agent 4. Repeating the same arguments with swapping agent 4 and 5, we see that either agent 1 or agent 5 is the dictator in $f^{-2,3}$. In sum, agent 1 must be the dictator in $f^{-2,3}$. Since agents 2 and 3 were chosen arbitrarily, agent 1 is the dictator in any $f^{-j,k}$ for $1\notin \{j,k\}$.
\end{proof}

\begin{lemma}
Agent 1 is the dictator in $f$.
\end{lemma}

\begin{proof}
Fix a profile $\succsim$. Suppose $f(\succsim)(1)\neq \ptop(\succsim_1)$.
Since $n\geq 5$, we can fix an agent $j$ such that  $\{j, f(\succsim)(j)\}\cap \{1, f(\succsim)(1), \ptop(\succsim_1)\}=\emptyset$.
If $j=f(\succsim)(j)$, then $f(\succsim)$ is consistent with $f^{-j}(\succsim)$. Since
agent 1 is by Lemma \ref{lemma: more than 4 all submech, same dict} the dictator in $f^{-j}$ and since $\ptop(\succsim_1)\neq j$ we then obtain $f(\succsim)(1)=f^{-j}(\succsim)(1)=\ptop(\succsim_1)$. If $j\neq f(\succsim)(j)$, define $f(\succsim)(j)=k$. In that case $f(\succsim)$ is consistent with $f^{-j,k}(\succsim)$.
Since agent 1 is by Lemma \ref{lemma: all j-k submechs have the same dictator} the dictator in $f^{-j,k}$ and since $\ptop(\succsim_1)\notin \{j,k\}$ we can conclude as above that $f(\succsim)(1)=f^{j,k}(\succsim)(1)=\ptop(\succsim_1)$.
\end{proof}

\section{Proof of Theorem \ref{two-sided characterization}}

Fix a mechanism $f:\Omega^{2}\rightarrow \Sigma^{2}$. Restricted to the domain  $\Omega^{2}_{symm}$ the mechanism $f$ is by Lemma \ref{lemma: induced roommates} and Theorem \ref{roommates mechanism} a serial dictatorship in which one couple can choose to either stay alone (marry each other) or pair up with a different couple (swap partners). 
Recall that $f$ is assumed to be weakly gender-neutral with respect to $\sigma$ that is defined such that $\sigma(m_i)=w_i$ for all $i$. 
Without loss of generality say the couple  $(m_1,w_1)$ is the dictator in the embedded roommates mechanism. 
We call this the ``royal couple" and all other agents ``commoners." The rest of this proof is dedicated to showing that the royal couples' powers also apply to symmetric preferences. At the same time we must show that any conflict between the royal couples interests (when one wants the royals to marry and the other does not) must be mediated using either the matched-by-default or the unmatched-by-default protocol.

For Lemmas \ref{lemma: at least one gets best}, \ref{lemma: wanting commoners and royals}, and \ref{lemma: when possible both get best} fix a symmetric profile $\succsim^{sym}_{-m_1,w_1}$ for all commoners. 

\begin{lemma}\label{lemma: at least one gets best}
 Say $\Omega^{l,k}_{m_1,w_1}$ is the set of all preferences of $m_1$ and $w_1$ that respectively top rank $w_l$ and $m_k$. Then either
\begin{itemize}
 \item $f(\succsim_{m_1,w_1},\succsim^{sym}_{-m_1,w_1} )(m_1)=w_l$ and $f(\succsim_{m_1,w_1},\succsim^{sym}_{-m_1,w_1} )(w_1)=m_k$ 
 for all $\succsim_{m_1,w_1}\in \Omega^{l,k}_{m_1,w_1}$, or

 \item $f(\succsim_{m_1,w_1},\succsim^{sym}_{-m_1,w_1} )(m_1)\neq w_l$ and $f(\succsim_{m_1,w_1},\succsim^{sym}_{-m_1,w_1} )(w_1)=m_k$
 for all $\succsim_{m_1,w_1}\in \Omega^{l.k}_{m_1,w_1}$ or
 \item $f(\succsim_{m_1,w_1},\succsim^{sym}_{-m_1,w_1} )(m_1)=w_l$ and $f(\succsim_{m_1,w_1},\succsim^{sym}_{-m_1,w_1} )(w_1)\neq m_k$
 for all $\succsim_{m_1,w_1}\in \Omega^{l,k}_{m_1,w_1}$.
 \end{itemize}
\end{lemma}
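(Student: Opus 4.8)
The plan is to reduce the statement to a few clean sub-claims and attack them with strategy-proofness, efficiency, weak gender-neutrality, and the fact --- already available from Lemma~\ref{lemma: induced roommates} and Theorem~\ref{roommates mechanism} --- that on symmetric profiles the royal couple $(m_1,w_1)$ is the first dictator, so that \emph{both} royals receive their top choices. I would first record two workhorse facts. \textbf{Non-bossiness:} since $f$ is group strategy-proof, if a single agent changes their announcement without changing their own partner, the entire matching is unchanged --- otherwise the coalition consisting of that agent together with anyone made strictly better off could profitably deviate. \textbf{Option sets:} holding the commoners fixed at $\succsim^{sym}_{-m_1,w_1}$, the map $\succsim_{m_1}\mapsto f(\succsim_{m_1},\succsim_{w_1},\succsim^{sym}_{-m_1,w_1})(m_1)$ is a strategy-proof social choice function, hence is $\succsim_{m_1}$-maximisation over a fixed option set $O_m(\succsim_{w_1})$ depending only on $\succsim_{w_1}$ (and the commoners); symmetrically $w_1$'s partner is $\succsim_{w_1}$-maximal over a set $O_w(\succsim_{m_1})$ depending only on $\succsim_{m_1}$. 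Since $m_1$ top-ranks $w_l$, the event $X:=\{f(\succsim)(m_1)=w_l\}$ is exactly $\{w_l\in O_m(\succsim_{w_1})\}$, so it is independent of the finer structure of $\succsim_{m_1}$; dually $Y:=\{f(\succsim)(w_1)=m_k\}$ equals $\{m_k\in O_w(\succsim_{m_1})\}$ and is independent of $\succsim_{w_1}$. The lemma then reduces to: $(\alpha)$ $X$ does not depend on $\succsim_{w_1}$ (given its top is $m_k$); $(\beta)$ $Y$ does not depend on $\succsim_{m_1}$; $(\gamma)$ $X$ and $Y$ are not simultaneously false throughout $\Omega^{l,k}_{m_1,w_1}$. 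Granting these, $X$ and $Y$ are constants on $\Omega^{l,k}_{m_1,w_1}$, not both false, which is precisely the stated trichotomy.

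Next I would dispose of $(\gamma)$ --- and, as a bonus, the whole lemma when $k=l$ --- by anchoring on symmetric profiles. Given any $\succsim_{m_1}$ topping $w_l$, the profile $\succsim^{0}$ that gives $w_1$ the reflected preference $\sigma(\succsim_{m_1})$ (which tops $m_l$) and keeps the commoners at $\succsim^{sym}_{-m_1,w_1}$ is symmetric, so $f(\succsim^0)(m_1)=w_l$ and $f(\succsim^0)(w_1)=m_l$. If $k=l$: any $\succsim\in\Omega^{l,l}_{m_1,w_1}$ differs from $\succsim^0$ only in $w_1$'s report; strategy-proofness for $w_1$ (deviating to $\sigma(\succsim_{m_1})$ would yield $m_l$) forces $f(\succsim)(w_1)\succsim_{w_1}m_l$, and since $\succsim_{w_1}$ tops $m_l$ this gives $f(\succsim)(w_1)=m_l$; non-bossiness then yields $f(\succsim)=f(\succsim^0)$, so $f(\succsim)(m_1)=w_l$, which is the first bullet. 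If $k\neq l$: take $\succsim_{m_1}:w_l,w_k,\dots$ and, for an arbitrary $\succsim_{w_1}$ topping $m_k$, compare $\succsim=(\succsim_{m_1},\succsim_{w_1},\succsim^{sym}_{-m_1,w_1})$ with the symmetric profile $(\sigma(\succsim_{w_1}),\succsim_{w_1},\succsim^{sym}_{-m_1,w_1})\in\Omega^{k,k}_{m_1,w_1}$, which differs from $\succsim$ only in $m_1$'s report and at which $f(m_1)=w_k$, $f(w_1)=m_k$. Strategy-proofness for $m_1$ gives $f(\succsim)(m_1)\succsim_{m_1}w_k$, so $f(\succsim)(m_1)\in\{w_l,w_k\}$; if it is $w_l$ then $X$ holds at $\succsim$, and if it is $w_k$ then $m_1$'s partner is unchanged so by non-bossiness $f(\succsim)$ equals the symmetric matching and $Y$ holds at $\succsim$. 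Either way at least one of $X,Y$ holds, which gives $(\gamma)$.

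Finally $(\alpha)$ (and hence, by symmetry, $(\beta)$). The reflection identity $f(\sigma(\succsim))=\sigma\ast f(\succsim)$ shows that $\sigma$ carries $\Omega^{l,k}_{m_1,w_1}$ onto $\Omega^{k,l}_{m_1,w_1}$ and that $\{f(\succsim)(m_1)=w_l\}=\{f(\sigma(\succsim))(w_1)=m_l\}$; combined with the option-set fact this makes ``$(\alpha)$ for $\Omega^{l,k}_{m_1,w_1}$'' equivalent to ``$(\beta)$ for $\Omega^{k,l}_{m_1,w_1}$'', so it suffices to prove $(\alpha)$ for every ordered pair. Suppose $(\alpha)$ fails: there are $\succsim_{w_1},\succsim'_{w_1}$ topping $m_k$ with $w_l\in O_m(\succsim_{w_1})\setminus O_m(\succsim'_{w_1})$; fixing any $\succsim_{m_1}$ topping $w_l$ and walking from $\succsim_{w_1}$ to $\succsim'_{w_1}$ by adjacent transpositions that fix the top, one reaches a single adjacent swap of two men $m_c,m_d$ (both below the top) across which $m_1$'s partner changes from $w_l$ to some $w_p\neq w_l$. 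Since $O_w(\succsim_{m_1})$ is insensitive to $w_1$'s report, the only way $w_1$'s partner can move across this swap is a flip $m_c\to m_d$; in every other case $w_1$'s partner is unchanged, and non-bossiness makes the whole matching unchanged, contradicting the change in $m_1$'s partner. After normalising the two $w_1$-preferences (via monotone lifts of $m_c$ resp.\ $m_d$ together with non-bossiness) so that they differ only by transposing $m_c,m_d$ in the second and third slots, one is left with the single recalcitrant configuration: an adjacent $2$--$3$ swap in $w_1$'s preference that simultaneously flips $w_1$'s partner $m_c\leftrightarrow m_d$ and $m_1$'s partner $w_l\to w_p$. This is the step I expect to be the main obstacle, since it is invisible to strategy-proofness and non-bossiness on a single slice. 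I would close it by reflecting both matchings under $\sigma$ --- turning the picture into a swap in $m_1$'s preference that flips $w_1$'s partner, i.e.\ an instance of failing $(\beta)$ in $\Omega^{k,l}_{m_1,w_1}$ --- and combining this with the dichotomy of the previous paragraph and an efficiency comparison: the two efficient matchings involved cannot differ only on the royals' partners, since $\{w_l,m_c\}$ and $\{w_p,m_d\}$ each consist of one man and one woman and would have to coincide as sets, so they must reshuffle some commoner, and chasing that reshuffle back through non-bossiness yields the contradiction.
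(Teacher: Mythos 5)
Your reduction to the three sub-claims $(\alpha)$, $(\beta)$, $(\gamma)$ is sound, and most of the pieces are correct: the non-bossiness and option-set facts are valid consequences of group strategy-proofness; the $k=l$ case and the dichotomy $(\gamma)$ are proved correctly by anchoring at symmetric profiles (where the royal couple, as first dictator of the induced one-sided mechanism, gets its top); and the reflection identity does convert $(\alpha)$ on $\Omega^{l,k}_{m_1,w_1}$ into $(\beta)$ on $\Omega^{k,l}_{m_1,w_1}$. This is organized differently from the paper, which instead pins down the outcome at the truncated announcements $\succsim'_{m_1}:w_l,w_k$ and $\succsim'_{w_1}:m_k,m_l$, shows only three configurations are possible there, and then propagates each to all of $\Omega^{l,k}_{m_1,w_1}$ by group strategy-proofness.

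The genuine gap is the last step of $(\alpha)$, which you yourself flag as the main obstacle. The ``recalcitrant configuration'' (an adjacent swap of $m_c,m_d$ below $w_1$'s top that simultaneously flips $w_1$'s partner $m_c\leftrightarrow m_d$ and moves $m_1$ from $w_l$ to $w_p$) is not closed by your sketch: once $w_1$'s partner has changed, non-bossiness has nothing further to say, and efficiency of the two matchings at profiles that differ only below $w_1$'s top does not by itself produce a contradiction; ``chasing the reshuffle'' is not an argument. The fix, however, is already in your own toolkit: since $O_m(\succsim_{w_1})$ does not depend on $m_1$'s report, you may test membership of $w_l$ using the specific $\succsim^0_{m_1}:w_l,w_k,\dots$ from your $(\gamma)$ argument. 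With that choice, either $m_k\in O_w(\succsim^0_{m_1})$, in which case every $\succsim_{w_1}$ topping $m_k$ gives $w_1$ the partner $m_k$ and non-bossiness freezes the \emph{entire} matching as $\succsim_{w_1}$ varies, so $X$ is constant; or $m_k\notin O_w(\succsim^0_{m_1})$, in which case your $(\gamma)$ dichotomy (partner of $m_1$ lies in $\{w_l,w_k\}$, and $w_k$ would force $w_1$ to get $m_k$) leaves only $f(\cdot)(m_1)=w_l$ at every such profile, so $X$ holds identically. Either branch yields $(\alpha)$ with no walk and no efficiency argument; as written, though, the proposal does not prove the lemma.
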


\begin{proof}

Pick an arbitrary $\succsim_{m_1,w_1}\in \Omega^{l,k}_{m_1,w_1}$.

\textbf{Case 1: $k=l$}  
Let $\succsim'_{m_1}=\sigma(\succsim_{w_1})$  so that
Theorem \ref{roommates mechanism} implies $(m_1,w_k),(w_1,m_k)\in f(\succsim'_{m_1},\succsim_{w_1},\succsim^{sym}_{-m_1,w_1})$ Since $f(\succsim'_{m_1},\succsim_{w_1},\succsim^{sym}_{-m_1,w_1})(m_1)=\ptop(\succsim'_{m_1})=\ptop(\succsim_{m_1})$, group strategy-proofness  then implies that $(m_1,w_k),(w_1,m_k)\in f(\succsim_{m_1,w_1},\succsim^{sym}_{-m_1,w_1})$.

\textbf{Case 2: $k\neq l$}

Let $\succsim'_{m_1}: w_l, w_k$ and $\succsim'_{w_1}: m_k,m_l$. By group strategy-proofness and Case 1, $f(\succsim'_{m_1,w_1},\succsim^{sym}_{-m_1,w_1})$ either marries the royal couple with their most preferred partners (\textbf{Case 2.1}), or with $m_l$ and $w_l$ (\textbf{Case 2.2}) or with $m_k$ and $w_k$ (\textbf{Case 2.3}). In Case 2.1 the group strategy-proofness of $f$ implies $f(\succsim_{m_1,w_1},\succsim^{sym}_{-m_1,w_1})$ matches each royal with their top choice.

For Case 2.2 
Strategy-proofness implies $f(\succsim_{w_1},\succsim'_{m_1},\succsim^{sym}_{-m_1,m_1})(w_1)\neq m_k$.
Now since $m_1$ could swap to make an announcement symmetric to $\succsim_{w_1}$, strategy-proofness and Case 1 imply that $f(\succsim_{w_1},\succsim'_{m_1},\succsim^{sym}_{-m_1,m_1})(m_1)\succsim'_{m_1} w_k$. However, if $f(\succsim_{w_1},\succsim'_{m_1},\succsim^{sym}_{-m_1,m_1})(m_1)=w_k$, we obtain a violation of group strategy-proofness since if $w_1$ announces $\sigma(\succsim_{m_1})$ she gets the $\succsim_{w_1}$-preferred $m_k$. So $f(\succsim_{w_1},\succsim'_{m_1},\succsim^{sym}_{-m_1,m_1})(m_1)=w_l$ must hold. Finally, by strategy-proofness for $m_1$ we have $f(\succsim_{w_1},\succsim'_{m_1},\succsim^{sym}_{-m_1,m_1})(m_1)=
f(\succsim_{m_1,w_1},\succsim^{sym}_{-m_1,m_1})(m_1)=w_l$. Group strategy-proofness then implies  $f(\succsim_{w_1},\succsim'_{m_1},\succsim^{sym}_{-m_1,m_1})=
f(\succsim_{m_1,w_1},\succsim^{sym}_{-m_1,m_1})$, so that $$f(\succsim_{w_1},\succsim'_{m_1},\succsim^{sym}_{-m_1,m_1}(m_1)=
f(\succsim_{m_1,w_1},\succsim^{sym}_{-m_1,m_1})(w_1)\neq m_k$$ as required.
 
 Mutatis mutandis the arguments of Case 2.2 apply to Case 2.3. 
\end{proof}

\begin{lemma}\label{lemma: wanting commoners and royals}
 Say $\Omega^*_{m_1,w_1}$ is the set of preferences for the royals where exactly one royal top ranks the other. 
Then either a) or b) holds for all $\succsim_{m_1,w_1}\in \Omega^*_{m_1,w_1}$ 

\begin{itemize}
    \item[a)] $(m_1,w_1)\in f(\succsim_{m_1,w_1},\succsim^{sym}_{-m_1,m_1})$ .
    \item[b)] $f(\succsim_{m_1,w_1},\succsim^{sym}_{-m_1,m_1})$ matches the royal who top ranks a commoner with that commoner. 
\end{itemize}
\end{lemma}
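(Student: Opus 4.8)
The plan is to extract the pointwise statement from Lemma \ref{lemma: at least one gets best} and then promote it to a uniform one using strategy-proofness and weak gender-neutrality. First I would fix $\succsim_{m_1,w_1}\in\Omega^*_{m_1,w_1}$; the two cases ($m_1$ top-ranking $w_1$, or $w_1$ top-ranking $m_1$) are symmetric, so take the former, and let $m_k$ be the commoner that $w_1$ top-ranks, so that $\succsim_{m_1,w_1}\in\Omega^{1,k}_{m_1,w_1}$. Lemma \ref{lemma: at least one gets best} offers three alternatives, but the one in which both royals receive their top choices is infeasible here: $f(\succsim_{m_1,w_1},\succsim^{sym}_{-m_1,w_1})(m_1)=w_1$ forces $f(\cdot)(w_1)=m_1\neq m_k$. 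So either $f(\cdot)(m_1)=w_1$, i.e. $(m_1,w_1)\in f(\cdot)$, which is case (a), or $f(\cdot)(w_1)=m_k$, and since $w_1$ is precisely the royal top-ranking a commoner, this is case (b). The mirror-image argument handles the other case in $\Omega^*_{m_1,w_1}$, so the pointwise dichotomy holds throughout $\Omega^*_{m_1,w_1}$; moreover, since Lemma \ref{lemma: at least one gets best} delivers its conclusion uniformly over $\Omega^{l,k}_{m_1,w_1}$ for fixed $(l,k)$, the same alternative holds for every profile in $\Omega^{1,k}_{m_1,w_1}$.

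Next I would show the chosen alternative cannot change as the top-ranked commoner varies. Suppose for contradiction that (a) holds throughout $\Omega^{1,k}_{m_1,w_1}$ while (b) holds throughout $\Omega^{1,k'}_{m_1,w_1}$ for distinct commoners $m_k,m_{k'}$. Fix $\succsim_{m_1}:w_1,\dots$ and let $\succsim_{w_1}:m_k,m_{k'},m_1,\dots$ and $\succsim'_{w_1}:m_{k'},m_k,m_1,\dots$ agree below $m_1$. At $(\succsim_{m_1},\succsim_{w_1},\succsim^{sym}_{-m_1,w_1})\in\Omega^{1,k}_{m_1,w_1}$ case (a) gives $f(\cdot)(w_1)=m_1$, while at $(\succsim_{m_1},\succsim'_{w_1},\succsim^{sym}_{-m_1,w_1})\in\Omega^{1,k'}_{m_1,w_1}$ case (b) gives $f(\cdot)(w_1)=m_{k'}$. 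Since $m_{k'}\succ_{w_1}m_1$ under $w_1$'s true preference $\succsim_{w_1}$, reporting $\succsim'_{w_1}$ is a profitable manipulation for $w_1$, contradicting strategy-proofness. Hence one alternative, (a) or (b), holds throughout the family of profiles in $\Omega^*_{m_1,w_1}$ in which $m_1$ top-ranks $w_1$.

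Finally I would tie this family to its mirror image via weak gender-neutrality. If $\succsim$ is in this family with the symmetric commoner profile, then $\sigma(\succsim)$ has $w_1$ top-ranking $m_1$ and $m_1$ top-ranking the commoner $w_k$, still with the symmetric commoner profile (using $\sigma(\succsim^{sym}_{-m_1,w_1})=\succsim^{sym}_{-m_1,w_1}$), and $\sigma$ is an involution carrying the first family bijectively onto the second. From $f(\sigma(\succsim))=\sigma\ast f(\succsim)$ one reads off that $(m_1,w_1)\in f(\succsim)$ iff $(m_1,w_1)\in f(\sigma(\succsim))$, and that $w_1$ is matched to her top commoner under $\succsim$ iff $m_1$ is matched to his top commoner under $\sigma(\succsim)$; that is, (a) maps to (a) and (b) maps to (b). So whichever alternative holds throughout one family holds throughout the other, and the dichotomy is uniform over all of $\Omega^*_{m_1,w_1}$.

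I expect the main obstacle to be bookkeeping rather than conceptual work: in the second step the manipulation must be chosen so that $w_1$'s true preference strictly ranks the ``case (b)'' commoner above $m_1$ while keeping the two reports in the correct sets $\Omega^{1,k}_{m_1,w_1}$ and $\Omega^{1,k'}_{m_1,w_1}$, and in the third step one must carefully track which royal plays the role of ``the royal top-ranking a commoner'' on each side of the reflection $\sigma$ so that case (b) is genuinely carried to case (b).
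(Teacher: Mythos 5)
Your proposal is correct and follows essentially the same route as the paper: the pointwise dichotomy and within-$\Omega^{1,k}_{m_1,w_1}$ uniformity come from Lemma \ref{lemma: at least one gets best}, uniformity across the choice of top-ranked commoner comes from a strategy-proofness manipulation by $w_1$ (the paper runs this via the profile where $w_1$ bottom-ranks $m_1$, you via an adjacent swap of $m_k$ and $m_{k'}$, which is equivalent), and the transfer to the mirror family where the roles of $m_1$ and $w_1$ are exchanged comes from weak gender-neutrality. The only cosmetic difference is that the paper organizes the argument as a case split on a single reference profile while you argue the uniformity directly.
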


\begin{proof}
Let $\succsim'_{m_1}:w_1,w_k$ and $\succsim'_{w_1}:m_k, m_1$ for $k\neq 1$. Since it is not possible to match both royals with their top-ranked partners, 
Lemma \ref{lemma: at least one gets best} yields that one of the royals  gets their top partner. 

\textbf{Case 1: $(m_1,w_1)\in f(\succsim'_{m_1,w_1},\succsim^{sym}_{-m_1,m_1})$.} Then Lemma \ref{lemma: at least one gets best} implies that $(m_1,w_1)\in f(\succsim''_{m_1,w_1},\succsim^{sym}_{-m_1,w_1})$ for any $\succsim''_{m_1,w_1}$ with $\ptop(\succsim''_{m_1})=w_1$ and $\ptop(\succsim''_{w_1})=m_k$ (including $\succsim'_{w_1}$ that rank $m_1$ last). Then group strategy-proofness implies that $(m_1,w_1)\in f(\succsim_{m_1,w_1},\succsim^{sym}_{-m_1,w_1})$ for any $\succsim_{m_1,w_1}$ with $\ptop(\succsim_{m_1})=w_1$. By gender-neutrality we also get that $(m_1,w_1)\in f(\succsim_{m_1,w_1},\succsim^{sym}_{-m_1,w_1})$ if $\ptop(\succsim_{w_1})=m_1$ and if $\succsim_{m_1}$ is any preference.

\textbf{Case 2: $f(\succsim'_{m_1,w_1},\succsim^{sym}_{-m_1,m_1})(w_1)= m_k$.} If $(m_1,w_1)\in f(\succsim'_{m_1,w_1},\succsim^{sym}_{-m_1,m_1})$ held for some $\succsim_{m_1,w_1}$ with
 $\ptop(\succsim_{m_1})=w_1$ and $\ptop(\succsim_{w_1})=m_l$ for $l\neq 1$ case 1 would imply the contradiction that $(m_1,w_1)\in f(\succsim'_{m_1,w_1},\succsim^{sym}_{-m_1,m_1}) $. 
 Hence by Lemma \ref{lemma: at least one gets best} we have $(w_1,m_l) \in f(\succsim_{(m_1,w_1)},\succsim^{sym}_{-m_1,w_1})(w_1)=m_l$. gender-neutrality then implies the result.  
\end{proof}

\begin{lemma}\label{lemma: when possible both get best}
Say $\ptop(\succsim_{m_1})=w_l$ and $\ptop(\succsim_{w_1})=m_k$. If $k=1=l$ or if $k\neq 1\neq l$,
$f(\succsim_{m_1,w_1},\succsim^{sym}_{-m_1,w_1})$ matches the royal couple with their most preferred partners.
\end{lemma}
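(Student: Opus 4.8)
The plan is to split according to how the indices $k$ and $l$ sit relative to the royals. If $k=l$ --- in particular if $k=l=1$ --- the statement is exactly Case~1 of Lemma~\ref{lemma: at least one gets best}, which already shows that $m_1$ is matched with $w_k=w_l$ and $w_1$ with $m_k=m_l$ whenever $m_1$ tops $w_l$ and $w_1$ tops $m_l$. So the substance is the case $k\neq l$ with $k\neq1\neq l$, and here I would argue by contradiction that the outcome cannot be governed by the second or third bullet of Lemma~\ref{lemma: at least one gets best}.

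First I would reduce to a single forbidden configuration. Since $\succsim^{sym}_{-m_1,w_1}$ is symmetric, applying $\sigma$ carries a profile in $\Omega^{l,k}_{m_1,w_1}$ to one in $\Omega^{k,l}_{m_1,w_1}$, and weak gender-neutrality turns ``$m_1$ is not matched with $w_l$, $w_1$ is matched with $m_k$'' (the second bullet for $\Omega^{l,k}$) into ``$w_1$ is not matched with $m_l$, $m_1$ is matched with $w_k$'' (the third bullet for $\Omega^{k,l}$). Hence the third bullet for any ordered pair is the $\sigma$-image of the second bullet for the reversed pair, and it is enough to exclude, for every such pair, the hypothesis $(\star)$: $f(\succsim_{m_1,w_1},\succsim^{sym}_{-m_1,w_1})(m_1)\neq w_l$ and $f(\succsim_{m_1,w_1},\succsim^{sym}_{-m_1,w_1})(w_1)=m_k$ for all $\succsim_{m_1,w_1}\in\Omega^{l,k}_{m_1,w_1}$. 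Assume $(\star)$.

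I would then show $(\star)$ forces case~b) of Lemma~\ref{lemma: wanting commoners and royals}. Let $m_1$ report a preference topping $w_l$ with $w_1$ second and let $w_1$ top $m_k$. By $(\star)$, $w_1$ gets $m_k$ and $m_1$ does not get $w_l$; since $w_1$ is taken, $m_1$ does not get $w_1$ either, so $m_1$ receives a partner ranked third or worse in $\succsim_{m_1}$. But if $m_1$ transposes $w_l$ and $w_1$ to report $w_1$ on top, we land in $\Omega^*_{m_1,w_1}$, and under case~a) of Lemma~\ref{lemma: wanting commoners and royals} the royals would marry, handing $m_1$ his second choice $w_1$ --- a profitable manipulation, contradicting strategy-proofness. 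So case~b) holds: whenever exactly one royal tops the other, the royal topping a commoner is matched with that commoner.

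The last step --- which I expect to be the crux --- derives a contradiction from $(\star)$ together with case~b). Fix $\succsim_{w_1}$ topping $m_l$ and write $O$ for the set of women $m_1$ can secure by varying only his own report, given this $\succsim_{w_1}$ and $\succsim^{sym}_{-m_1,w_1}$. Case~b) (applied when $m_1$ tops $w_1$) forces $w_1$ to be matched with $m_l$, so $m_1$ never gets $w_1$, and strategy-proofness then gives $w_1\notin O$; on the other hand $\Omega^{k,l}_{m_1,w_1}$ is of third-bullet type (this is $(\star)$ read through $\sigma$), so whenever $m_1$ tops $w_k$ he gets $w_k$, i.e.\ $w_k\in O$. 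Therefore: (i) if $m_1$ reports $w_1$ first and $w_k$ second, he gets $w_k$ and, by case~b), $w_1$ gets $m_l$; (ii) if $m_1$ reports $w_k$ first and $w_1$ second (same tail), he again gets $w_k$, but by the third-bullet property of $\Omega^{k,l}_{m_1,w_1}$, $w_1$ is \emph{not} matched with $m_l$. The two profiles differ only by a transposition of $w_1$ and $w_k$ in the top two slots of $\succsim_{m_1}$, and $m_1$ has the same partner $w_k$ in both; by the two-sided counterpart of Lemma~\ref{lemma: submechanisms} (removing a matched pair leaves a submechanism that does not depend on that pair's preferences), the whole matchings must agree, contradicting the discrepancy in $w_1$'s partner. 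This rules out $(\star)$, so every pair falls under the first bullet, which is the claim. The two delicate ingredients are the simultaneous use of Lemma~\ref{lemma: at least one gets best}, gender-neutrality and strategy-proofness to control $O$, and the appeal to a submechanism (non-bossiness) argument in the two-sided setting, which has to be set up exactly as in Lemma~\ref{lemma: submechanisms}.
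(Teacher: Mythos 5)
Your argument is correct, and it reaches the lemma by a genuinely different route than the proof in the paper. The paper also starts from Lemmas \ref{lemma: at least one gets best} and \ref{lemma: wanting commoners and royals}, but then splits directly on the two regimes of the latter and constructs the conclusion in each: under regime a) each royal can unilaterally force the marriage, so by strategy-proofness neither royal's match can be worse than the other royal, which together with Lemma \ref{lemma: at least one gets best} pins both matches down; under regime b) the outcome is assembled by letting each royal in turn top-rank the other while the other top-ranks a commoner and then invoking group strategy-proofness. You instead argue by contradiction: you use weak gender-neutrality on the symmetric commoner profile to identify the third bullet of Lemma \ref{lemma: at least one gets best} for the pair $(l,k)$ with the second bullet for $(k,l)$, reducing everything to excluding the second bullet; you show that the second bullet rules out regime a) and hence forces regime b); and you close by comparing the reports $\succsim_{m_1}: w_1, w_k,\dots$ and $\succsim_{m_1}: w_k, w_1,\dots$ against a fixed $\succsim_{w_1}$ topping $m_l$, using the option-set characterization of strategy-proofness and non-bossiness to produce a contradiction in $w_1$'s match. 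The two facts you lean on --- that an agent's option set is independent of their own report under strategy-proofness, and that group strategy-proofness implies non-bossiness (your appeal to a ``submechanism'' is really just this) --- are exactly the tools the paper uses elsewhere, so nothing new needs to be established. What your route buys is that regime a) never has to be analyzed on its own; what it costs is the extra bookkeeping of passing between the $(l,k)$ and $(k,l)$ configurations through $\sigma$.
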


\begin{proof}
If $k=1=l$ the claim follows from first part of the proof of Lemma \ref{lemma: at least one gets best}. So let $k\neq 1 \neq l$. From Lemma \ref{lemma: wanting commoners and royals} there are two cases to consider.

\textbf{Case 1: If exactly one royal top-ranks the other, the royals are matched} Consider $\succsim'_{m_1}: w_l, w_1$ and $\succsim'_{w_1}: m_k, m_1$. By Lemma \ref{lemma: at least one gets best} at least one royal must get their top partner. By strategy-proofness, neither royal can do worse than the other royal. Since $m_1$ is only matched with $w_1$ if $w_1$ is matched with $m_1$ we then get  
 $(w_1,m_k),(m_1,w_l)\in f(\succsim'_{m_1,w_1},\succsim^{sym}_{-m_1,w_1})$ and by group strategy-proofness $f(\succsim_{m_1,w_1},\succsim^{sym}_{-m_1,w_1})=f(\succsim'_{m_1,w_1},\succsim^{sym}_{-m_1,w_1})$.

\textbf{Case 2: If exactly one royal top-ranks the other,  the royal top-ranking a commoner gets their top match} Consider the preferences 

\begin{eqnarray*}
\succsim'_{m_1}: w_l,w_1,&& \succsim''_{m_1}: w_1,w_l\\
\succsim'_{w_1}: m_k,m_1,&& \succsim''_{w_1}: m_1,m_k
\end{eqnarray*}
derived from $\succsim_{m_1}$ and $\succsim_{w_1}$ keeping all else equal.

Since we are in Case 2, we have $f(\succsim'_{m_1},\succsim''_{w_1},\succsim^{sym}_{-m_1,w_1})(m_1)=w_l$ and $f(\succsim'_{w_1},\succsim''_{m_1},\succsim^{sym}_{-m_1,w_1})(w_1)=m_k$. 
By group strategy-proofness we then get that $(m_1,w_l),(w_1,m_k)\in f(\succsim'_{m_1},\succsim'_{w_1},\succsim^{sym}_{-m_1,w_1})(m_1).$ 
Applying group strategy-proofness once again to drop each other in their royals' rankings we get $(m_1,w_l),(w_1,m_k)\in f(\succsim_{m_1,w_1},\succsim^{sym}_{-m_1,w_1})$ as required.  

\end{proof}

The preceding three Lemmas referred to an arbitrarily fixed symmetric profile for all commoners $\succsim^{sym}_{-m_1,w_1}$. We showed that for any such fixed profile, the royals must be matched according to a matched-by-default or a unmatched-by-default protocol. The next Lemma shows that for a vast set of profiles for the commoners the royals get their top choices, if these top choices do not stand in conflict (so if they both want to marry commoners or want to marry each other.)

\begin{lemma}\label{lemma: Kew Gardens}
Fix $\succsim$ such that a subset of all pairs of commoners have symmetric preferences, while all remaining commoners bottom-rank the royals, but do not necessarily have symmetric preferences. Say $\ptop(\succsim_{w_1})=m_k$ and $\ptop(\succsim_{m_1})=w_l$. If  $l=k=1$ or  $l\neq 1\neq k$, then $(m_1,w_l), (w_1,m_k)\in f(\succsim)$. 
\end{lemma}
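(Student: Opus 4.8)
The idea is to reduce the general profile $\succsim$ described in the statement to the kind of profile covered by Lemma~\ref{lemma: when possible both get best}, namely one in which \emph{all} commoners have symmetric preferences, and then travel from that symmetric-commoner profile to the given $\succsim$ one commoner at a time, using group strategy-proofness at each step to preserve the royals' matches. Concretely, list the commoners that do not already have symmetric preferences as $c_1,\dots,c_t$ (each $c_i$ is one agent on one side, and by hypothesis each $c_i$ bottom-ranks both royals). Starting from a profile $\succsim^0$ that agrees with $\succsim$ on the symmetric pairs of commoners and on the royals, and in which every $c_i$ instead reports some symmetric preference $\succsim^{sym}_{c_i}$ (still bottom-ranking the royals — this is possible since ``bottom-ranks the royals'' is compatible with symmetry once one also fixes the mirror agent's ranking appropriately; if the mirror of some $c_i$ is already in our symmetric block we simply include $c_i$ there too), apply Lemma~\ref{lemma: when possible both get best} to conclude $(m_1,w_l),(w_1,m_k)\in f(\succsim^0)$.

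Now change the commoners' reports back to their true $\succsim$ one at a time, producing a chain $\succsim^0,\succsim^1,\dots,\succsim^t=\succsim$ where $\succsim^{i}$ differs from $\succsim^{i-1}$ only in agent $c_i$'s report. I claim that if $(m_1,w_l),(w_1,m_k)\in f(\succsim^{i-1})$ then the same holds at $\succsim^{i}$. Suppose not. Since $w_l\neq m_1$'s partner could only change to something $m_1$ likes less — no: here is the actual mechanism of the argument. Because $c_i$ bottom-ranks $m_1$ and $w_1$ in \emph{both} reports, $m_1$ and $w_1$ are never matched to $c_i$ at either profile (that would be Pareto-dominated, as $c_i$ could be made single or matched to anyone and both royals improve — more carefully, efficiency plus the fact that $w_l$ with $l\neq 1$, resp.\ $m_k$ with $k\neq 1$, are available). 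So the pair $(m_1,w_l)$ is either present at $\succsim^i$ or $m_1$ gets a strictly worse partner; by strategy-proofness applied to $m_1$ going from $\succsim^{i-1}$ to $\succsim^i$ — wait, the deviating agent is $c_i$, not $m_1$. The right tool is group strategy-proofness for the coalition $\{c_i\}$ together with whichever royal's match changed: if $f(\succsim^i)(m_1)\ne w_l=\ptop(\succsim_{m_1})$ then $m_1$ is strictly worse off at $\succsim^i$ than at $\succsim^{i-1}$, so reverting $c_i$'s report is a profitable deviation for the pair $\{c_i,m_1\}$ provided $c_i$ is not hurt; but $c_i$'s partner at both profiles lies among the commoners, and here one must check $c_i$ is not made strictly worse. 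To avoid this last wrinkle, I would instead run the induction with the \emph{royals} as the moving party where possible, or invoke the standard fact (footnoted in Section~\ref{sec:preliminaries}) that group strategy-proofness is equivalent to no single agent or pair gaining, and argue directly: the map $\succsim_{c_i}\mapsto (f(\cdot)(m_1),f(\cdot)(w_1))$ cannot jump, because a change would let the pair $\{m_1,w_1\}$ or a royal–commoner pair manipulate. The cleanest route: fix $m_1,w_1$'s reports and all other commoners, and view $f$ as a function of $\succsim_{c_i}$ alone into the royals' assignment; strategy-proofness of $c_i$ forces $c_i$'s own match to be monotone, and efficiency then pins down that the royals keep $(w_l,m_k)$ since those partners remain available and the royals strictly prefer them to anything else.

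The main obstacle is exactly this single-commoner step: showing that perturbing one commoner who is irrelevant to the royals (bottom-ranks them) cannot dislodge the royals from their top matches. The hypothesis $l\neq 1\neq k$ (or $l=k=1$) is what rules out the ``conflict'' case and lets us appeal to Lemma~\ref{lemma: when possible both get best} at the base of the induction and to efficiency at each step; without it the royals' matches genuinely do depend on the default protocol and on the commoners, so the lemma would be false. I would therefore carry out the argument in three steps: (1) construct the all-symmetric-commoner profile $\succsim^0$ and apply Lemma~\ref{lemma: when possible both get best}; (2) prove the one-commoner perturbation lemma using group strategy-proofness (for royal–commoner pairs) and efficiency (to keep $w_l,m_k$ available and assigned to the royals); (3) chain (2) $t$ times to reach $\succsim$. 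Step (2) is where essentially all the work lies; steps (1) and (3) are bookkeeping.
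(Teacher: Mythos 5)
Your overall skeleton---start from an all-symmetric-commoner profile covered by Lemma \ref{lemma: when possible both get best}, then walk to $\succsim$ one commoner at a time---matches the paper's strategy of inducting on the number of asymmetric pairs. But the single-perturbation step, which you correctly identify as the place ``where essentially all the work lies,'' is not actually proved, and none of the three arguments you sketch for it goes through. (a) Efficiency does \emph{not} imply that a commoner who bottom-ranks the royals is never matched to a royal: if $w_j$ bottom-ranks $m_1$ and is matched to him, undoing that match forces a rematching of third parties who may be made worse off, so the matching need not be Pareto dominated. (b) The coalition $\{c_i,m_1\}$ reverting $c_i$'s report is only a group strategy-proofness violation if $c_i$ is also weakly better off at the reverted profile, which is exactly the point you cannot establish. (c) ``Efficiency then pins down that the royals keep $(w_l,m_k)$'' is false---efficiency alone never hands an agent their top choice; that is the content of the lemma being proved.

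The paper closes this gap with a more targeted move that your proposal does not contain: rather than perturbing an arbitrary commoner, it symmetrizes precisely the pair containing the royal's top-ranked commoner. Supposing $f(\succsim)(w_1)\neq m_k$, replace $\succsim_{m_k}$ by $\sigma(\succsim_{w_k})$ (which bottom-ranks $w_1$ because $w_k$ bottom-ranks $m_1$); the resulting profile has one fewer asymmetric pair, so the induction hypothesis forces $f$ to match $m_k$ with $w_1$ there---i.e.\ with his \emph{bottom} choice---while at $\succsim$ he avoids $w_1$. That is a violation of individual strategy-proofness for $m_k$ alone, with no need to check a second coalition member's welfare. (When $(m_k,w_k)$ is already symmetric, the paper first routes $w_1$'s demand through an asymmetric pair such as $(m_2,w_2)$ via an auxiliary preference $\succsim'_{w_1}:m_k,m_2$ and then runs the same symmetrization argument on $m_2$.) Without this device, your step (2) does not close, so the proposal as written has a genuine gap.
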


\begin{proof}
 We use induction over the number $m$ of pairs who do not have symmetric preferences. 

\medskip

\textbf{Start of the induction:} $m=0$ so that $\succsim_{-m_1,w_1}$ is symmetric. In this case the claim holds by Lemma \ref{lemma: when possible both get best}.

\medskip

\textbf{Induction step:} Suppose the claim holds up to some $m<n$. Fix an arbitrary profile $\succsim_{-m_1,w_1}$ such that $m+1$ pairs have (potentially) asymmetric preferences, ranking the 
royals at the bottom, and such that the remaining $n-m-1$ pairs have symmetric preferences. 
Without loss, suppose that $(m_2,w_2)$ are a type that do not have symmetric preferences and consider $\succsim'_{w_1}: m_k,m_2$.

\textbf{Case 1: $k\neq 1 \neq l$.} Suppose by way of contradiction that $f(\succsim)(w_1)\neq m_k$.

Case 1.1: Couple $(m_k,w_k)$ does not have symmetric preferences. For $\succsim'_{m_k}=\sigma(\succsim_{w_k})$, $\pbottom(\succsim_{w_k})=m_1$ implies $\pbottom(\succsim'_{m_k})=w_1$. Since $(\succsim'_{m_k},\succsim_{-m_k})$ is covered by the hypothesis of the induction $f(\succsim'_{m_k},\succsim_{-m_k})(w_1)=m_k$. We then obtain a contradiction to strategy-proofness since $w_1\neq f(\succsim)(m_k)\succ'_{m_k}f(\succsim'_{m_k},\succsim_{-m_k})(m_k)=w_1=\pbottom(\succsim'_{m_k})$.

Case 1.2 Couple $(m_k,w_k)$ does have symmetric preferences.  By strategy-proofness and the assumption that $f(\succsim)(w_1)\neq m_k$, we have $f(\succsim'_{w_1},\succsim_{-w_1})(w_1)\neq m_k$. By Case 1.1, $w_1$ would get matched with $m_2$ if she were to top rank $m_2$ (since we've assumed $m_2$ and $w_2$ announce asymmetric preferences). So by strategy-proofness $f(\succsim'_{w_1},\succsim_{-w_1})(w_1)= m_2$. Now consider $\succsim'_{m_2}=\sigma(\succsim_{w_2})$. Since couple $(m_2,w_2)$ does not have symmetric preferences $\pbottom(\succsim_{m_2})=w_1$. We then obtain a contradiction to strategy-proofness since $(\succsim'_{w_1,m_2},\succsim_{-w_1,m_2})$ is covered by the hypothesis of the induction so that $f(\succsim'_{w_1,m_2},\succsim_{-w_1,m_2})(w_1)=m_k$. The latter implies that $m_2$ can improve his match by switching his preference from $\succsim_{m_2}$ to $\succsim'_{m_2}$ to avoid being matched with $w_1=\pbottom(\succsim_{m_2})$.  

\medskip

Cases 1.1 and 1.2 proves that $(w_1,m_k)\in f(\succsim)$ holds in Case 1. By gender-neutrality $(m_1,w_l)\in f(\succsim)$ also holds.

\medskip

\textbf{Case 2: k=l=1.} Suppose $(m_1,w_1)\notin f(\succsim)$. As in the proof of Case 1.2 
$f(\succsim'_{w_1},\succsim_{-w_1})(w_1)=m_2$. Just as in that proof we obtain a contradiction to strategy-proofness since $m_2$ would be better-off if he symmetrized his preference with $w_2$.

\end{proof}

The next two Lemmas pertain to the case that the royals top rank a symmetric pair $(m_i,w_i)$. In this case the royals are matched with their top choices - for any profile of commoners.  

\begin{lemma}
Fix $\succsim$ so that the royals top rank each other. Then $f(\succsim)$ marries the royals.
\end{lemma}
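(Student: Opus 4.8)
The goal is to show that if the royals $m_1,w_1$ both top-rank each other (in the two-sided sense: $\ptop(\succsim_{m_1})=w_1$ and $\ptop(\succsim_{w_1})=m_1$), then $(m_1,w_1)\in f(\succsim)$ for \emph{every} profile of commoner preferences, not just symmetric ones. The earlier lemmas (especially Lemma \ref{lemma: Kew Gardens}) already give this when the commoners' preferences are symmetric, or symmetric except for pairs that bottom-rank the royals. The plan is to reduce an arbitrary profile to one of those covered cases by exploiting group strategy-proofness together with the fact that, when $m_1$ top-ranks $w_1$ and $w_1$ top-ranks $m_1$, the royals' match with each other is a ``highest'' outcome for both of them, so no commoner's report can move it without hurting someone.

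\textbf{Key steps.} First I would observe that it suffices to consider an arbitrary commoner, say $(m_2,w_2)$, whose preferences are not symmetric, and show that changing their reports to the symmetrized pair $\succsim'_{m_2}=\sigma(\succsim_{w_2})$, $\succsim'_{w_2}=\succsim_{w_2}$ (or more precisely symmetrizing the pair, using the usual trick of replacing $\succsim_{m_2}$ by $\sigma(\succsim_{w_2})$ and leaving $\succsim_{w_2}$) does not change the royals' outcome; iterating this over all asymmetric pairs lands us in the domain of Lemma \ref{lemma: Kew Gardens} (indeed in the $m=0$ case of Lemma \ref{lemma: when possible both get best}), where the royals are married. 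Second, to justify each such step I would argue: suppose $(m_1,w_1)\in f(\succsim')$ after the change but $(m_1,w_1)\notin f(\succsim)$ before it. Then $w_1$ (say) is matched in $f(\succsim)$ with some $m_j\neq m_1$. The change only affected $m_2,w_2$'s reports, so by group strategy-proofness applied to the coalition $\{m_2,w_2\}$, one of $m_2,w_2$ must be strictly worse off at the symmetrized profile; but since $w_1$ was matched with $m_1$ at the symmetrized profile and not before, the match of $w_1$ strictly worsened going from $\succsim'$ to $\succsim$, which (again by group strategy-proofness for $\{m_1,w_1\}$, noting $m_1$ also does weakly worse since his top choice $w_1$ is no longer his partner) forces the comparison to run the wrong way — i.e. I would set up a two-sided chain of strategy-proofness comparisons to derive the contradiction. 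The cleanest route is probably: symmetrizing one coordinate at a time (change $m_2$'s report to $\sigma(\succsim_{w_2})$ first, then argue, then the pair is symmetric) and invoke single-agent strategy-proofness repeatedly, mirroring the structure of the proof of Case 1.2 in Lemma \ref{lemma: Kew Gardens}.

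\textbf{Main obstacle.} The delicate point is the base case direction: I must be sure that when the commoners' preferences are \emph{symmetric}, the royals are married whenever they top-rank each other — this is exactly Lemma \ref{lemma: when possible both get best} (the sub-case $k=l=1$, which it attributes to the first part of the proof of Lemma \ref{lemma: at least one gets best}, Case 1). So the real work is only the induction/symmetrization argument, and its main obstacle is handling the case where the royal's preferred partner and the commoner being symmetrized coincide or interact — i.e. ensuring the intermediate profiles stay within a domain where a previously-established lemma (Kew Gardens, or the induction hypothesis of the current lemma) applies, in particular that the commoner pairs we have not yet symmetrized continue to bottom-rank the royals. If that bottom-ranking fails for some commoner, I would first perform an auxiliary strategy-proofness step moving the royals to the bottom of that commoner's ranking (which is harmless since, with the royals married to each other, no commoner is ever matched to a royal), restoring the hypothesis of Lemma \ref{lemma: Kew Gardens}, and then run the symmetrization argument.
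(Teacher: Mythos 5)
There is a genuine gap in the step that carries the conclusion from a covered profile to an arbitrary one. Your plan is to symmetrize one asymmetric commoner pair at a time and argue that each such change leaves the royals' match intact, but the justification you give does not work. Group strategy-proofness applied to the deviating coalition $\{m_2,w_2\}$ only constrains how $m_2$'s and $w_2$'s \emph{own} matches compare across the two profiles (under each of their two reports); it says nothing about whether $w_1$'s match may change when $m_2$ and $w_2$ change their reports. Likewise, invoking group strategy-proofness ``for $\{m_1,w_1\}$'' is vacuous here: $m_1$ and $w_1$ do not deviate, and a mechanism may freely alter their matches in response to others' reports without violating any incentive axiom. So the ``two-sided chain of comparisons'' you describe does not close, and the per-step invariance is unproven. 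Your fallback for restoring the Kew Gardens hypothesis is also circular: you justify dropping the royals to the bottom of a commoner's ranking as ``harmless since, with the royals married to each other, no commoner is ever matched to a royal'' --- but whether the royals are married at the intermediate profile is precisely what is at issue; in the contradiction hypothesis some commoner \emph{is} matched to a royal, and for that commoner the change is not innocuous.

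The paper routes around exactly this difficulty with a different two-stage manipulation. First it lifts the royals to the \emph{top} of every commoner's ranking, one commoner at a time; group strategy-proofness implies each such lift either leaves the matching unchanged or marries the lifting commoner to a royal, and in either case the royals remain unmarried to each other, so the ``bad'' event survives all the lifts. At the resulting profile the two commoners married to the royals top-rank them. Only \emph{then} does it symmetrize those (at most two) pairs and drop the royals to the bottom of every other commoner's ranking --- changes that are genuinely harmless by group strategy-proofness, because those other commoners are not matched to royals and only the royals' positions move in their rankings. This lands exactly in the domain of Lemma \ref{lemma: Kew Gardens} with $k=l=1$, yielding the contradiction. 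The ordering of the manipulations (royals to the top first, identify who is matched to them, and only then symmetrize/demote) is the idea your proposal is missing.
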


\begin{proof}
Suppose not. Define $\succsim'_{-m_1,w_1}$ as a modification of $\succsim_{-m_1,w_1}$ where $m_1$ and $w_1$ are moved to the top of all agents preferences, but are otherwise unchanged. Sequentially swap each commoner $i$'s preference from $\succsim_i$ to $\succsim'_i$. By group strategy-proofness, with each such swap the matching either stays constant or commoner $i$ marries a royal. Therefore $(m_1,w_1)\notin f(\succsim_{m_1,w_1},\succsim'_{-m_1,w_1})$. Suppose $(m_1,w_i), (w_1,m_j)\in f(\succsim_{m_1,w_1},\succsim'_{-m_1,w_1})$, with $i\neq 1 \neq j$ and  $i=j$  permitted. 
Start with $\succsim'_{-m_1,w_1}$, to define a new profile $\succsim''_{-m_1,w_1}$ by dropping the royals to the bottom of all 
 commoners' rankings all commoners other than in $w_{i}, m_{i}, w_{j}$ and $m_{j}$ rankings. Moreover let $\succsim'_{w_i}\colon=\sigma(\succsim_{m_i})$
 and if $i\neq j$ also $\succsim'_{m_j}\colon=\sigma(\succsim_{w_j})$ 
 so that the couples $(m_i,w_i)$ as well as $(m_j,w_j)$ (if different) have symmetric preferences where the agents who are matched with the royal couples by $f(\succsim_{m_1,w_1},\succsim'_{-m_1,w_1})$ top rank the royal couple. 
  By group strategy-proofness none of these changes affect the match so that $f(\succsim_{m_1,w_1},\succsim'_{-m_1,w_1})= f(\succsim_{m_1,w_1},\succsim''_{-m_1,w_1})$. However, a contradiction arises since $\succsim''_{-m_1,w_1}$ is covered by Lemma \ref{lemma: Kew Gardens}, so  $(m_1,w_1)\in f(\succsim_{m_1,w_1},\succsim''_{-m_1,w_1})$.

\end{proof}

\begin{lemma}\label{lemma: the royals may choose the same partners}
If $\ptop(\succsim_{m_1})=w_l$ and $\ptop(\succsim_{w_1})=m_l$, then $(m_1,w_l),(w_1,m_l)\in f(\succsim)$.
\end{lemma}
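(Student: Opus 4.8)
The goal is: if $m_1$ and $w_1$ both top-rank the agents with index $l$ (that is $\ptop(\succsim_{m_1})=w_l$ and $\ptop(\succsim_{w_1})=m_l$), then $f(\succsim)$ matches $m_1$ with $w_l$ and $w_1$ with $m_l$. The previous lemma handles the case $l=1$ (the royals top-rank each other), so I may assume $l\neq 1$.

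First I would reduce, by the same "push-to-symmetric-commoners" technique used in the preceding lemma, to a profile in which every commoner pair except possibly $(m_l,w_l)$ has symmetric preferences that bottom-rank the royals. Concretely: starting from $\succsim$, build $\succsim'_{-m_1,w_1}$ by moving $m_1,w_1$ to the top of every commoner's ranking and sequentially swapping each commoner from $\succsim_i$ to $\succsim'_i$; by group strategy-proofness each swap either leaves the matching fixed or has commoner $i$ marry a royal. Then, from whatever the resulting matching is, argue we may further symmetrize. The subtlety is that here both royals want the \emph{same} index $l$, so the conflict is of a different character than in Lemma~\ref{lemma: Kew Gardens} (which explicitly excluded $k=l\neq 1$): at most one of $m_1,w_l$ and $w_1,m_l$ can't both be the "default" — actually both desired matches $(m_1,w_l)$ and $(w_1,m_l)$ are simultaneously feasible, so there is no conflict in the final matching, only a potential conflict in that the couple $(m_l,w_l)$ may "not want" to be split.

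The key step is then to show $(m_1,w_l),(w_1,m_l)\in f(\succsim)$ when all commoners other than $(m_l,w_l)$ have symmetric preferences bottom-ranking the royals. I would use gender-neutrality together with Lemma~\ref{lemma: at least one gets best}: let $\succsim'_{m_1}:w_l,w_1$ and $\succsim'_{w_1}:m_l,m_1$. Since it is not possible to match $m_1$ with $w_l$ \emph{and} $w_1$ with $m_1$ simultaneously (that would need $w_l$ to play two roles only if $l=1$), but it \emph{is} possible to match $m_1$ with $w_l$ and $w_1$ with $m_l$, the obstruction is whether $f$ might instead leave the royals married to each other or give only one royal their top. Apply Lemma~\ref{lemma: at least one gets best} to conclude at least one royal gets index $l$; then symmetrize the pair $(m_l,w_l)$ via $\succsim'_{w_l}=\sigma(\succsim_{m_l})$ (if $(m_l,w_l)$ is the lone asymmetric pair) so that the whole commoner profile becomes symmetric, invoke Lemma~\ref{lemma: when possible both get best} at that symmetric profile to get $(m_1,w_l),(w_1,m_l)\in f$ there, and transport back by strategy-proofness of $m_l$ and $w_l$ (they cannot gain by de-symmetrizing since in the symmetric profile they receive their bottom partners $m_1,w_1$ only if the royals did \emph{not} take them — wait, here they \emph{are} taken, so the argument is that $m_l$ gets $w_1$ which is his bottom choice, hence he cannot do worse, and strategy-proofness forces the match to persist). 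Finally gender-neutrality upgrades "$w_1$ gets $m_l$" to "$m_1$ gets $w_l$" if only one of the two was established directly.

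The main obstacle I expect is the bookkeeping around the couple $(m_l,w_l)$: unlike Lemma~\ref{lemma: Kew Gardens}, here the pair the royals covet is itself a pair, and its members' incentives must be controlled carefully — in the target matching $m_l$ and $w_l$ each receive a royal (their worst partner), so the strategy-proofness transport arguments run in the direction "they cannot strictly gain by deviating away from the symmetric report that yields this match," which is exactly the hypothesis needed to freeze the match. Making this rigorous requires checking that every intermediate profile stays within the scope of Lemma~\ref{lemma: when possible both get best} or an already-proven lemma, and that the sequential desymmetrization of $(m_l,w_l)$ never triggers a profitable joint deviation by $\{m_l,w_l\}$ or by a royal together with $m_l$ or $w_l$.
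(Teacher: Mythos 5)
There is a genuine gap in the step you yourself flag as the main obstacle: the ``transport back'' from the symmetrized profile to the true one. Strategy-proofness applied to $w_l$'s deviation between $\succsim'_{w_l}=\sigma(\succsim_{m_l})$ and $\succsim_{w_l}$ only yields $m_1=f(\succsim'_{w_l},\succsim_{-w_l})(w_l)\succsim'_{w_l} f(\succsim)(w_l)$, which pins down $f(\succsim)(w_l)=m_1$ only if $m_1=\pbottom(\succsim'_{w_l})$, i.e.\ only if $w_1=\pbottom(\succsim_{m_l})$. Nothing in your reduction guarantees this: you arrange that the commoners \emph{other than} $(m_l,w_l)$ bottom-rank the royals, but the coveted couple's own preferences remain arbitrary. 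The same problem blocks the group-strategy-proofness invariance you would need to carry the whole matching (hence also $(w_1,m_l)$) across the swap, and the closing ``gender-neutrality upgrades one half to the other'' is invalid at an asymmetric profile, where gender-neutrality only relates $f(\succsim)$ to $f(\sigma(\succsim))$, a different profile. You also invoke Lemma \ref{lemma: at least one gets best} at a profile with an asymmetric commoner pair, which is outside that lemma's hypotheses (it is stated only for symmetric commoner profiles).

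The paper closes exactly this hole with two preliminary observations your plan omits. First, using the preceding lemma together with Lemma \ref{lemma: Kew Gardens}, it shows that at \emph{every} profile at least one royal gets their top choice; second, it uses this to argue that one may w.l.o.g.\ take $m_l$ and $w_l$ to bottom-rank the royals --- the case analysis there is needed precisely because dropping $m_1$ in $w_l$'s ranking can change her match. Only then does it argue by contradiction: if $w_1$ is matched to some $m_j$ with $j\neq l$, it lifts $m_1$ in $w_j$'s ranking and symmetrizes the couple $(m_j,w_j)$ of the \emph{wrong} match (not the coveted couple $(m_l,w_l)$), reaching a profile covered by Lemma \ref{lemma: Kew Gardens} that contradicts the assumed match. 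Note also that your premise that Lemma \ref{lemma: Kew Gardens} excludes $k=l\neq 1$ is a misreading: the condition $l\neq 1\neq k$ permits $k=l$, and that case of the lemma is exactly what the paper's contradiction exploits.
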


\begin{proof}
Fix the royal's preferences such that they rank each other in second place. Suppose  $ f(\succsim)$ did not match both royals with their top ranked partners.

\medskip
\textbf{Observation 1:} One royal must get their most preferred partner. 

By the preceding Lemma and group strategy-proofness the royals cannot both prefer each other to their matches $f(\succsim)(m_1)$ and $f(\succsim)(w_1)$. Now suppose we had $f(\succsim)(w_1)=m_1$. By group strategy-proofness  $f(\succsim)=f(\succsim_{m_1,w_1},\succsim'_{-m_1,w_1})$ where the commoners' preferences $\succsim'_{-m_1,w_1}$ are derived from $\succsim_{-m_1,w_1}$ by dropping the royals to the bottom of each commoner's preference keeping all else equal. A contradiction arises since $\succsim'_{-m_1,w_1}$ is covered by  Lemma \ref{lemma: Kew Gardens}, so that $(m_1,w_1)\in f(\succsim_{m_1,w_1},\succsim'_{-m_1,w_1})$. In the only remaining case one royal gets their most preferred partner. 

\medskip

\textbf{Observation 2:} We may w.l.o.g assume that $\pbottom(\succsim_{m_l})=w_1$ and $\pbottom(\succsim_{w_l})=m_1$.

Suppose this did not hold. For concreteness assume that $(m_1,w_l)\in f(\succsim)$, noting that the following arguments apply mutatis mutandis to the alternative case where $(w_1,m_l)\in f(\succsim)$. Since $(w_1,m_l)\notin f(\succsim)$, we may 
by group strategy-proofness  w.l.o.g assume that $\pbottom(\succsim_{m_l})=w_1$. Now define $\succsim'_{w_l}$ by dropping $m_1$ to the bottom of $w_l$'s ranking keeping all else equal to $\succsim_{w_l}$. If $(m_1,w_l)\notin f(\succsim'_{w_l},\succsim_{-w_l})$, then Observation 1 implies that $(w_1,m_l)\in f(\succsim'_{w_l},\succsim_{-w_l})$. 

\medskip

By Observations 1 and 2 and group strategy-proofness we may assume that $(m_1,w_l)(w_1,m_j)\in f(\succsim)$  for some $j\neq l$, $\ptop(\succsim_{m_j})= w_1$, $\pbottom(\succsim_{m_i})=w_1$ for all $i\neq j,1$ and  $\pbottom(\succsim_{w_i})=m_1$ for all $i\neq 1$

\medskip
 Now lift $m_1$ in $w_j$'s ranking so that  $\ptop(\succsim'_{w_j})=m_1$ keeping all else equal to $\succsim_{w_j}$.
 By group strategy-proofness we either have $f(\succsim)=f(\succsim'_{w_j},\succsim_{-w_j})$ or $(m_1,w_j)\in f(\succsim'_{w_j},\succsim_{-w_j})$. 
 Now define $\succsim''_{m_j,w_j}$ as two gender-neutral preferences $\succsim''_{m_j}=\sigma(\succsim''_{w_j})$ so that an agent in $\{m_j,w_j\}$ who is matched with a commoner keeps their preference. By group strategy-proofness $f(\succsim)=f(\succsim'_{w_j},\succsim_{-w_j})=f(\succsim''_{w_j,m_j},\succsim_{-m_j,w_j})$. A contradiction then arises since $(\succsim''_{w_j,m_j},\succsim_{-m_j,w_j})$ is covered by Lemma \ref{lemma: Kew Gardens}. And we can conclude that  $(m_1,w_l),(w_1,m_l)\in f(\succsim)$.

\medskip

To conclude the proof, note that the group strategy-proofness of $f$ together with  $(m_1,w_l),(w_1,m_l)\in f(\succsim)$ imply that the royals keep their most preferred partners if they drop each other to any place in their rankings. 

\end{proof}

\begin{lemma}\label{lemma: royals must get diff commoners they want}
If $\ptop(\succsim_{m_1})=w_l$, $\ptop(\succsim_{w_1})=m_k$, and   $k\neq 1\neq l $, then $(m_1,w_l),(w_1,m_k)\in f(\succsim)$.
\end{lemma}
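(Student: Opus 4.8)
The plan is to reduce the claim to Lemma~\ref{lemma: Kew Gardens} by massaging the profile. Concretely, I would argue by contradiction: assuming $(m_1,w_l)\notin f(\succsim)$ or $(w_1,m_k)\notin f(\succsim)$, I would transform $\succsim$ through a sequence of unilateral preference changes, each of which leaves $f(\succsim)$ unchanged, into a profile covered by Lemma~\ref{lemma: Kew Gardens}; since that lemma forces the royals to receive the distinct commoners $w_l$ and $m_k$ (here $k\neq 1\neq l$), this is the contradiction. The only tools needed for the ``changes that leave $f$ unchanged'' are the two consequences of group strategy-proofness that are used tacitly throughout the preceding proofs, both following from strategy-proofness together with non-bossiness: (i) dropping an agent to the bottom of a commoner's ranking, or lifting it to the top, leaves the entire matching unchanged whenever that commoner is respectively not matched to, or matched to, that agent; and (ii) rearranging a commoner's ranking strictly below its top leaves the entire matching unchanged whenever that commoner is matched to its top.

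First I would clear away the easy reductions. If $k=l$ the statement is Lemma~\ref{lemma: the royals may choose the same partners}, so assume $k\neq l$. Exactly as in that lemma, it suffices to treat the case where the royals rank each other in second place, because once they are known to receive their top choices there, (ii) extends the conclusion to all rankings of each other. If $f(\succsim)$ matched the royals together, then by (i) I could drop both royals to the bottom of every commoner's ranking without changing $f(\succsim)$ (no commoner is matched to a royal), landing in a Lemma~\ref{lemma: Kew Gardens} profile with an empty symmetric subset, which contradicts that the royals are married. So $f(\succsim)(m_1)=w_i$ and $f(\succsim)(w_1)=m_j$ with $i\neq 1\neq j$ and at least one of $i\neq l$, $j\neq k$.

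The remaining work -- and the main obstacle -- is to turn $\succsim$ into a Lemma~\ref{lemma: Kew Gardens} profile. Using (i) I would first make every commoner not matched to a royal bottom-rank the royals; what is left is the pair containing $w_i$ (matched to $m_1$) and the pair containing $m_j$ (matched to $w_1$), which coincide when $i=j$. These commoners cannot have a royal dropped to the bottom of their ranking, since they are matched to a royal, so the only available route is to make their pairs \emph{symmetric}: lift $m_1$ to the top of $w_i$'s ranking (safe by (i), as $w_i$ is matched to $m_1$), then overwrite the rest of $w_i$'s ranking by $\sigma(\succsim_{m_i})$ using (ii), and symmetrically for $m_j$. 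When $i=j$ the pair-partner is already matched to the other royal and this closes at once; when $i\neq j$ one must still show that $m_i$ (resp.\ $w_j$) can be brought to top-rank $w_1$ (resp.\ $m_1$) without disturbing $f$, and this is where one needs -- as in Observation~1 of the proof of Lemma~\ref{lemma: the royals may choose the same partners} -- the fact that at most one royal can fail to get its top choice here, proved from Lemma~\ref{lemma: at least one gets best} and gender-neutrality together with a short induction (in the spirit of Lemma~\ref{lemma: Kew Gardens}) on the number of commoner pairs that are neither symmetric nor bottom-ranking. Bookkeeping these intermediate changes so that every one of them falls under (i) or (ii), and so that the terminal profile genuinely satisfies the hypotheses of Lemma~\ref{lemma: Kew Gardens}, is the delicate part; once it is reached, Lemma~\ref{lemma: Kew Gardens} gives $(m_1,w_l),(w_1,m_k)\in f(\succsim)$, and a final appeal to (ii) removes the assumption that the royals ranked each other second.
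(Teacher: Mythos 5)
Your overall strategy --- reduce to Lemma~\ref{lemma: Kew Gardens} by a chain of matching-preserving preference changes --- is the right one, and your easy steps are sound (ruling out $(m_1,w_1)\in f(\succsim)$ by dropping the royals in all commoners' rankings, the reduction to royals ranking each other second, the dispatch of $k=l$ via Lemma~\ref{lemma: the royals may choose the same partners}). But there is a genuine gap exactly at the point you call ``the delicate part.'' In your main case the royals are matched to commoners $w_i$ and $m_j$ from two different pairs ($i\neq j$), and reaching a Lemma~\ref{lemma: Kew Gardens} profile forces you to symmetrize both pairs; in particular $m_i$ must be brought to top-rank $w_1$. Since $m_i$ is \emph{not} matched to $w_1$, this lift is not licensed by your tool (i): strategy-proofness only confines $m_i$'s new match to $\{w_1, f(\succsim)(m_i)\}$, and group strategy-proofness does not exclude the first alternative, because neither $m_i$ nor $w_1$ need prefer their old partner to the other. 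The ``short induction on the number of commoner pairs that are neither symmetric nor bottom-ranking'' you invoke to close this is not supplied, and it is not a routine extension --- it would amount to a strengthening of Lemma~\ref{lemma: Kew Gardens} itself. Nor does ``at most one royal fails to get its top choice'' remove the obstruction: even granting, say, $i=l$, the pair-partner $m_l$ is still a commoner not matched to $w_1$ who must nevertheless be lifted to top-rank her.

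The paper's proof sidesteps this with a different first move: it replaces the royals' reports by the truncated lists $\succsim'_{m_1}: w_l,w_k$ and $\succsim'_{w_1}: m_k,m_l$. Strategy-proofness against the deviations covered by Lemma~\ref{lemma: the royals may choose the same partners}, together with non-bossiness, then leaves only three possible outcomes: the top choices, or both royals married into the $k$-pair, or both into the $l$-pair. In the two bad cases \emph{both} members of the offending commoner pair are matched to royals, so both can be lifted to top-rank their royal partners by your tool (i), that single pair can be symmetrized by (ii), all other commoners can drop the royals, and Lemma~\ref{lemma: Kew Gardens} yields the contradiction directly; group strategy-proofness then transports the conclusion back to the original royal preferences. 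If you want to keep your direct route you must first prove the missing lifting step; as written, the argument does not go through.
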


\begin{proof}
 Define $\succsim'_{w_1}: m_k, m_l$, $\succsim'_{m_1}: w_l,w_k$. Keeping all else equal. 
 By the Lemma \ref{lemma: the royals may choose the same partners} and group strategy-proofness the royals either get their top choices or they marry $m_k$ and $w_k$ or  $m_l$ and $w_l$. So suppose that $(m_1,w_k),(w_1,m_k)\in f(\succsim'_{m_1,w_1},\succsim_{-m_1,w_1})$. Define $\succsim'_{-m_1,w_1}$ for the commoners so that $\ptop(\succsim'_{m_k})=w_1$, $\succsim'_{w_k}=\sigma(\succsim'_{m_k})$, $\pbottom(\succsim'_{m_i})=w_1$ and $\pbottom(\succsim'_{w_i})=m_1$ for all $i\neq k$ keeping all else equal. By group strategy-proofness we have $f(\succsim'_{m_1,w_1},\succsim_{-m_1,w_1})=f(\succsim')$. A contradiction arises since $\succsim'_{-m_1,w_1}$ is covered by Lemma \ref{lemma: Kew Gardens} so that the royals must get their top choices in $f(\succsim')$. Mutatis mutandis the same arguments rule out the case that  $(m_1,w_l),(w_1,m_l)\in f(\succsim'_{m_1,w_1},\succsim_{-m_1,w_1})$ and $f(\succsim'_{m_1,w_1},\succsim_{-m_1,w_1})$ must match the royals with their top choices. Dropping the less liked partners $m_l$ and $w_k$ in their rankings, group strategy-proofness yields $f(\succsim'_{m_1,w_1},\succsim_{-m_1,w_1})=f(\succsim)$. 
\end{proof}

\begin{lemma}\label{lemma: commoner pref wins always if once}
For any fixed $\succsim_{-m_1,w_1}$ the royals are either
\begin{itemize}
    \item matched-by-default
    \item unmatched-by-default
    \item choosing sequentially dictatorially with $m_1$ going first
     \item choosing sequentially dictatorially with $w_1$ going first
\end{itemize}

\end{lemma}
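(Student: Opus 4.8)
## Proof Plan for Lemma \ref{lemma: commoner pref wins always if once}

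The plan is to fix an arbitrary profile $\succsim_{-m_1,w_1}$ for the commoners and analyze what happens as the royals vary their preferences over their possible partners. We already know from Lemma \ref{lemma: at least one gets best} that, once we fix a \emph{symmetric} commoner profile, the outcome for the royals follows one of the four listed regimes (matched-by-default, unmatched-by-default, $m_1$-first, or $w_1$-first). The content of this lemma is to upgrade that conclusion to \emph{all} commoner profiles, not just symmetric ones. First I would classify the four regimes cleanly in terms of the outcome function $(\ptop(\succsim_{m_1}),\ptop(\succsim_{w_1}))\mapsto f(\succsim)(m_1,w_1)$: in each regime, the royals' match depends only on the pair of their top choices (this follows from group strategy-proofness, which lets us drop everyone below the chosen partners without changing the outcome). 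So it suffices to pin down, for each ordered pair $(w_l, m_k)$ of top choices, what the two royals receive.

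The key structural input is the collection of Lemmas \ref{lemma: Kew Gardens}, \ref{lemma: the royals may choose the same partners}, and \ref{lemma: royals must get diff commoners they want}. Lemma \ref{lemma: royals must get diff commoners they want} already handles the case $k\neq 1 \neq l$ \emph{unconditionally} on the commoner profile: the royals get their top choices. Lemma \ref{lemma: the royals may choose the same partners} handles $k = l \neq 1$ unconditionally. So the only cases where the commoner profile can still matter are those involving a potential conflict: $\ptop(\succsim_{m_1}) = w_1$ with $\ptop(\succsim_{w_1}) = m_j$ for some $j \neq 1$, and symmetrically $\ptop(\succsim_{w_1}) = m_1$ with $\ptop(\succsim_{m_1}) = w_j$, and the fully symmetric case $k = l = 1$. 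The plan is then: for this fixed commoner profile, ask what $f$ does at the ``conflict'' profiles. By Lemma \ref{lemma: at least one gets best}, at a profile where $m_1$ top-ranks $w_1$ but $w_1$ top-ranks $m_j$, at least one royal gets their top choice; if $w_1$ always gets $m_j$ (for every such profile and every $j$) we are in a ``$w_1$-first'' regime; if $m_1$ always gets $w_1$ (forcing $w_1$ to match $m_1$ too, by feasibility of two-sided matching) we are in a ``default'' regime. The task is to show the choice is consistent across different values of $j$ and across the two symmetric families of conflict profiles, and that it lines up with the unconflicted cases to yield exactly one of the four named regimes.

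Concretely, I would argue: fix the commoner profile. Consider the conflict profiles with $\ptop(\succsim_{m_1}) = w_1$. Use group strategy-proofness exactly as in Lemma \ref{lemma: wanting commoners and royals}: if at one such profile the royals are matched together, then a monotonicity/Maskin-style argument shows they are matched together at \emph{all} conflict profiles where $m_1$ top-ranks $w_1$ (lowering $w_1$'s other options can only help); conversely if at one such profile $w_1$ gets her top commoner $m_j$, then she gets her top commoner at all of them. This gives a dichotomy on the $m_1$-top-ranks-$w_1$ family and, symmetrically, on the $w_1$-top-ranks-$m_1$ family. Four combinations arise. If both families resolve ``in favor of the royal couple being matched,'' we have matched-by-default; if both resolve ``in favor of the commoner,'' we have unmatched-by-default; the two mixed combinations give the two sequential-dictatorship regimes (the side whose conflict always resolves in its favor is the one ``going first''). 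The main obstacle I anticipate is verifying that these dichotomies are genuinely independent of which commoner $m_j$ (resp. $w_j$) the non-royal side points at — i.e., that one cannot have $w_1$ getting $m_2$ when she asks for $m_2$ but being denied $m_3$ when she asks for $m_3$. The fix is the same swapping trick used in Lemma \ref{lemma: at least one gets best}, Case 2: go through an intermediate profile where $w_1$ ranks $m_2, m_3$ at the top, apply the dichotomy there, and transfer back by strategy-proofness. Once that independence is in hand, the four regimes fall out by combining with Lemmas \ref{lemma: the royals may choose the same partners} and \ref{lemma: royals must get diff commoners they want}, which already fix the outcome in all non-conflict cases, and checking the resulting outcome map matches the description of exactly one of the four regimes.
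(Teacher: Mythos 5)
Your plan follows essentially the same route as the paper's proof: reduce to the ``conflict'' profiles where exactly one royal top-ranks the other (the non-conflict cases being settled unconditionally by Lemmas \ref{lemma: the royals may choose the same partners} and \ref{lemma: royals must get diff commoners they want}), establish a dichotomy on each conflict family, and prove that the dichotomy is independent of which commoner the non-conflicted royal targets by passing through an intermediate profile in which that royal ranks two candidate commoners first and second. That swapping argument is exactly what the paper does, and the final assembly of the four regimes from the two dichotomies matches the paper's conclusion.

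One step needs repair. You invoke Lemma \ref{lemma: at least one gets best} to conclude that at a conflict profile at least one royal gets their top choice, but that lemma (like Lemmas \ref{lemma: wanting commoners and royals} and \ref{lemma: when possible both get best}) is stated and proved only for a \emph{symmetric} commoner profile $\succsim^{sym}_{-m_1,w_1}$, whereas the entire point of the present lemma is to handle arbitrary commoner profiles. The fact you need is still true, but it must be rederived from the unconditional lemmas: if $\succsim_{m_1}: w_1, w_l$ and $\ptop(\succsim_{w_1})=m_k$ with $k\neq 1\neq l$, then $m_1$ deviating to top-rank $w_l$ would, by Lemma \ref{lemma: royals must get diff commoners they want}, give him $w_l$ and give $w_1$ the partner $m_k$; strategy-proofness for $m_1$ alone then forces $f(\succsim)(m_1)\in\{w_1,w_l\}$, and when $(m_1,w_1)\notin f(\succsim)$ group strategy-proofness for the coalition $\{m_1,w_1\}$ forces $f(\succsim)(m_1)=w_l$ and $f(\succsim)(w_1)=m_k$. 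This is how the paper obtains the dichotomy at an arbitrary commoner profile; with that substitution (the needed lemmas are already on your list of key inputs) your argument goes through.
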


\begin{proof}

Since all four regimens find the same choices when the royals either top rank each other or when they both top rank commoners, we focus on the case where exactly one royal wants to marry the other. In particular we assume $\ptop(\succsim_{w_1})=m_k$ and $\succsim_{m_1}: w_1, w_l$ for $k\neq 1 \neq l$ and note that the arguments apply by gender-neutrality to the case where $w_1$ wants to marry $m_1$. Suppose that $(m_1,w_1)\notin f(\succsim)$. 
By Lemmas \ref{lemma: the royals may choose the same partners} and \ref{lemma: royals must get diff commoners they want} and group strategy-proofness 
$f(\succsim)(m_1)\succsim_{m_1}w_l$. Since $m_1$ is by assumption not matched with the only partner he prefers to $w_l$ we have $f(\succsim)(m_1)=w_l$ and $f(\succsim)(w_1)=m_k$

Fix any $\succsim'_{m_1,w_1}$ with $\succsim'_{w_1}=m_{k'},m_{k}$ and $\succsim'_{m_1}: w_1, w_{l'}$ for $l'\neq 1 \neq k'$. 
By group strategy-proofness we may in the preceding paragraph assume w.l.o.g that $\succsim_{w_1}: m_{k},m_{k'}$ and $\succsim_{m_1}:w_1,w_l,w_{l'}$.
By  strategy-proofness  $f(\succsim'_{m_1},\succsim_{-m_1})(m_1)\in \{w_{l'},w_l\}$. By the above arguments, $(m_1,w_{l'}),(w_1,m_{k})\in f(\succsim'_{m_1},\succsim_{-m_1})$. Now swap $m_k$ and $m_{k'}$ in $w_1$'s ranking. By strategy-proofness $f(\succsim'_{m_1,w_1},\succsim_{-m_1,w_1})(w_1)\in \{m_k,m_k'\}$. By Lemmas \ref{lemma: the royals may choose the same partners} and \ref{lemma: royals must get diff commoners they want} and group strategy-proofness $(m_1,w_{l'}),(w_1,m_{k'})\in f(\succsim'_{m_1,w_1},\succsim_{-m_1,w_1})$. Dropping $m_k$ in woman $w_1$'s ranking we see that woman $w_1$ always gets her will (given $\succsim_{-m_1,w_1}$) to marry a commoner if she once gets her will. By symmetry, she would always get her will to marry $m_1$ if she once gets her will. So the only possible regimes at and $\succsim_{-m_1,w_1}$ are the two serial dictatorships as well as matched and unmatched-by-default. 

\end{proof}

\begin{lemma}
The royals are either matched-by-default or unmatched-by-default.
\end{lemma}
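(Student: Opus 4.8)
The plan is to upgrade Lemma \ref{lemma: commoner pref wins always if once}, which confines the royal couple's behaviour at every fixed commoner profile $\succsim_{-m_1,w_1}$ to one of the four regimes matched-by-default ($D$), unmatched-by-default ($U$), ``$m_1$ dictates first,'' or ``$w_1$ dictates first,'' into the statement that the regime is in fact the \emph{same} member of $\{D,U\}$ at every commoner profile. The bookkeeping device is to encode the regime at $\succsim_{-m_1,w_1}$ by the pair of binary decisions ``do the royals marry?'' at the two conflict configurations: the one in which $m_1$ top-ranks $w_1$ while $w_1$ top-ranks a commoner, and its mirror image. Under this encoding $D$ is (marry, marry), $U$ is (don't, don't), and the two sequential dictatorships are the two mixed pairs; by Lemma \ref{lemma: wanting commoners and royals} the two decisions coincide whenever $\succsim_{-m_1,w_1}$ is symmetric, and by gender-neutrality the regime at $\sigma(\succsim_{-m_1,w_1})$ is obtained from that at $\succsim_{-m_1,w_1}$ by swapping the two decisions. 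Hence, if $f$ were neither uniformly $D$ nor uniformly $U$, there would be commoner profiles $\succsim^+_{-m_1,w_1}$ and $\succsim^-_{-m_1,w_1}$ at which the decision at one \emph{fixed} configuration, say the one where $m_1$ top-ranks $w_1$, is ``marry'' and ``don't,'' respectively --- taking a profile with regime $D$ and one with regime $U$ directly, or, if some profile is a sequential dictatorship, that profile together with its reflection.

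\textbf{The path argument.} The core is then a path argument. First I would record the standard fact that group strategy-proofness forces non-bossiness --- if $\succsim,\succsim'$ differ only in one agent's report and that agent receives the same partner under both, then $f(\succsim)=f(\succsim')$ (apply the two-agent coalition deviation in both directions) --- together with the elementary fact that, under a strategy-proof mechanism, an agent's partner can change, under an adjacent transposition in that agent's report, only between the two transposed partners. These two facts let me show that the decision at the fixed configuration is unchanged when a royal is pushed one step down in any commoner's ranking: the transposed alternative is a commoner of the opposite side, while the disturbed commoner's partner is either unchanged --- so non-bossiness applies --- or would flip in a way strategy-proofness forbids, once the witness royal preferences are chosen so that neither royal's partner-when-the-royals-do-not-marry is the disturbed commoner. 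Iterating, I may assume $\succsim^+_{-m_1,w_1}$ and $\succsim^-_{-m_1,w_1}$ both have every commoner bottom-ranking both royals, and connect them by a sequence of such profiles in which consecutive profiles differ in a single adjacent transposition of two commoners in one commoner's ranking. The decision at the fixed configuration is ``marry'' at one end and ``don't'' at the other, so it flips across some edge $\succsim_{-m_1,w_1}\to\succsim'_{-m_1,w_1}$, at which a single commoner $x$ transposes two commoners $a,b$.

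\textbf{The contradiction and the obstacle.} At that edge I would take witness royal preferences for the configuration --- $m_1$ top-ranks $w_1$ and second-ranks a woman-commoner $w_l$, $w_1$ top-ranks a man-commoner $m_k$, with $m_k,w_l\notin\{x,a,b\}$ --- and argue that the regime change forces the match of some fixed agent to jump between a royal and a commoner across an edge that only transposes two commoners, contradicting the elementary fact above (using non-bossiness and, where needed, the gender-neutral reflection of the edge). I expect this last step to be the main obstacle: ruling out precisely the edges whose flipping transposition is internal to one side of the commoners, where the disturbing commoner's own partner may legitimately move between the two transposed partners, so the contradiction must be routed through another agent and leans on gender-neutrality and, in the residual cases, on the inductive description of the continuation submechanisms as smaller royalty mechanisms. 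Once every such edge is excluded, the decision at the fixed configuration --- hence, with its mirror, the whole regime --- is constant, equal to $D$ or to $U$, so $\varphi_2(\emptyset)$ is well-defined and the royals are matched-by-default or unmatched-by-default.
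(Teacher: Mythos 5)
Your high-level strategy matches the paper's: first show that the regime governing the royal couple is the \emph{same} at every commoner profile, then invoke gender-neutrality at symmetric commoner profiles (where the regime must be invariant under swapping the two conflict configurations) to kill the two sequential-dictatorship regimes and leave only $D$ and $U$. That second step you have right. But the first step --- the actual contradiction when the regime changes --- is not established in your proposal; you explicitly flag it as ``the main obstacle'' and gesture at routing the contradiction ``through another agent'' using gender-neutrality and the inductive structure of continuation submechanisms. That is precisely the hard part of the lemma, so as written the proof has a genuine gap at its core.

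The missing idea is a choice of witness that makes the deviating commoner's \emph{own} match sensitive to the regime. Since the regime at a fixed $\succsim_{-m_1,w_1}$ is uniform over all royal preferences, the paper is free to pick the royals' reports \emph{after} locating the single commoner whose change of report flips the regime: it assumes w.l.o.g.\ that this commoner is $m_k=m_2$, i.e.\ the very man whom $w_1$ top-ranks, and then arranges $\succsim'_{m_1}:w_1, f(\succsim)(m_2)$ and $\succsim'_{w_1}:m_3$. Under one of $m_2$'s two reports the royals marry and $m_2$ gets $f(\succsim)(m_2)$ (his second choice under the other report $\succsim''_{m_2}$); under the other report the royals marry commoners, $m_1$ takes $f(\succsim)(m_2)$ and $w_1$ takes $m_3$, so $m_2$ gets neither $w_1$ nor $f(\succsim)(m_2)$. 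That is a direct strategy-proofness violation for $m_2$ alone --- no path through adjacent transpositions, no appeal to non-bossiness across one-sided swaps, and no recursion into smaller royalty mechanisms is needed. Your insistence on witnesses with $m_k,w_l\notin\{x,a,b\}$ points in exactly the opposite direction: it decouples the deviator's match from the regime and is what forces you into the unresolved ``routed through another agent'' cases. Unless you supply that step, the lemma is not proved.
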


\begin{proof}
Fix an arbitrary $\succsim_{-m_1,w_1}$. 
By Lemma \ref{lemma: commoner pref wins always if once} the royals are either matched or unmatched-by-default or engaged in a serial dictatorship with one of the royals choosing first, the other second. 
Suppose that different profiles for the commoners were governed by different regimes. Fix two such profiles where the regimes change with the preference of one agent. Say w.l.o.g that this agent is $m_2$. So the picking regime for $w_1$ and $m_1$ is governed by different rules at $\succsim_{-m_1,w_1}$ and at $(\succsim'_{w_2},\succsim_{-m_1,w_1,m_2})$. 

The outcomes of all regimes are identical when the two royals either top rank each other or when they both rank commoners. So we fix a profile $\succsim_{m_1,w_1}$ where exactly one royal top ranks a commoner  and say that at $\succsim_{-m_1,w_1}$ the royal who wants to marry the other royal wins, while
 at $(\succsim'_{w_2},\succsim_{-m_1,w_1,m_2})$ the royals marry commoners. For concreteness assume that $\succsim_{m_1}: w_1, w_l$ and $\succsim_{w_1}: m_k$, so that the regime change from  $\succsim_{-m_1,w_1}$ to  $(\succsim'_{w_2},\succsim_{-m_1,w_1,m_2})$ is one from matched-by-default or serial dictatorship with $m_1$ going first to unmatched-by-default or serial dictatorship with $w_1$ going first. Mutatis mutandis the same arguments apply when $\succsim_{m_1,w_1}$ is such that $w_1$ wants to marry $m_1$ who in turn wants to marry a commoner. In sum  we have that $(m_1,w_1)\in f(\succsim)$ and $(w_1,m_k),(m_1,w_l)\in f(\succsim'_{m_2},\succsim_{-m_2})$.
 
 Since the regime stays fixed as long as the commoners' preferences stay fixed we may w.l.o.g assume that $m_k=m_2$ so that $f(\succsim'_{w_2},\succsim_{-m_1,w_1,m_2})(m_2)=w_1$. 
 By group strategy-proofness $f(\succsim)=f(\succsim^*_{m_2},\succsim_{-m_2})$ for any $\succsim^*_{m_2}$ that top ranks $ f(\succsim)(m_2)$ and 
 $f(\succsim''_{m_2},\succsim_{-m_1,w_1,m_2})=f(\succsim'_{w_2},\succsim_{-m_1,w_1,m_2})$ for any $\succsim''_{m_2}: w_1,  f(\succsim)(m_2)$. Now change the royal couples preferences to $\succsim'_{m_1}:w_1, f(\succsim)(m_2)$ and $\succsim'_{w_1}: m_3$. Give that the regimes stay fixed at $(\succsim^*_{m_2},\succsim_{-m_1,w_1,m_2})$ and $(\succsim''_{m_2},\succsim_{-m_1,w_1,m_2})$ we get
 \begin{eqnarray*}
 f(\succsim'_{m_1,w_1},\succsim^*_{m_2},\succsim_{-m_1,w_1,m_2})=f(\succsim)\\
 (m_1,f(\succsim)(m_2)), (w_1, m_3) \in f(\succsim'_{m_1,w_1},\succsim''_{m_2},\succsim_{-m_1,w_1,m_2}).
 \end{eqnarray*}
 
 A contradiction to strategy-proofness results since $f(\succsim'_{m_1,w_1},\succsim^*_{m_2},\succsim_{-m_1,w_1,m_2})$ matches $m_2$ with $f(\succsim)(m_2)$ which is according to $\succsim''_{m_2}$, $m_2$'s second favorite wife. Conversely, since 
$$ (m_1,f(\succsim)(m_2)), (w_1, m_3) \in f(\succsim'_{m_1,w_1},\succsim''_{m_2},\succsim_{-m_1,w_1,m_2})$$ the latter neither matches $m_2$ with his $\succsim''_{m_2}$-favorite wife $w_1$ nor with his second favorite $f(\succsim)(w_2)$. 

We in sum get that the regime with which the royals $m_1,w_1$ choose partners stays fixed for all $\succsim_{-m_1,w_1}$. By gender-neutrality the royals must use a symmetric regime when $\succsim_{-m_1,w_1}$ is gender-neutral. So neither of the two serial dictatorships can govern the choices by the royals and we must have that the regime is either matched-by-default or unmatched-by-default. 
\end{proof}

We have shown that there is a single royal couple who are either matched-by default, or unmatched by default. The notion of gender-neutrality implies that, given the matches of the royal couple, the remaining agents are engaged in a continuation mechanism which is required to be gender-neutral with respect to some symmetry of order two. Applying the same arguments to these submechanisms gives a second royal couple. Continuing in this way we get a sequence of royal couples until just four agents remain at which point our arguments break down, and any one of the four-agent mechanisms described in section \ref{appendix: four agents} can be used.

\end{document}